\begin{document}
	\title{\bf Nonlinear Permuted Granger Causality}
	\author{Noah D. Gade\footnote{PhD Candidate, Department of Statistics, University of Virginia; Email: ndg5e@virginia.edu}\hspace{.2cm}\\
		and\\
		Jordan Rodu\footnote{Assistant Professor, Department of Statistics, University of Virginia; Email: jsr6q@virginia.edu}}
	\date{\today}
	\maketitle
	\bigskip
	\begin{abstract}
		Granger causal inference is a contentious but widespread method used in fields ranging from economics to neuroscience. The original definition addresses the notion of causality in time series by establishing functional dependence conditional on a specified model. Adaptation of Granger causality to nonlinear data remains challenging, and many methods apply in-sample tests that do not incorporate out-of-sample predictability, leading to concerns of model overfitting. To allow for out-of-sample comparison, a measure of functional connectivity is explicitly defined using permutations of the covariate set. Artificial neural networks serve as featurizers of the data to approximate any arbitrary, nonlinear relationship, and consistent estimation of the variance for each permutation is shown under certain conditions on the featurization process and the model residuals. Performance of the permutation method is compared to penalized variable selection, naive replacement, and omission techniques via simulation, and it is applied to neuronal responses of acoustic stimuli in the auditory cortex of anesthetized rats. Targeted use of the Granger causal framework, when prior knowledge of the causal mechanisms in a dataset are limited, can help to reveal potential predictive relationships between sets of variables that warrant further study.
	\end{abstract}
	
	\noindent%
	{\it Keywords:} Granger causality, permutation testing, out-of-sample predictability, multivariate time series, artificial neural networks
	\vfill
	\newpage
	\spacingset{1.3}
	
	\section{Introduction}
	Granger causal inference investigates the ability of a time series $\mathbf{x}_t\in\mathbb{R}^{p}$, $t=1,\ldots,T$, to predict future values of a response $\mathbf{y}_t\in\mathbb{R}^{d}$ \citep{wiener56, granger69}. The effect is traditionally measured through the variance of residuals in restricted and unrestricted models, like shown in Definition \ref{grangerdefinition1}, where $\mathcal{P}$ represents the optimal prediction function, $\mathcal{I}_{<t}$ is all information prior to time $t$, and $\mathbf{X}_{<t}$ is a matrix of compiled values of $\mathbf{x}_t$ prior to time $t$.
	\begin{definition}
		Time series $\mathbf{x}_t\in\mathbb{R}^{p}$ is Granger causal for $\mathbf{y}_t\in\mathbb{R}^{d}$ if 
		\begin{align}
			\text{Var}\left[\mathbf{y}_t - \mathcal{P}\left(\mathbf{y}_t|\mathcal{I}_{<t}\right)\right] < \text{Var}\left[\mathbf{y}_t - \mathcal{P}\left(\mathbf{y}_t|\mathcal{I}_{<t}\backslash \mathbf{X}_{<t}\right)\right].
		\end{align}
		\label{grangerdefinition1}
	\end{definition}
	\par
	Modern methods for adapting Granger causality to nonlinear functional relationships leverage deep learning and representation learning for capturing dependence between variables. These tools, when brazenly paired with other machine learning techniques, are not necessarily reliable or precise. Penalized variable selection as a screening method forces the isolation of information; this does not consider the collective covariate set of the system and tends to up-weight contributions of the chosen nonzero covariates. Erroneous conclusions can result from deep learning when variable specific inference is muddled. This work explicitly redefines Granger causality in terms of a permuted framework with out-of-sample testing (NPGC) that retains the flexibility of representation learning and has specific advantages when seeking nonlinear functional connections. 
	
	\subsection{Granger Causality}\label{se:GC}
	The form of Granger causality presented in Definition \ref{grangerdefinition1} is inherently conditional on additional information included in the modeling process. The optimal prediction $\mathcal{P}\left(\mathbf{y}_t|\mathcal{I}_{<t}\right)$ is unattainable in practice, and the notion of Granger causality is a conditional model on some included explanatory covariate set $\mathbf{z}_t\in\mathbb{R}^{q}$ and the history of the response prior to time $t$, $\mathbf{Y}_{<t}$, like that in Definition \ref{grangerdefinition2}. 
	\begin{definition}
		Time series $\mathbf{x}_t\in\mathbb{R}^{p}$ is conditionally Granger causal for $\mathbf{y}_t\in\mathbb{R}^{d}$ given $\mathbf{z}_t\in\mathbb{R}^{q}$ and the relevant history of the response if 
		\begin{align}
			\text{Var}\left[\mathbf{y}_t - \mathcal{P}\left(\mathbf{y}_t|\mathbf{Y}_{<t},\mathbf{Z}_{<t},\mathbf{X}_{<t}\right)\right] < \text{Var}\left[\mathbf{y}_t - \mathcal{P}\left(\mathbf{y}_t|\mathbf{Y}_{<t},\mathbf{Z}_{<t}\right)\right].
		\end{align}
		\label{grangerdefinition2}
	\end{definition}
	\par
	Model selection is implicit in the test for presence of Granger causality, and the inferential conclusion is coupled with a written form \citep{friston13}. Inclusion of additional variables strengthens the condition for establishing Granger causality; rejection of the null implies the covariate set $\mathbf{x}_t$ is found to provide unique and useful information for prediction of $\mathbf{y}_t$ beyond that contained in both $\mathbf{z}_t$ and the lagged response \citep{granger80}. The basis for Granger causality requires fulfillment of several conditions including a sufficient length of continuous-valued stationary data, exact and complete specification of the model, error-free observation of the variables, and a sampling frequency on a regular discrete grid that contains the known lag relationship \citep{granger69, granger80, granger88}. In the form of Definition \ref{grangerdefinition2}, the second condition can be relaxed provided that inference is accordingly narrowed to the conditional statement.
	\par
	The framework can also be defined in terms of non-causality as a statement of conditional independence, where inclusion of additional variables $\mathbf{z}_t$ is a simple extension \citep{granger80, florens82}. 
	\begin{definition}
		Time series $\mathbf{x}_t\in\mathbb{R}^{p}$ does not Granger cause $\mathbf{y}_t\in\mathbb{R}^{d}$ if and only if 
		\begin{align}
			\mathbf{Y}_{<t+1}\perp \mathbf{X}_{<t}\hskip0.1in\text{given}\hskip0.1in \mathbf{Y}_{<t}.
		\end{align}
		\label{grangerdefinition3}
	\end{definition}
	\par
	This statement is equivalent, but perhaps more powerful because it is defined in terms of the distributions of the variables, allowing for extension to several other forms of statistical tests. It may be prone to misuse if interpreted to place the burden of proof on establishing independence. In a definition from Section 4 of \citet{shojaie22}, column $j$ in time series $\mathbf{x}_t$ is Granger non-causal for time series $\mathbf{y}_t$ if and only if $\forall t$, 
	\begin{align}
		\mathcal{P}\left(\mathbf{y}_{t}|\mathbf{x}_{<t1},\ldots,\mathbf{x}_{<tj},\ldots,\mathbf{x}_{<tp}\right)= \mathcal{P}\left(\mathbf{y}_{t}|\mathbf{x}_{<t1},\ldots,\mathbf{x}_{<t(j-1)},\mathbf{x}_{<t(j+1)},\ldots,\mathbf{x}_{<tp}\right)\label{noncausal}
	\end{align}
	that implies the equality of these predictions holds for $\emph{all}$ time points $t$. If misinterpreted to mean finding correlation at \emph{one} time point in series $\mathbf{y}_t$ is enough to claim functional dependence of two time series, the presence of a causal connection effectively becomes the null hypothesis and Granger causality is reduced to an exceptionally weak statement.
	\par
	Even when the question is framed with the onus on demonstrating a functional relationship, all Granger causal methods overreach their scope of reliable application when inference is performed on the individual variables included in the covariate set rather than on their collective behavior. Applications of the framework to interpreting individual model coefficients introduce a hidden multiplicity problem of repeated testing on subsets of $\mathbf{x}_t$, and the conclusion requires amendment to conditional non-causality of $\mathbf{y}_t$ given an exhaustive list of all other components $\mathbf{x}_{tj}$, $j=1,\ldots,p$ in the model after adequate Type 1 error control. Methods that select causal covariate pairs through the use of penalized optimization problems don't always allow for easy extension to the multiple testing problem, and may require repetitive sub-sampling approaches such as stability selection \citep{meinshausen10}. A thorough definition of Granger causality provides a clear representation of the collective conclusion to be drawn on the covariate set $\mathbf{x}_t$, clarifies the conditional nature of the result on the specified model and included variables, and stresses the philosophical ordering from the null hypothesis implying no causal structure to the alternative that demands evidence of the contrary, all while retaining any general functional form of $\mathcal{P}\left(\mathbf{y}_t|\mathbf{Y}_{<t}, \mathbf{Z}_{<t},\mathbf{X}_{<t}\right)$.
	\par
	\citet{holland86} relates the definition to that of \citet{suppes70} and criticizes its fragile reliance on the specified pre-exposure variables that may completely change an inferential result. \citet{maziarz15} writes that ``Granger causality does not meet the requirements of an investigator who uses this method due to epistemic reasons'' and the methodology should be used ``only if the theoretical background is insufficient," noting the common cause fallacy, indirect causality, and problems related to sampling frequency. In this tone, the predictive nature of these definitions can relate to a causal structure between two variable groups, but alone is not enough to establish \emph{effective} connectivity, distinguishing a direct influence of one population on another \citep{bressler11, friston94}. Even in the presence of the optimal set $\mathcal{I}_{<t}$, association and precedence are not enough to distinguish true causality if slight redundancies are included or an effect does not remain constant in direction through time \citep{maziarz15}. Granger causality exists in the realm of \emph{functional} connectivity that identifies correlation at one or more time lags \citep{friston94}. Appropriate use of Granger causality is contentious, but the method has been applied to a variety of fields like economics, environmental sciences, and neuroscience \citep{bernanke90, cox15, holland86, reid19, seth15, sims72}. Cautious and targeted use of the Granger causal framework can elucidate predictive relationships between variables that warrant further study when prior knowledge of potential causal relationships is limited.
	\par
	In the linear realm, $\mathcal{P}(\mathbf{y}_t|\mathbf{Y}_{<t},\mathbf{Z}_{<t},\mathbf{X}_{<t})$ is often sought from a VAR model and evaluated with in-sample testing \citep{granger69, granger80}. \citet{geweke82} proposed a spectral decomposition form of linear Granger causality for application to stationary Gaussian processes. Inference is performed with the estimated covariance of the restricted model (with $\mathbf{X}_{<t}$ excluded) and that of the unrestricted model, where under the null it is assumed the two are equal. Increasing dimension of the response variable often requires implementation of an approximate test or a switch to permutation-like testing \citep{anderson01, barnett11}. In-sample testing differs from the true notion of predictive ability, and out-of-sample methods align closer to the essence of Granger causality \citep{chao01, inoue05, peters16}. This distinction is especially important when using deep learning techniques to model complex, nonlinear dynamics. 
	
	\subsection{Nonlinear Adaptations}\label{se:NLGC}
	Nonparametric methods provide the basis for many nonlinear adaptations of Granger causality; specification of the exact functional form can be challenging. Parametric attempts, like the ordinary differential equations approaches of \citet{henderson14} and \citet{wu14}, allow for flexible definition of a series of functions to capture dependence, but are limited to modeling additive dynamics when the true mechanism of interaction may be more complicated. Some model-free information theoretic methods detect more elaborate forms of nonlinear dependence with minimal assumptions, but suffer from highly variable estimates and challenges of application to multivariate systems \citep{amblard11, runge12, vicente11}. Kernel Granger causality examines the linear form in a transformed feature space, but model comparison can become difficult \citep{marinazzo08, marinazzo11}. Artificial neural networks (ANNs) allow for general forms of nonlinear dependence in a similar feature space.
	\par
	Fully connected feed-forward networks (with $g$ a sigmoid function), like the construction shown in Equations \ref{eq:deephidden1} through \ref{eq:deepoutput}, fall under the umbrella of the universal approximation theorems of \citet{cybenko89} and \citet{hornik91}. These artificial networks act as featurizers of the data to a high-dimensional space, $\Psi: \mathbf{X}\in\mathbb{R}^{T\times p} \rightarrow \mathbf{H}\in\mathbb{R}^{T\times N}$, $N\gg p$; the architecture allows for arbitrarily close approximation $\hat{\mathbf{y}}_t$ of any function output $\mathbf{y}_t=f(\mathbf{x}_t)$, provided $\mathbf{h}_t\in\mathbb{R}^N$ is allowed to contain a sufficient width of hidden states $N$. 
	\begin{align}
		\mathbf{h}^1_t &= g\left(\mathbf{W}^0\mathbf{x}_t + \mathbf{b}^0\right)\label{eq:deephidden1}\\
		\mathbf{h}^{\ell+1}_t &= g\left(\mathbf{W}^\ell\mathbf{h}^\ell_t + \mathbf{b}^\ell\right)\label{eq:deephiddenl}\\
		\hat{\mathbf{y}}_t &= \mathbf{W}^L\mathbf{h}^L_t + \mathbf{b}^L\label{eq:deepoutput}
	\end{align}
	In Equations \ref{eq:deephidden1} and \ref{eq:deephiddenl}, $g$ is an element-wise activation function, and $\ell=1,\ldots,L$ is the number of layers, or the ``depth'' of the network, where $\mathbf{W}^0\in\mathbb{R}^{N\times p}$, $\mathbf{W}^\ell\in\mathbb{R}^{N\times N}$, $\mathbf{W}^L\in\mathbb{R}^{d\times N}$,  $\mathbf{b}^0,\mathbf{b}^\ell\in\mathbb{R}^{N}$, and $\mathbf{b}^L\in\mathbb{R}^d$ are parameter matrices. The artificial networks shown above, sometimes called multilayer perceptrons (MLPs), and recurrent neural networks (RNNs) like long-short term memory networks (LSTMs) \citep{hochreiter97, graves12} and echo state networks (ESNs) \citep{jaeger07} have all been used to investigate functional connectivity in time series data \citep{shojaie22}. RNN-type network construction adds a recurrence term to the hidden state form in Equations \ref{eq:deephidden1} and \ref{eq:deephiddenl}, $\mathbf{h}^1_t = g\left(\mathbf{W}^{1h}\mathbf{h}^1_{t-1} +\mathbf{W}^0\mathbf{x}_t + \mathbf{b}^0\right)$, and $\mathbf{h}^{\ell+1}_t = g\left(\mathbf{W}^{\ell h}\mathbf{h}^{\ell+1}_{t-1} +\mathbf{W}^\ell\mathbf{h}^{\ell}_{t} + \mathbf{b}^\ell\right)$. 
	\par
	\citet{tank22} extend Granger causality to the nonlinear space using component-wise MLPs (cMLP), which model individual variables in the response $\mathbf{y}_{ti}$, $i=1,\ldots,d$, with separate artificial networks. Parameters are sought via a penalized optimization approach and proximal gradient descent, and no Granger causal connection is inferred for an individual covariate if the corresponding row in a component input parameter matrix ($\mathbf{W}^0_i$ in Equation \ref{eq:deephidden1}) is zero \citep{tank22}. The penalized approach encourages sparse solutions that block the inclusion of information from less predictive components in the hidden states $\mathbf{h}_t$, but selecting the regularization parameter is not an easy task and values may produce vastly different results. \citet{tank22} further introduce a component-wise LSTM (cLSTM) model that harnesses the recurrent structure to circumvent selection of the optimal lag for inclusion in the covariate set $\mathbf{X}_{<t}$. This formulation, while making model specification as it relates to the time lag components easier, has the consequence of mixing inferential results across several lags. 
	\par
	Other methods for capturing arbitrary, nonlinear functional relationships include \citet{khanna19} that builds on the framework of cMLP with statistical recurrent units, \citet{biswas22} that discusses the application of the component network structure to frequency-specific relationships and non-stationary data, and \citet{marcinkevics21} that introduces generalized vector autoregressive (GVAR) methodology aimed at interpretability of potential functional relationships. The Jacobian Granger causality method of \citet{suryadi23} uses the Jacobian matrix, and \citet{nauta19} (TCDF) uses convolutional neural networks (CNNs) and attention scores to serve as measures of variable importance. Jointly estimating a large number of parameters is computationally expensive, and \citet{duggento21} instead use randomly initialized ESNs. Because computation is performed using linear techniques rather than a gradient descent algorithm, complexity decreases; however, inference can only be performed on the output coefficients if information mixing does not occur in the hidden states $\mathbf{h}_t$. They formulate each $\mathbf{W}^h$ in the RNN structure as a block diagonal matrix, which limits the scope of application to a specific subset of additive nonlinear interactions \citep{duggento21}.
	\par
	Many of these methods ignore the collective inference principle of Granger causality and instead take the eager approach of performing individual covariate inference, sometimes with disregard for the multiplicity problem. Evaluation of these predictive relationships is often performed using in-sample tests. With their universal approximator ability, artificial neural networks of sufficient width or depth can approximate \emph{any} functional relationship between two covariate sets, even if it is data-specific and the model is overfit, making them prone to link variables that do not have a predictive relationship as the dimension of the network increases. Sparsity inducing penalties may marginally improve reliability of in-sample tests, but out-of-sample testing helps control the overfitting problem to identify only \emph{useful} functional relationships. In this vein, \citet{horvath22} develop the Learned Kernel VAR (LeKVAR) method that proposes use of a kernel parameterized by an artificial neural network they argue is less prone to overfitting from a decoupling importance measure of the individual series and the selected lags, and the TCDF method employs a permutation-like testing procedure after variable selection \citep{nauta19}. 
	\par
	Rather than comparison of the inherently unequal restricted and unrestricted model errors, focus is shifted to out-of-sample predictability by implementing a permutation structure. There is precedence for the use of permutation-type procedures on general linear models, and their asymptotics are well studied in literature \citep{anderson99, anderson01, diciccio17, winkler14}. \citet{nauta19} employ this type of importance measure after their complicated CNN screening procedure, this work explicitly defines the methodology for its widespread use as a decision framework in Granger causal inference. Importance of a chronologically ordered variable can be interpreted as ``causal'' (predictive) effect, and the strategy builds on the concept of exchangeability (like the directed graph method of \citet{caron17}), where if $\tilde{\mathbf{X}}$ is a random permutation of the rows of $\mathbf{X}$, $\mathbf{Y}_{<t+1} \perp \mathbf{X}$ given $\mathbf{Y}_{<t}$ implies $\mathbf{Y}_{<t+1} \perp \tilde{\mathbf{X}}$ given $\mathbf{Y}_{<t}$, but the converse is not always true \citep{van06}.
	\par
	The following definition pair, adjusting Definitions \ref{grangerdefinition2} and \ref{grangerdefinition3} to a permutation structure for the covariate matrix, are proposed to investigate if $\mathbf{x}_t$ Granger causes $\mathbf{y}_t$. The unrestricted and restricted models are replaced by a null model and a permuted model, where $\tilde{\mathbf{X}}_{<t}$ is a copy of $\mathbf{X}_{<t}$ with the time axis (rows) randomly permuted.
\definitiongroup
\begin{subdefinition}
		Time series $\mathbf{x}_t\in\mathbb{R}^{p}$ is not conditionally Granger causal for $\mathbf{y}_t\in\mathbb{R}^{d}$ given $\mathbf{z}_t\in\mathbb{R}^{q}$ and the relevant history of the response $\mathbf{Y}_{<t}$ if and only if 
		\begin{align}
			\mathbf{Y}_{<t+1}|\mathbf{Y}_{<t}, \mathbf{Z}_{<t}, \mathbf{X}_{<t} \overset{d}{=} \mathbf{Y}_{<t+1}|\mathbf{Y}_{<t}, \mathbf{Z}_{<t}, \tilde{\mathbf{X}}_{<t}.
		\end{align}
		\label{grangerdefinition4a}
	\end{subdefinition}
	\begin{subdefinition}
		Equivalent to Definition \ref{grangerdefinition4a}, $\mathbf{x}_t$ is conditionally Granger causal for $\mathbf{y}_t$ given $\mathbf{z}_t$ and the relevant history of the response $\mathbf{Y}_{<t}$ if 
		\begin{align}
			\text{Var}\left[\mathbf{y}_t - \mathcal{P}\left(\mathbf{y}_t|\mathbf{Y}_{<t},\mathbf{Z}_{<t},\mathbf{X}_{<t}\right)\right] < \text{Var}\left[\mathbf{y}_t - \mathcal{P}\left(\mathbf{y}_t|\mathbf{Y}_{<t},\mathbf{Z}_{<t},\tilde{\mathbf{X}}_{<t}\right)\right].
		\end{align}
		\label{grangerdefinition4b}
	\end{subdefinition}
	\par
	The permutations of $\mathbf{X}_{<t}$ break the potential dependence structure with the response while retaining the intradependence of the covariates, as information in a given row across columns remains intact. The restructuring of the Granger causal framework allows for use of out-of-sample estimated prediction errors, corrects the imbalance of comparison between restricted and unrestricted, and presents a clear path to account for multiple testing, aligning the methodology closer to its inferential utility.
	
	\section{Methodology}
	\label{se:NPGCmethods}
	Suppose observed realizations of the data $\left(\mathbf{X}, \mathbf{Y}, \mathbf{Z}\right)_{\omega}$ arise from the set of all potential realizations $\omega\in\Omega$. Define $\Omega_\text{obs}$ as the size $\varphi$ set of observations, $\Omega_\text{obs}=\left\{1,\ldots,\varphi\right\} \subseteq \Omega$, and note that usually $\varphi=1$. Instances for $\varphi>1$ may occur with repeated trials of a controlled experiment. For simplicity of the original presentation, the subscript $\omega$ notation specifying an observed realization is omitted until the end of this subsection.
	
	\subsection{Structure}\label{ss:Structure}
	Nonlinear functional dependence in the data is captured with feed forward networks (FNNs) like Equations \ref{eq:deephidden1} and \ref{eq:deepoutput}, where the depth $L=1$, and $\mathbf{W}^0$ and $\mathbf{b}^0$ are randomly generated. The exact formulation of this transformation to a representative space (akin to $\Psi$ in Section \ref{se:NLGC}) is not the main focus of this work, and the dimension of the feature space $N$ is fixed for direct comparison to other methods. For demonstration of NPGC, a simple model agnostic structure was selected. Other, more targeted featurization strategies will likely more effectively describe data specific dependence.
	\par
	With familiarity of a dataset, a researcher selects $\gamma$ lagged values of $\mathbf{y}_t$ that serve as a representative history, $\mathbf{Y}_\text{lag}\in\mathbb{R}^{(T + \gamma) \times \gamma d}$ (corresponding to $\mathbf{Y}_{<t}$ in Definitions \ref{grangerdefinition4a} and \ref{grangerdefinition4b}), and assume that an appropriate number of lags is selected such that autocorrelation in $\mathbf{y}_t$ is fully explained across all potential realizations. Selection of the truncation lag $\gamma$ is outside the scope of this work; \citet{ng01, ivanov05, shojaie10, nicholson17} provide detailed discussions. With a finite data length $T + \gamma$, this process restricts the usable portion of $\mathbf{X}$, $\mathbf{Y}$, and any additional covariates $\mathbf{Z}$ to the last $T$ rows. All columns (individual variables) are standardized for consistent behavior in a random FNN featurization process with an activation function \citep{goodfellow16}.
	\par
	The rows of the covariate matrix $\mathbf{X}$ are randomly reorganized via $\bm{\Pi}_m$ to generate several permutations $\tilde{\mathbf{X}}_{m}=\bm{\Pi}_m\mathbf{X}$ for $m = 1, \ldots, M$. The designated first permutation, $m=1$, corresponds to the original ordering of the data where $\bm{\Pi}_1 = \mathbf{I}$. Dependence across rows of $\mathbf{X}$, for example a covariate lag structure, is captured by augmenting the matrix with additional columns.
	\par
	After permutation, the predictor matrices $\left[\mathbf{1}\;\mathbf{Y}_\text{lag}\;\mathbf{Z}\;\tilde{\mathbf{X}}_m\right]\in\mathbb{R}^{T\times (1 + \gamma d + q + p)}$ are compiled, and the FNN is rewritten to the structure in Equation \ref{featurizer} (with activation function $g=\tanh$). $\mathbf{W}^0$ and $\mathbf{b}^0$ are combined into a single parameter matrix $\mathbf{W}\in\mathbb{R}^{(1 + \gamma d + q + p)\times N}$ after inclusion of the intercept term in the predictor matrices, and each matrix entry is an independent Gaussian realization $w_{ij}\sim\mathcal{N}(0,1)$.
	\begin{align}
		\mathbf{H}_m &=g\left(\left[\mathbf{1}\;\mathbf{Y}_\text{lag}\;\mathbf{Z}\;\tilde{\mathbf{X}}_m\right]\mathbf{W}\right) = \tanh\left(\left[\mathbf{1}\;\mathbf{Y}_\text{lag}\;\mathbf{Z}\;\tilde{\mathbf{X}}_m\right]\mathbf{W}\right)\label{featurizer}
	\end{align}
	\par
	Added uncertainty arising from the random generation is mitigated through several featurizations, $\mathbf{W}_r$ for $r=1,\ldots,\mathscr{R}$, and extracting the aggregate behavior. The models can be written in terms of the original ($m=1$) and permuted ($m=2,\ldots,M$) feature spaces, where $\mathbf{U}_{m,r}$ is the variation in $\mathbf{Y}$ not captured by the functional relationship with the feature space $\mathbf{H}_{m,r}$. 
	\begin{align}
		\mathbf{Y} = \mathbf{H}_{m,r}\mathbf{W}^L_{m,r} + \mathbf{U}_{m,r}
	\end{align}
	\par
	Define $\bm{\Theta}_m$ as the underlying covariance matrix of the prediction for $\mathbf{Y}$ given the relevant history of the response $\mathbf{Y}_\text{lag}$, the additional variables $\mathbf{Z}$, and the permuted covariate set $\tilde{\mathbf{X}}_m$. Denote $\vartheta_m=\text{tr}\left(\bm{\Theta}_m\right)$ as the corresponding parameter over all potential realizations $\omega\in\Omega$. Variation in the estimate arises from potential realizations of the data $\omega\in\Omega$ and via randomly generated FNNs approximating the nonlinear functional form. For a given data realization $\left(\mathbf{X}, \mathbf{Y}, \mathbf{Z}\right)_\omega$, and under the true functional form $f$, define the specific covariance matrix of the prediction $\bm{\Sigma}_{m,\omega}$. For random featurization $r$, define the covariance matrix of prediction $\mathbf{S}_{m,\omega,r}$ as an estimate of $\bm{\Sigma}_{m,\omega}$.  The out-of-sample variation parameter $\vartheta_m$ is estimated for each permutation via a cross-validation approach.
	
	\subsection{Estimating Granger Causal Influence}\label{ss:ECGI}
	A sufficiently large value featurization dimension $N$ is chosen to ``linearize'' any existing functional relationship, but not so large that the network is able to memorize inputs or fabricate dependence between the permuted data and a response. This implicitly assumes the existence of some nonlinear functional relationship between a covariate set and a response will be ``easier'' for an ANN to learn than random matching of inputs to outputs in the permuted data, and an exact form of this condition is proposed in Section \ref{se:NPGCtheory}. The data is split into $K$ sets for model computation and testing, and define the number of observations in each set $k=1,\ldots,K$ as $T_{k} = \lfloor T/K\rfloor + \mathbf{1}\left\{\left(T\bmod K\right)\geq k\right\}$.
	\par
	Under the form of Equation \ref{featurizer}, several random FNNs are generated $r=1,\ldots,\mathscr{R}$. The model matrices $\mathbf{W}_r$ are held fixed over all permutations $m$ and observations $\omega\in\Omega_\text{obs}$ for a consistent error estimation framework. For each permutation, with $m=1$ corresponding to the original data, the predictor matrices are projected into the respective feature spaces, and the residuals for test set $k$ can be written as in Equation \ref{testresiduals}, where the where the training data (subscript $-k$) excludes set $k$. 
	\begin{align}
		\mathbf{R}_{m,\omega,r,k}&= \mathbf{H}_{m,\omega,r,k} \left(\mathbf{H}_{m,\omega,r,-k}'\mathbf{H}_{m,\omega,r,-k}\right)^{-1}\mathbf{H}_{m,\omega,r,-k}'\mathbf{Y}_{\omega,-k}, - \mathbf{Y}_{\omega,k}\label{testresiduals}
	\end{align}
	\par
	The out-of-sample prediction residuals $\mathbf{R}$ from the test set align closer to the original definition of predictive ability in Granger causal inference than the in-sample model variation. This distinction is especially important with the use of ANNs and the ability to learn any arbitrary, data-specific dependence structure. The estimate for the out-of-sample variation in prediction residuals $\hat{\vartheta}_{m}$ is shown in Equation \ref{parameter}.
	\begin{align}
		\hat{\vartheta}_{m} &= \frac{1}{\varphi\mathscr{R}K}\sum_{\omega = 1}^{\varphi} \sum_{r=1}^{\mathscr{R}} \sum_{k=1}^K \frac{1}{T_k} \text{tr}\left(\mathbf{R}_{m,\omega,r,k}'\mathbf{R}_{m,\omega,r,k}\right)\label{parameter}
	\end{align}
	\par
	A null distribution of variation from each model $m=1,\ldots,M$ is approximated from the permutation structure. The random permutations of the covariate set $\mathbf{X}$ break the potential dependence structure present in the form of some predictive relationship with the response $\mathbf{Y}$. Under the null hypothesis, the original data is viewed as one of $M$ random permutations, and the original ``permutation'' $\tilde{\mathbf{X}}_1=\mathbf{X}$ should exhibit similar properties to $\tilde{\mathbf{X}}_m$ for $m=2,\ldots,M$. Analogously, if the time observations $\mathbf{X}$ are exchangeable for prediction of $\mathbf{Y}$, the conditional distribution of the prediction will not change.
	\par
	The variation estimates are drawn from the distribution of all possible permutations in Equation \ref{fullpermutationdistribution}, and $\hat{\vartheta}_{1}$ is expected to fall above some lower tail portion. 
	\begin{align}\label{fullpermutationdistribution}
		\hat{\vartheta}_{m}\sim\hat{\mathcal{H}}(s) &= (T!)^{-1}\sum_{i=1}^{T!}\mathbf{1}\{\hat{\vartheta}_{i}\leq s\},
	\end{align}
	This leads to an approximate null distribution where the estimate $\hat{\vartheta}_{1}$ is at quantile $\hat{Q}_M$ of the empirical distribution $\hat{\mathcal{H}}_{M}(s)$, defined in Equation \ref{samplepermutationdistribution}, formed from a subsample of size $M\leq T!$. A decision rule is formulated from comparison to a chosen level of test $\alpha$. 
	\begin{align}
		\hat{\mathcal{H}}_{M}(s) &= \frac{1}{M}\sum_{m=1}^{M}\mathbf{1}\{\hat{\vartheta}_{m}\leq s\}\label{samplepermutationdistribution}\\
		\hat{Q}_{M} = \hat{\mathcal{H}}_{M}(\hat{\vartheta}_{1})&=\frac{1}{M}\sum_{m=1}^M\mathbf{1}\left\{\hat{\vartheta}_{m} \leq \hat{\vartheta}_{1}\right\}\label{quantileestimate}
	\end{align}
	\par
	Rejection of the null hypothesis in this framework, $\hat{Q}_M \leq \alpha$, presents evidence for $\mathbf{X}$ as Granger causal of $\mathbf{Y}$ \emph{conditional} on the additional variables $\mathbf{Z}$ and the relevant history of the response $\mathbf{Y}_\text{lag}$. For the case when $\varphi=1$, this conclusion is conditional on \emph{error free observation} of the dataset. Inferential results must either include this assumption, or the scope narrowed to the specific observation $\omega$. The full algorithmic process is shown in the supplement, and an explicit outline of the theoretical behavior of these estimates and development of the underlying framework is given in Section \ref{se:NPGCtheory}.

	\section{Theory}\label{se:NPGCtheory}
	Define $\bm{\Theta}_m$ as the underlying covariance matrix of the predictive ability of permutation $m$, and the quantity $\vartheta_m=\text{tr}\left(\bm{\Theta}_m\right)$. For each potential realization of the data $\left(\mathbf{X}, \mathbf{Y}, \mathbf{Z}\right)_\omega$, $\omega\in\Omega$, define the realization-specific covariance matrix $\bm{\Sigma}_{m,\omega}$ drawn from some distribution with expectation $\mathbb{E}\left[\text{tr}\left(\bm{\Sigma}_{m,\omega}\right)\right] = \vartheta_m$ and variance $\tau_\omega^2<\infty$ that is constant over all permutations. Each random generated FNN for the featurization process $r=1,\ldots,\mathscr{R}$ produces $\mathbf{S}_{m,\omega,r}$ as an estimate of $\bm{\Sigma}_{m,\omega}$, where $\mathbb{E}\left[\text{tr}\left(\mathbf{S}_{m,\omega,r}\right)|\bm{\Sigma}_{m,\omega}\right] = \text{tr}\left(\bm{\Sigma}_{m,\omega}\right)$ and $\text{Var}\left[\text{tr}\left(\mathbf{S}_{m,\omega,r}\right)|\bm{\Sigma}_{m,\omega}\right] = \tau_r^2 < \infty$. 
	\par
	The null and permuted models are evaluated with the out-of-sample prediction residuals, shown in Equation \ref{testresiduals}, and define the estimate for total variation as in Equation \ref{parameter}. Under the null hypothesis when the conditional distribution of the prediction is invariant to permutation of the covariate set $\mathbf{X}$, $\vartheta_1=\vartheta_2=\cdots=\vartheta_{T!}=\vartheta$, leading to the null and alternative hypotheses in Equation \ref{finalhypothesis}.
	\begin{align}
		H_0 &: \vartheta_1=\vartheta_2=\cdots=\vartheta_{T!-1}=\vartheta_{T!}\nonumber\\
		H_A &: \vartheta_1<\vartheta_i\hskip0.1in\text{for all $i=2,\ldots,T!$}\label{finalhypothesis}
	\end{align}
	\par
	The null hypothesis is tested using the sample quantile $\hat{Q}_M$ from the empirical distribution $\hat{\mathcal{H}}_{M}(s)$ defined in Equations \ref{samplepermutationdistribution} and \ref{quantileestimate}. Theoretical results in this section rely on a set of three mild conditions comparable to those found in relevant literature. Four additional conditions provide regularity to the featurization process.
	\begin{condition}\label{samplingcondition}
		The data is continuous and stationary, and the discrete, regular sampling grid $t=1,\ldots,T$ is sufficiently fine to capture any potential functional dependence in the variable matrices $\left(\mathbf{X}, \mathbf{Y}, \mathbf{Z}\right)_\omega$ for any realization $\omega\in\Omega$.
	\end{condition}
	\begin{condition}\label{distributioncondition}
		For all realizations $\omega\in\Omega$, and random generated FNNs $\mathbf{W}_r\in\mathcal{W}$, $r=1,\ldots,\mathscr{R}$, the quantities $\text{tr}\left(\mathbf{S}_{m,\omega,r}\right)|\bm{\Sigma}_{m,\omega}$ are independently drawn from continuous distributions with defined expectation in Equation \ref{eq:distribcond1} and constant, finite variance across all permutations and potential realizations of the data.
		\begin{align}
			\mathbb{E}\left[\text{tr}\left(\mathbf{S}_{m,\omega,r}\right)|\bm{\Sigma}_{m,\omega}\right] &= \text{tr}\left(\bm{\Sigma}_{m,\omega}\right)\label{eq:distribcond1}\\
			\text{Var}\left(\text{tr}\left[\mathbf{S}_{m,\omega,r}\right]|\bm{\Sigma}_{m,\omega}\right)&=\tau_r^2<\infty
		\end{align}
		Similarly, $\bm{\Sigma}_{m,\omega}$ are independently drawn from continuous distributions with expectation in Equation \ref{eq:distribcond2} and constant, finite variance across permutations.
		\begin{align}
			\mathbb{E}\left[\text{tr}\left(\bm{\Sigma}_{m,\omega}\right)\right]&=\vartheta_m\label{eq:distribcond2}\\
			\text{Var}\left(\text{tr}\left[\bm{\Sigma}_{m,\omega}\right]\right)&=\tau_\omega^2<\infty
		\end{align}
	\end{condition}
	\begin{condition}\label{modelerrorcondition}
		The relevant history of the response, $\mathbf{Y}_{<t,\omega}=\mathbf{Y}_{\text{lag},\omega}\in\mathbb{R}^{T\times \gamma d}$ is appropriately chosen such that the model errors $\mathbf{U}_{m,\omega,r}$ are independent, or $\mathbf{u}_{m,\omega,r,t}\perp \mathbf{u}_{m,\omega,r,t'}$ for any realization $\omega\in\Omega$, permutation $m=1,\ldots,M$, featurization $r=1,\ldots,\mathscr{R}$, and time point $t=1,\ldots,T$ where $t'\neq t$. Further, the model errors follow multivariate normal distributions with mean zero and constant variation, leading to the result
		\begin{align}\label{modelerrorconditionequation}
			\mathbf{u}_{m,\omega,r,t} &\overset{i.i.d.}{\sim} \mathcal{N}_d\left(\mathbf{0}, \mathbf{S}_{m,\omega,r}\right).
		\end{align}
	\end{condition}
	
	\subsection{Asymptotic Properties}\label{ss:AsympNPGC}
	The asymptotic behavior of the estimates and established testing framework is examined under the listed conditions above and in the supplement. Define the underlying variation parameter for permutation $m$ as $\vartheta_m$, and the estimate $\hat{\vartheta}_m$ as in Equation \ref{parameter}. Assume a fixed test set size $T_k$ and allow the training set $T_{-k} = T-T_k$ and number of folds $K$ to grow as $T\rightarrow\infty$.
	\begin{theorem}\label{theorem1}
		Under the listed conditions, with Condition \ref{distributioncondition} modified such that $\tau_\omega^2=0$ (\textit{i.e.}, error free observation of the data), for all $\varepsilon>0$,
		\begin{align}
			\lim_{\mathscr{R}\rightarrow\infty} \lim_{T\rightarrow\infty}\mathbb{P}\left(|\hat{\vartheta}_m - \vartheta_m|\leq \varepsilon \right) = 1.\label{theorem1equation1}
		\end{align}
		Alternatively, under the listed conditions, for all $\varepsilon>0$,
		\begin{align}
			\lim_{\varphi\rightarrow\infty} \lim_{T\rightarrow\infty}\mathbb{P}\left(|\hat{\vartheta}_m - \vartheta_m|\leq \varepsilon \right) = 1.\label{theorem1equation2}
		\end{align}
	\end{theorem}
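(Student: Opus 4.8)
The plan is to obtain both limits from a single three-level telescoping decomposition that separates the three sources of variability in $\hat\vartheta_m$: the out-of-sample fold/estimation error, the featurization noise, and the realization noise. Abbreviating the cross-validated average of a single featurization as $\bar V_{m,\omega,r}=K^{-1}\sum_{k=1}^{K}T_k^{-1}\,\text{tr}(\mathbf R_{m,\omega,r,k}'\mathbf R_{m,\omega,r,k})$, so that $\hat\vartheta_m=(\varphi\mathscr{R})^{-1}\sum_{\omega}\sum_{r}\bar V_{m,\omega,r}$, I would write
\begin{align*}
\hat\vartheta_m-\vartheta_m &= \underbrace{\frac{1}{\varphi\mathscr{R}}\sum_{\omega}\sum_{r}\bigl(\bar V_{m,\omega,r}-\text{tr}\,\mathbf S_{m,\omega,r}\bigr)}_{(\mathrm I)} + \underbrace{\frac{1}{\varphi\mathscr{R}}\sum_{\omega}\sum_{r}\bigl(\text{tr}\,\mathbf S_{m,\omega,r}-\text{tr}\,\bm{\Sigma}_{m,\omega}\bigr)}_{(\mathrm{II})}\\
&\quad + \underbrace{\frac{1}{\varphi}\sum_{\omega}\bigl(\text{tr}\,\bm{\Sigma}_{m,\omega}-\vartheta_m\bigr)}_{(\mathrm{III})},
\end{align*}
and treat each term by a Chebyshev/second-moment argument in a different index. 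The two displayed statements then differ only in how the outer terms $(\mathrm{II})$ and $(\mathrm{III})$ are collapsed to zero.

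The crux is $(\mathrm I)$, which I would send to zero as $T\to\infty$ by a conditional second-moment bound given the featurization $r$ and the error covariance $\mathbf S_{m,\omega,r}$. Writing the out-of-sample residual as $\mathbf R_{k}=\mathbf H_{k}(\hat{\mathbf W}^{L}_{-k}-\mathbf W^{L})-\mathbf U_{k}$ with $\hat{\mathbf W}^{L}_{-k}-\mathbf W^{L}=(\mathbf H_{-k}'\mathbf H_{-k})^{-1}\mathbf H_{-k}'\mathbf U_{-k}$, I expand $T_k^{-1}\text{tr}(\mathbf R_k'\mathbf R_k)$ into a main term $T_k^{-1}\text{tr}(\mathbf U_k'\mathbf U_k)$, a quadratic estimation term, and a mean-zero cross term. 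Condition~\ref{samplingcondition} makes the fixed-weight $\tanh$ features stationary, and together with the supplementary featurization conditions ensures $T_{-k}^{-1}\mathbf H_{-k}'\mathbf H_{-k}$ converges to a positive-definite limit; the out-of-sample projection is then consistent, so $\hat{\mathbf W}^{L}_{-k}-\mathbf W^{L}=O_p(T_{-k}^{-1/2})$ and, because $T_k$ is fixed, the quadratic term contributes $O(T_{-k}^{-1})$ to the conditional mean and the cross term $O(T_{-k}^{-1})$ to the conditional variance. By Condition~\ref{modelerrorcondition} the innovations are i.i.d.\ $\mathcal N_d(\mathbf 0,\mathbf S_{m,\omega,r})$, so the main term has conditional mean $\text{tr}\,\mathbf S_{m,\omega,r}$ and per-fold variance $2\,\text{tr}(\mathbf S_{m,\omega,r}^{2})/T_k$, and, crucially, disjoint folds use independent test innovations, so the $K$ main terms are independent. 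Hence $\mathbb E[\bar V_{m,\omega,r}\mid\mathbf S_{m,\omega,r}]=\text{tr}\,\mathbf S_{m,\omega,r}+O(T_{-k}^{-1})$ and $\text{Var}[\bar V_{m,\omega,r}\mid\mathbf S_{m,\omega,r}]=O(K^{-1})+O(T_{-k}^{-1})$; Chebyshev gives $\bar V_{m,\omega,r}\overset{p}{\to}\text{tr}\,\mathbf S_{m,\omega,r}$, and summing the finitely many $(\omega,r)$ pairs yields $(\mathrm I)\overset{p}{\to}0$. Note that it is the \emph{fixed} test size $T_k$ that keeps each fold's variance from vanishing, while the \emph{growing} number of folds $K$ supplies the averaging.

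For the outer terms, Condition~\ref{distributioncondition} gives that the $\text{tr}\,\mathbf S_{m,\omega,r}$ are conditionally independent across $r$ with mean $\text{tr}\,\bm{\Sigma}_{m,\omega}$ and variance $\tau_r^2$, and the $\text{tr}\,\bm{\Sigma}_{m,\omega}$ are independent across realizations $\omega$ with mean $\vartheta_m$ and variance $\tau_\omega^2$. For equation~\ref{theorem1equation1}, setting $\tau_\omega^2=0$ forces $\text{tr}\,\bm{\Sigma}_{m,\omega}=\vartheta_m$ almost surely, so $(\mathrm{III})\equiv0$ and, after the inner limit, the residual fluctuation is the mean-zero average $(\mathrm{II})=(\varphi\mathscr{R})^{-1}\sum_{\omega,r}(\text{tr}\,\mathbf S_{m,\omega,r}-\vartheta_m)$ with variance $\tau_r^2/(\varphi\mathscr{R})$; this vanishes as $\mathscr{R}\to\infty$, establishing the first claim via the continuous-mapping theorem (the continuity in Condition~\ref{distributioncondition} makes the event boundary null). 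For equation~\ref{theorem1equation2} I keep $\mathscr{R}$ fixed and merge $(\mathrm{II})$ and $(\mathrm{III})$: set $Z_\omega=\mathscr{R}^{-1}\sum_{r}\text{tr}\,\mathbf S_{m,\omega,r}$, which by the tower property has $\mathbb E[Z_\omega]=\vartheta_m$ and, by the variance decomposition, finite variance $\tau_\omega^2+\tau_r^2/\mathscr{R}$. Since the $Z_\omega$ are independent across replications, $\varphi^{-1}\sum_\omega Z_\omega$ has variance $(\tau_\omega^2+\tau_r^2/\mathscr{R})/\varphi\to0$, so $\hat\vartheta_m\overset{p}{\to}\vartheta_m$ as $\varphi\to\infty$ after the inner limit, establishing the second claim.

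The main obstacle is $(\mathrm I)$: guaranteeing that the random-feature Gram matrix $T_{-k}^{-1}\mathbf H_{-k}'\mathbf H_{-k}$ possesses a positive-definite limit so that the out-of-sample projection is consistent, and controlling the estimation remainder in the simultaneous regime $K\to\infty,\ T_{-k}\to\infty$, where folds share training data and so are not independent. The dependence is benign because it is confined to the asymptotically negligible estimation terms, while the non-vanishing main terms are \emph{exactly} independent across folds; the positive-definite-limit requirement is the genuine regularity assumption, which I expect the four featurization conditions deferred to the supplement to supply.
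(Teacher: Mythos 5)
Your proposal is correct and, at its core, is the same argument as the paper's, but reorganized and with one substitution at the key technical step worth noting. The paper proceeds through two moment lemmas: Lemma \ref{lemma1} shows $\lim_{T\rightarrow\infty}\mathbb{E}[\hat{\vartheta}_m]=\vartheta_m$, and Lemma \ref{lemma2} shows $\lim_{T\rightarrow\infty}\text{Var}(\hat{\vartheta}_m)=\varphi^{-1}\tau_\omega^2+(\varphi\mathscr{R})^{-1}\tau_r^2$ via the law of total variance, whose three terms are precisely your $(\mathrm{I})$, $(\mathrm{II})$, $(\mathrm{III})$; Theorem \ref{theorem1} then follows by Chebyshev under each outer limit, exactly as you conclude. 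Your handling of the outer terms matches the paper's, and your observation that only the test innovations across disjoint folds are exactly independent, while the shared-training-data dependence lives in the asymptotically negligible estimation terms, is in fact \emph{more} careful than the paper's, which simply invokes Condition \ref{modelerrorcondition} to ``treat these splits as uncorrelated draws.'' The genuine divergence is in how $(\mathrm{I})$ is killed. You posit that $T_{-k}^{-1}\mathbf{H}_{-k}'\mathbf{H}_{-k}$ converges to a positive-definite limit, giving $\hat{\mathbf{W}}^L_{-k}-\mathbf{W}^L=O_p(T_{-k}^{-1/2})$. The paper never assumes any such limit exists: Conditions \ref{functioncondition}--\ref{entriescondition} supply only boundedness ($G$), full rank with a uniform condition-number bound ($\kappa_{\max}$), and moment bounds ($\nu^2,\xi^2,\varrho^4$), and the proof of Lemma \ref{lemma1} works non-asymptotically via singular-value bounds and Von Neumann's trace inequality to get $\text{tr}(\bm{\Phi}_{m,\omega,r,k})\leq T_kN^2G^2\kappa_{\max}^2/(T_{-k}\bar{h}^2_{m,\omega,r,-k})$, whose conditional expectation is $O(T_{-k}^{-1})$ by Condition \ref{entriescondition}. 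So the eigenvalue floor $\lambda_N(\mathbf{H}_{-k}'\mathbf{H}_{-k})\geq T_{-k}\bar{h}^2_{-k}/\kappa_{\max}^2$ does all the work without a limiting Gram matrix, and the quantity you identify as ``the genuine regularity assumption'' is strictly stronger than what the supplement provides or than what is needed; you could drop it and run your same expansion with the condition-number bound alone. One small gain of the paper's route is that these non-asymptotic sandwich bounds (Equations \ref{help1} and \ref{help2}) also feed directly into the dominated-convergence arguments of Lemma \ref{lemma2} and into the finite-sample result of Theorem \ref{theorem5}; a small gain of yours is the sharper exact main-term variance $2\,\text{tr}(\mathbf{S}_{m,\omega,r}^2)/T_k$ where the paper settles for the cruder bound of Remark \ref{remark1}.
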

	\par
	Without error free observation of the data, the underlying variation $\tau_\omega^2$ remains present in the estimate, but shrinks as the number of observations gets large.
	\begin{theorem}\label{theorem2}
		Under the listed conditions, the estimate for the variation parameter $\hat{\vartheta}_{m}$ admits a Central Limit Theorem with respect to the number of observations $\varphi$ in $\Omega_\text{obs}$.
		\begin{align}
			\lim_{\mathscr{R}\rightarrow\infty}\lim_{T\rightarrow\infty}\sqrt{\varphi}\left(\hat{\vartheta}_{m} - \vartheta_m\right)&\xrightarrow{D} \mathcal{N}\left(0, \tau_\omega^2\right)\label{theorem2equation}
		\end{align}
	\end{theorem}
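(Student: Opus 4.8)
The plan is to lean on the consistency machinery already developed for Theorem \ref{theorem1} and to add only an outer central-limit layer in the realization index $\omega$. First I would rewrite the estimate of Equation \ref{parameter} as a nested average that separates the three sources of variation,
\[
\hat{\vartheta}_m = \frac{1}{\varphi}\sum_{\omega=1}^{\varphi} A_{m,\omega}, \qquad A_{m,\omega} = \frac{1}{\mathscr{R}}\sum_{r=1}^{\mathscr{R}} B_{m,\omega,r}, \qquad B_{m,\omega,r} = \frac{1}{K}\sum_{k=1}^{K}\frac{1}{T_k}\,\text{tr}\!\left(\mathbf{R}_{m,\omega,r,k}'\mathbf{R}_{m,\omega,r,k}\right),
\]
so that the fold, featurization, and realization averages can be collapsed one layer at a time in the order $T\to\infty$, then $\mathscr{R}\to\infty$, then $\varphi\to\infty$ dictated by the statement.

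Holding $\omega$ and $r$ fixed, the inner limit $T\to\infty$ is exactly the collapse established in the proof of Theorem \ref{theorem1}: as the training size $T_{-k}=T-T_k$ grows, the least-squares projection coefficient converges to the true output map, so the out-of-sample residuals of Equation \ref{testresiduals} behave like the true model errors $\mathbf{u}_{m,\omega,r,t}$ of Condition \ref{modelerrorcondition}; since $T_k$ is held fixed while $K\to\infty$, the fold average $B_{m,\omega,r}$ is a genuine sample mean and the law of large numbers gives $B_{m,\omega,r}\xrightarrow{P}\text{tr}\!\left(\mathbf{S}_{m,\omega,r}\right)$. I would cite this step rather than reprove it. Next, at fixed $\varphi$, I would take $\mathscr{R}\to\infty$: conditional on $\bm{\Sigma}_{m,\omega}$, Condition \ref{distributioncondition} makes the $\text{tr}\!\left(\mathbf{S}_{m,\omega,r}\right)$ independent across $r$ with mean $\text{tr}\!\left(\bm{\Sigma}_{m,\omega}\right)$ and finite variance $\tau_r^2$, so the law of large numbers gives $A_{m,\omega}\xrightarrow{P}\text{tr}\!\left(\bm{\Sigma}_{m,\omega}\right)$ and the featurization variance is driven to zero.

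After these two in-probability limits, the estimate satisfies $\hat{\vartheta}_m\xrightarrow{P}\varphi^{-1}\sum_{\omega=1}^{\varphi}\text{tr}\!\left(\bm{\Sigma}_{m,\omega}\right)$ at each fixed $\varphi$, a normalized version of which is the object the distributional limit acts on. By Condition \ref{distributioncondition} the summands $\text{tr}\!\left(\bm{\Sigma}_{m,\omega}\right)$ are independent across $\omega$ with common mean $\vartheta_m$ and variance $\tau_\omega^2<\infty$, so $\sqrt{\varphi}\bigl(\varphi^{-1}\sum_\omega \text{tr}(\bm{\Sigma}_{m,\omega})-\vartheta_m\bigr)$ is a centered and scaled sample mean and the Lindeberg--L\'evy theorem yields convergence in distribution to $\mathcal{N}(0,\tau_\omega^2)$ as $\varphi\to\infty$. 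Only $\tau_\omega^2$ survives precisely because the fold-level noise is averaged away as $K\to\infty$ and the featurization noise $\tau_r^2$ is divided out as $\mathscr{R}\to\infty$, leaving the realization-level dispersion as the sole contributor to the limiting variance.

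The main obstacle is the inner step $B_{m,\omega,r}\xrightarrow{P}\text{tr}(\mathbf{S}_{m,\omega,r})$: I must confirm that the finite-training-sample estimation of the output coefficients does not inflate the out-of-sample residual trace in the limit, so that each per-observation squared residual has conditional expectation tending to $\text{tr}(\mathbf{S}_{m,\omega,r})$ under the i.i.d.\ normal errors of Condition \ref{modelerrorcondition}. A secondary bookkeeping point is the interchange of the in-probability limits with the distributional limit; since the first two limits produce convergence in probability to $\varphi^{-1}\sum_\omega\text{tr}(\bm{\Sigma}_{m,\omega})$ at \emph{fixed} $\varphi$, a Slutsky-type argument lets the central limit theorem apply to that exact limiting sample mean. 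Beyond these points the new analytical content over Theorem \ref{theorem1} is light, as the argument reduces to the classical central limit theorem.
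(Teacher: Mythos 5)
Your proposal is correct and takes essentially the same route as the paper's proof, which likewise establishes the conditional convergence $\frac{1}{\mathscr{R}K}\sum_{r=1}^{\mathscr{R}}\sum_{k=1}^{K}T_k^{-1}\text{tr}\left(\mathbf{R}_{m,\omega,r,k}'\mathbf{R}_{m,\omega,r,k}\right)\big|\bm{\Sigma}_{m,\omega}\xrightarrow{P}\text{tr}\left(\bm{\Sigma}_{m,\omega}\right)$ by citing the machinery of Lemmas \ref{lemma1} and \ref{lemma2}, and then applies the classical CLT to the i.i.d.\ traces $\text{tr}\left(\bm{\Sigma}_{m,\omega}\right)$ together with Slutsky's theorem. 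If anything, your explicit two-stage collapse (folds via $K\rightarrow\infty$ as $T\rightarrow\infty$, then featurizations via $\mathscr{R}\rightarrow\infty$ to remove the $\tau_r^2/\mathscr{R}$ contribution) is a slightly more careful bookkeeping of the iterated limits than the paper's one-line statement, but it is the same argument, not a different one.
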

	\par
	As a direct result of Theorem \ref{theorem2}, $\hat{\vartheta}_{m}$ is a consistent estimate for $\vartheta_m$ when $\tau_\omega^2>0$ as $\varphi$ tends to infinity along with the size of the training and test sets. 
	
	\subsubsection{Under the Null Hypothesis}\label{ss:NullNPGC}
	Under the null hypothesis, the underlying variation parameter is constant for every possible permutation, $\vartheta_1 = \cdots = \vartheta_T!$. Define the quantile estimate $\hat{Q}_M$ as in Equation \ref{quantileestimate}, and establish the limiting uniform distribution from the result of Theorem \ref{theorem1}.
	\begin{theorem}\label{theorem3}
		Under the null hypothesis and the listed conditions, 
		\begin{align}
			\lim_{T\rightarrow\infty} \lim_{M\rightarrow T!} \hat{Q}_M \xrightarrow{D} \text{Uniform}(0,1).\label{theorem3equation}
		\end{align}
	\end{theorem}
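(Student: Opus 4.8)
The plan is to recognize Theorem \ref{theorem3} as an asymptotic statement about the validity of an \emph{exact} permutation test and to prove it in three stages: first, establish that under the null hypothesis the full collection of estimates $\bigl(\hat\vartheta_g\bigr)_{g\in S_T}$ indexed by the $T!$ row permutations is exchangeable in the appropriate group-theoretic sense; second, deduce that the normalized rank of the original estimate $\hat\vartheta_1$ is exactly discrete uniform on $\{1/T!,2/T!,\ldots,1\}$ for every finite $T$; and third, let $T\to\infty$ so that the mesh $1/T!$ vanishes and the discrete uniform law converges in distribution to $\mathrm{Uniform}(0,1)$. The role of Theorem \ref{theorem1} and the null hypothesis $\vartheta_1=\cdots=\vartheta_{T!}$ is to guarantee that the $\hat\vartheta_g$ are consistent, permutation-invariant summaries of a common population quantity, so that the randomization symmetry below is exact rather than approximate.

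For the first stage I would condition on the fixed featurizations $\{\mathbf W_r\}_{r=1}^{\mathscr R}$ and on the unpermuted quantities $(\mathbf Y,\mathbf Z,\mathbf Y_{\text{lag}})$, and view each $\hat\vartheta_g$ from Equation \ref{parameter} as a fixed measurable function $\phi$ of the row-permuted covariate matrix, $\hat\vartheta_g=\phi(\bm\Pi_g\mathbf X)$. The key lemma is the randomization hypothesis: under the null, and using Condition \ref{samplingcondition} together with the column augmentation that stores the intra-covariate lag structure within each row, the rows of the augmented $\mathbf X$ are conditionally exchangeable given $(\mathbf Y,\mathbf Z,\mathbf Y_{\text{lag}})$, so that $\bm\Pi_h\mathbf X\overset{d}{=}\mathbf X$ for every $h\in S_T$. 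Writing the statistics obtained from the relabeled data $\bm\Pi_h\mathbf X$ and reindexing by $g'=gh$ shows that $\mathrm{rank}(\hat\vartheta_h)\overset{d}{=}\mathrm{rank}(\hat\vartheta_1)$ for each $h$; since the $T!$ ranks are a permutation of $\{1,\ldots,T!\}$ and are identically distributed, averaging the identity $\sum_h \mathbf 1\{\mathrm{rank}(\hat\vartheta_h)=j\}=1$ forces $\mathbb P(\mathrm{rank}(\hat\vartheta_1)=j)=1/T!$ for every $j$.

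The second stage requires ruling out ties so that the ranks genuinely form a permutation of $\{1,\ldots,T!\}$ and the test is exactly (not merely conservatively) uniform; here Condition \ref{distributioncondition} does the work, since the $\hat\vartheta_g$ are built from quantities drawn from continuous distributions and the event $\{\hat\vartheta_g=\hat\vartheta_{g'}\}$ for $g\neq g'$ has probability zero, so almost surely all $T!$ values are distinct. With no ties, $\hat Q_{T!}=\mathrm{rank}(\hat\vartheta_1)/T!$, and the previous stage gives $\hat Q_{T!}\sim\mathrm{Uniform}\{1/T!,\ldots,1\}$. For the third stage, the cumulative distribution function of this discrete law is $F_{T}(x)=\lfloor T!\,x\rfloor/T!$ on $[0,1]$, which converges to $x$ at every continuity point of the limit as $T!\to\infty$; since $T!\to\infty$ with $T$, this yields $\hat Q_{T!}\xrightarrow{D}\mathrm{Uniform}(0,1)$, establishing Equation \ref{theorem3equation} after the inner limit $M\to T!$.

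I expect the randomization hypothesis in the first stage to be the main obstacle, precisely because the data are a time series rather than an i.i.d.\ sample: permuting the rows of $\mathbf X$ destroys any temporal autocorrelation internal to $\mathbf X$, so $\bm\Pi_h\mathbf X\overset{d}{=}\mathbf X$ cannot hold unconditionally for a dependent process. The argument must therefore lean carefully on the modeling convention that all relevant covariate dynamics are encoded across the columns of a single augmented row, so that rows become conditionally exchangeable tokens given the response history, and on Condition \ref{samplingcondition} guaranteeing stationarity and a sufficiently fine grid. A secondary point is the legitimacy of the stated order $\lim_{T\to\infty}\lim_{M\to T!}$: the combinatorial argument delivers exact uniformity for each finite $T$ once $M=T!$, which is what makes this order clean, and one should note that the concentration of the estimates toward the common value $\vartheta$ under the null, implied by Theorem \ref{theorem1}, does not disturb the conclusion because ranks are invariant to the overall location and scale of the values.
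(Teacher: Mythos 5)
Your architecture is the classical exact randomization test (group invariance of the data $\Rightarrow$ exactly discrete-uniform rank of $\hat{\vartheta}_1$ at every finite $T$ $\Rightarrow$ the mesh $1/T!$ vanishes as $T\to\infty$), which is genuinely different from the paper's proof --- but its first stage fails, and you have in fact flagged the fatal point yourself without resolving it. The randomization hypothesis you need, $\bm{\Pi}_h\mathbf{X}\overset{d}{=}\mathbf{X}$ conditionally on $(\mathbf{Y},\mathbf{Y}_{\text{lag}},\mathbf{Z})$, is not implied by this paper's null. The null ($H_0:\vartheta_1=\cdots=\vartheta_{T!}$, or Definition \ref{grangerdefinition4a}'s invariance of the conditional law of the \emph{response} under permutation of $\mathbf{X}$) constrains the relation between $\mathbf{X}$ and $\mathbf{Y}$; it says nothing that would make the rows of $\mathbf{X}$ exchangeable as a marginal or conditional process. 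Condition \ref{samplingcondition} gives stationarity, i.e.\ invariance under shifts, not under the full symmetric group, and the column-augmentation convention you lean on makes matters strictly worse rather than better: augmented rows $t$ and $t+1$ contain overlapping lagged entries of the same series, so distinct rows are strongly dependent and manifestly non-exchangeable. Without $\bm{\Pi}_h\mathbf{X}\overset{d}{=}\mathbf{X}$, the reindexing $g'=gh$ does not deliver identically distributed ranks, exact finite-$T$ uniformity of $\mathrm{rank}(\hat{\vartheta}_1)/T!$ is unavailable, and stages two and three have nothing to stand on. Note also that exchangeability of the statistics $(\hat{\vartheta}_1,\ldots,\hat{\vartheta}_{T!})$ cannot be recovered from the null alone, since $H_0$ as stated only equates the first moments $\vartheta_m$.

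The paper's proof is structured precisely to avoid this: it never claims finite-sample exactness. The needed symmetry is instead \emph{assumed} at the level of the variance summaries --- Condition \ref{distributioncondition} posits that $\text{tr}\left(\mathbf{S}_{m,\omega,r}\right)$ and $\text{tr}\left(\bm{\Sigma}_{m,\omega}\right)$ are independent draws from continuous distributions with common variance across permutations, hence under $H_0$ with a common law --- so that as $T\to\infty$ each $\hat{\vartheta}_m$ converges in distribution to a single continuous $\mathcal{F}$ (via Lemmas \ref{lemma1} and \ref{lemma2}), the empirical distribution $\hat{\mathcal{H}}$ over all $T!$ permutations converges uniformly to $\mathcal{F}$ by Glivenko--Cantelli, and $\hat{Q}_M=\hat{\mathcal{H}}_M(\hat{\vartheta}_1)$ becomes $\mathcal{F}(\hat{\vartheta}_1)\sim\text{Uniform}(0,1)$ by the probability integral transform, with the continuous mapping theorem and Slutsky stitching the limits together. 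To salvage your route you would have to either impose conditional row-exchangeability of $\mathbf{X}$ as an additional assumption (a much stronger null than the paper's, and false for autocorrelated covariates) or replace your stage one with an independence-and-common-distribution assumption equivalent to Condition \ref{distributioncondition} --- at which point you are reproducing the paper's asymptotic argument rather than an exact one.
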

	
	\subsubsection{Under the Alternative Hypothesis}\label{ss:AltNPGC}
	Define $Q$ as the true quantile for parameter $\vartheta_1$ over all $T!$ possible permutations. In the limit, the quantile estimate defined in Equation \ref{quantileestimate} converges in probability to $Q$.
	\begin{theorem}\label{theorem4}
		Under the alternative hypothesis and the listed conditions, with Condition \ref{distributioncondition} modified such that $\tau_\omega^2=0$ (\textit{i.e.}, error free observation of the data), for all $\varepsilon>0$, 
		\begin{align}
			\lim_{M\rightarrow T!} \lim_{\mathscr{R}\rightarrow\infty} \lim_{T\rightarrow\infty} \mathbb{P}\left(|
			\hat{Q}_M-Q|\leq \varepsilon\right)=1.\label{theorem4equation1}
		\end{align}
		Similarly, under the alternative hypothesis and the listed conditions, for all $\varepsilon>0$, 
		\begin{align}
			\lim_{M\rightarrow T!} \lim_{\varphi\rightarrow\infty} \lim_{T\rightarrow\infty} \mathbb{P}\left(|
			\hat{Q}_M-Q|\leq \varepsilon\right)=1.\label{theorem4equation2}
		\end{align}
	\end{theorem}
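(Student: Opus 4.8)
The plan is to exploit the consistency of the individual variation estimates established in Theorem~\ref{theorem1} and then to collapse the quantile $\hat{Q}_M$ onto a purely deterministic target through the strict ordering imposed by the alternative hypothesis. Writing $\hat{Q}_M = \frac{1}{M}\sum_{m=1}^M \mathbf{1}\{\hat{\vartheta}_m \leq \hat{\vartheta}_1\}$ from Equation~\ref{quantileestimate}, observe that under $H_A$ in Equation~\ref{finalhypothesis} the parameter $\vartheta_1$ is the strict minimum over all $T!$ permutations, so the target quantile is exactly $Q = (T!)^{-1}\sum_{i=1}^{T!}\mathbf{1}\{\vartheta_i \leq \vartheta_1\} = 1/T!$, since only the identity index $i=1$ contributes. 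The strategy is therefore to show that, with probability tending to one, the estimated ordering respects the true ordering, that is $\hat{\vartheta}_1 < \hat{\vartheta}_m$ for every $m = 2, \ldots, M$, whence $\hat{Q}_M = 1/M$ and $1/M \to 1/T! = Q$ as $M \to T!$.

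First I would fix $M$ and $\varepsilon$ and take the inner limits. For each $m \geq 2$ set the gap $\delta_m = \vartheta_m - \vartheta_1 > 0$, which is strictly positive under the alternative. Theorem~\ref{theorem1}, in its $\tau_\omega^2 = 0$ form for Equation~\ref{theorem4equation1} and in its general $\varphi \to \infty$ form for Equation~\ref{theorem4equation2}, gives $\hat{\vartheta}_1 \xrightarrow{P} \vartheta_1$ and $\hat{\vartheta}_m \xrightarrow{P} \vartheta_m$ as $T \to \infty$ followed by $\mathscr{R} \to \infty$ (respectively $\varphi \to \infty$). The event $\{\hat{\vartheta}_1 \geq \hat{\vartheta}_m\}$ forces at least one of the two estimates to deviate from its limit by $\delta_m/2$, so each such event has probability tending to zero. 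A union bound over the \emph{finite} index set $m = 2, \ldots, M$ then yields $\mathbb{P}(\hat{\vartheta}_1 < \hat{\vartheta}_m \text{ for all } m = 2, \ldots, M) \to 1$, and on this event $\hat{Q}_M = 1/M$ exactly. The continuity of the distributions in Condition~\ref{distributioncondition} rules out ties with probability one, so the indicators are evaluated away from their discontinuity at $\hat{\vartheta}_m = \hat{\vartheta}_1$ and the continuous mapping theorem applies off the jump set.

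The final step takes the outer limit $M \to T!$. Having established $\hat{Q}_M \xrightarrow{P} 1/M$ for each fixed $M$ in the inner limits, and since $1/M \to 1/T! = Q$ deterministically, the triangle inequality $|\hat{Q}_M - Q| \leq |\hat{Q}_M - 1/M| + |1/M - Q|$ closes the argument: the first term vanishes in probability by the previous step and the second vanishes deterministically. The two displayed equations differ only in which version of Theorem~\ref{theorem1} supplies the consistency, so once the first is handled the second is immediate.

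I expect the main obstacle to be the ordering and interchange of the three limits rather than any single estimate. The iterated formulation is precisely what makes the union bound in the second step tractable: because $M$ is held fixed while $T \to \infty$ and $\mathscr{R} \to \infty$ (or $\varphi \to \infty$) are taken, the union is over finitely many permutations and no tension between the convergence rate and the cardinality of the index set arises. Were the limits taken jointly, one would need uniform control of $\mathbb{P}(\hat{\vartheta}_1 < \hat{\vartheta}_m)$ over the factorially growing permutation set, which would demand exponential concentration of the $\hat{\vartheta}_m$ rather than the bare consistency supplied by Theorem~\ref{theorem1}. Retaining the sequential order therefore trades a joint, uniform statement for the weaker but sufficient sequential convergence asserted in Equations~\ref{theorem4equation1} and~\ref{theorem4equation2}.
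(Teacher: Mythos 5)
Your proposal is correct and follows essentially the same route as the paper's proof: both rest on the consistency of each $\hat{\vartheta}_m$ supplied by Theorem \ref{theorem1} (in its $\tau_\omega^2=0$ and $\varphi\rightarrow\infty$ forms for the two displayed limits), a union bound over the finitely many permutations $m=1,\ldots,M$ held fixed while the inner limits are taken, preservation of the true ordering of the $\vartheta_m$ on the resulting high-probability event, and only then the outer limit $M\rightarrow T!$. Your explicit identification $\hat{Q}_M = 1/M$ on the good event and $Q = 1/T!$ under the strict alternative is simply a streamlined version of the paper's two-sided $\min$/$\max$ bounds built from the uniform gap $\delta$ and the maximal estimation error $\eta_M$ under the condition $2\eta_M<\delta$.
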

	
	\section{Simulation Study}\label{se:NPGCsims}
	The ability of the permutation-based methodology of NPGC (similar to the decision rule formulation of TCDF) to detect the presence of functional connectivity and control for false positive results is evaluated with a consistent FNN framework (like the cMLP nonlinear transformation). All methods featurize the data to dimension $N=100$. Three of the five comparison methods use in-sample methodology and variants of the lasso penalty within cMLP models \citep{tank22, tibshirani96}. The other two employ random FNN generation like NPGC, and in-sample testing from comparison of restricted and unrestricted models or the substitution of random Gaussian noise.
	\par
	A group decision process is considered where the covariate set $\mathbf{X}$ is (or is not) collectively Granger causal for the response $\mathbf{Y}$. Ability to detect a Granger causal result will almost certainly improve with a more sophisticated nonlinear structure; relative differences between methods are demonstrated at a baseline level. A full list of included methods is given in Table \ref{table1}. 
	\begin{table}[htb]
		\centering
		\begin{tabular}{cl}
			\hline
			(\romannumeral1) &  Nonlinear Permuted Granger Causality (NPGC)\\
			(\romannumeral2) &  cMLP, Group Lasso Penalty (cMLP-GL)\\
			(\romannumeral3) & cMLP, Group Sparse Group Lasso Penalty (cMLP-GSGL)\\
			(\romannumeral4) &  cMLP, Hierarchical Lasso Penalty (cMLP-H)\\
			(\romannumeral5) & Restricted vs. Unrestricted Models (R/U)\\
			(\romannumeral6) & Gaussian Noise Substitution (GNS)\\
			\hline
		\end{tabular}
		\caption{Testing frameworks for Granger causal inference simulations.}
		\label{table1}
	\end{table}
	\par
	The lasso type objectives of cMLP methods (\romannumeral2) - (\romannumeral4) penalize nonzero rows of the matrix $\mathbf{W}^0$ in the formulation of Equation \ref{eq:deephidden1} (or $\mathbf{W}$ in Equation \ref{featurizer}). The corresponding null and alternative hypotheses are adjusted to those shown in Equation \ref{lassohypothesis}, and a Granger
	causal connection is present for a covariate set if any corresponding rows of $\mathbf{W}^0$ contain nonzero components \citep{tank22}.
	\begin{align}
		H_0 &: \mathbf{W}^0_j = \mathbf{0} \text{ for all $j$ corresponding to the covariate set $\mathbf{X}$}\nonumber\\
		H_A &: \mathbf{W}^0_j \neq \mathbf{0} \text{ for at least one $j$ corresponding to the covariate set $\mathbf{X}$}\label{lassohypothesis}
	\end{align}
	A group lasso penalty is applied to each variable $j$ in the model matrix $\mathbf{W}^0$ corresponding to the covariate set $\mathbf{X}$ over all time lags $t,\ldots,t-\gamma$ \citep{yuan06}. As in \citet{tank22}, define $\mathbf{W}^0_{j}=\left[\mathbf{W}^0_{j,t}\;\cdots\;\mathbf{W}^0_{j,t-\gamma}\right]$, with each entry as the weights of the model matrix row for variable $j$ and lagged data up to $t-\gamma$. Penalties for the cMLP methods take the general form $\lambda\sum_{j=1}^p\beta\left(\mathbf{W}^0_{j}\right)$, with specific $\beta$ for each defined below.
	\begin{align}
		\beta_{GL}\left(\mathbf{W}^0_{j}\right) &= \left\|\mathbf{W}^0_{j}\right\|_F\label{penalty1}\\
		\beta_{GSGL}\left(\mathbf{W}^0_{j}\right) &= \left\|\mathbf{W}^0_{j}\right\|_F + \sum_{k=0}^\gamma \left\|\mathbf{W}^0_{j,t-k}\right\|_2\label{penalty2}\\
		\beta_{H}\left(\mathbf{W}^0_{j}\right) &= \sum_{k=0}^\gamma \left\|\left[\mathbf{W}^0_{j,t-k}\;\cdots\;\mathbf{W}^0_{j,t-\gamma}\right]\right\|_F\label{penalty3}
	\end{align}
	The group sparse group lasso penalty combines sparsity of included variables and their lagged values like in \citet{simon13}, and the novel hierarchical penalty of \citet{tank22} retains information about the natural ordering of the variables, encouraging solutions where for some lag $k^*$, $k>k^*$ implies $\mathbf{W}^0_{j,t-k}=\mathbf{0}$. A regularization parameter of $\lambda=0.5$ is chosen based on a cross-validation like trial and error approach. Conclusions drawn from these methods can vary greatly depending on the chosen $\lambda$; smaller values retain all matrix entries and larger values penalize the matrix to zero. Effects of varying $\lambda$ are shown for the application in Section \ref{se:NPGCapplication}.
	\par
	Two additional naive methods are included for comparison to NPGC. The in-sample restricted and unrestricted method examines the ratio of model residuals $\hat{\sigma}^2_\text{Res} / \hat{\sigma}^2_\text{Unres}$ in a randomly generated FNN. The in-sample Gaussian noise substitution is methodologically similar, but instead substitutes Gaussian white noise in place of the covariate set $\mathbf{X}$ to examine $\hat{\sigma}^2_\text{Noise} / \hat{\sigma}^2_\text{Unres}$. The three model errors are shown in Equations \ref{res1} to \ref{res3}, where $\mathbf{E} \sim \mathcal{MN}_{T\times p}\left(\mathbf{0}, \mathbf{I}, \mathbf{I}\right)$, and the corresponding $\hat{\sigma}^2$ shown in Equation \ref{sigma2}. 
	\begin{align}
		\mathbf{U}_{\text{Unres},r} &= \mathbf{Y} - \tanh\left(\left[\mathbf{1}\;\mathbf{Y}_\text{lag}\;\mathbf{Z}\;\mathbf{X}\right]\mathbf{W}_r\right) \mathbf{W}^L_{\text{Unres},r}\label{res1}\\
		\mathbf{U}_{\text{Res},r} &= \mathbf{Y} - \tanh\left(\left[\mathbf{1}\;\mathbf{Y}_\text{lag}\;\mathbf{Z}\;\mathbf{0}\right]\mathbf{W}_r\right) \mathbf{W}^L_{\text{Res},r}\label{res2}\\
		\mathbf{U}_{\text{Noise},r} &= \mathbf{Y} - \tanh\left(\left[\mathbf{1}\;\mathbf{Y}_\text{lag}\;\mathbf{Z}\;\mathbf{E}\right]\mathbf{W}_r\right) \mathbf{W}^L_{\text{Noise},r}\label{res3}\\
		\hat{\sigma}_r^2 &= T^{-1}\mathbf{U}_{r}'\mathbf{U}_{r}\label{sigma2}
	\end{align}
	\par
	The individual terms in Equation \ref{sigma2} follow chi-square distributions with degrees of freedom $T-N$, and each ratio for an individual generated FNN is distributed $\mathcal{F}_{T-N,T-N}$. The sum of a large number of these independent (via Condition \ref{modelerrorcondition}) statistics, all random FNNs $r=1,\ldots,\mathscr{R}$, is approximately normal. As $\mathscr{R}\rightarrow\infty$ under the null hypothesis,
	\begin{align}
		\mathscr{R}^{-1}\sum_{r=1}^\mathscr{R} \frac{\hat{\sigma}_{0,r}^2}{\hat{\sigma}_{1,r}^2} \xrightarrow{D} \mathcal{N}\left[\frac{T-N}{T-N-2}, \frac{4(T-N)^2(T-N-1)}{\mathscr{R}(T-N)(T-N-2)^2(T-N-4)}\right].
	\end{align}
	This is used to formulate a naive decision rule with a large number ($\mathscr{R}=1000$) of randomly generated FNNs.
	\par
	The NPGC methodology is performed with $M=400$ permutations, $K=5$ cross validation folds, and $\mathscr{R}=50$ randomly generated FNNs. The methods are successful when correctly flagging a Granger causal result when direct functional dependence is present, or correctly labelling a non-causal result when it is absent, leading to the potential outcomes in Table \ref{table2}. For the NPGC and naive ratio methods, a result is flagged if the quantile estimate of the associated statistic under the null hypotheses is under a specified level $\alpha$. The lasso-type methods lack the direct translation to a traditional hypothesis testing framework.
	\begin{table}[htb]
			\centering
			\begin{tabular}{ccc}
				\hline
				& \multicolumn{2}{c}{Truth} \\
				Result &  Granger causal ($\text{GC} = 1$) & Not causal ($\text{GC} = 0$)\\\hline
				Granger causal ($\text{GC} = 1$) & $\rho_{1} = \sum_{\mathcal{S}}\mathbf{1}\left\{\mathcal{D} = 1\right\} / N_1$ & $\rho_{10}=\sum_{\mathcal{S}}\mathbf{1}\left\{\mathcal{D} = 1\right\} / N_0$ \\
				Not causal ($\text{GC} = 0$) & $\rho_{01}=\sum_{\mathcal{S}}\mathbf{1}\left\{\mathcal{D} = 0\right\} / N_1$ & $\rho_{0} = \sum_{\mathcal{S}}\mathbf{1}\left\{\mathcal{D} = 0\right\} / N_0$ \\\hline
			\end{tabular}
			\caption{Proportion of potential outcomes for the decision $\mathcal{D}$ of each of the methods shown in Table \ref{table1}. The set $\mathscr{S}$ represents the space of all simulations (both $\text{GC}=1$ and $\text{GC}=0$) within a given setting, $N_{1}=\sum_{\mathscr{S}}\mathbf{1}\left\{\text{GC} = 1\right\}$, and $N_{0}=\sum_{\mathscr{S}}\mathbf{1}\left\{\text{GC} = 0\right\}$.}
		\label{table2}
	\end{table}
	\subsection{Simulation Settings}\label{ss:NPGCsimsettings}
	NPGC and the comparator methods in Table \ref{table1} are tested on two nonlinear processes: Lorenz-96 models of \citet{karimi10}, and TAR models introduced by \citet{tong80}. The $p$-dimensional Lorenz-96 model ($p\geq4$) is governed by the continuous differential equation 
	\begin{align}
		\frac{dx_{i,t}}{dt} &= \left(x_{i+1,t}-x_{i-2,t}\right)x_{i-1,t} -x_{i,t} + F,\label{lorenz96}
	\end{align}
	with $i=1,\ldots,p$ and boundary series $x_{-1,t} = x_{p-1,t}$, $x_{0,t} = x_{p,t}$, and $x_{p+1,t} = x_{1,t}$. $F$ is a forcing constant generated as $F\sim\text{Uniform}(5, 20)$ with higher values introducing a larger degree of nonlinear, chaotic behavior. For data generation, a sampling rate of $\Delta t = 0.05$ and a burn-in period of 500 time steps are used. 
	\par
	The TAR(2) model is governed by a similar skeleton to a VAR, but allows for changes in parameters based on the value of a threshold variable \citep{tong80}.
	\begin{align}
		\begin{bmatrix}
			\mathbf{x}_t\\
			\mathbf{x}_{t-1}
		\end{bmatrix} &= 
		\begin{bmatrix}
			\mathbf{A}^{(k)}_1 & \mathbf{A}^{(k)}_2\\
			\mathbf{I} & \mathbf{0}
		\end{bmatrix}
		\begin{bmatrix}
			\mathbf{x}_{t-1}\\
			\mathbf{x}_{t-2}
		\end{bmatrix} + 
		\begin{bmatrix}
			\bm{\varepsilon}^{(k)}_t\\
			\mathbf{0}
		\end{bmatrix}\label{tarmodel}.
	\end{align}
	The threshold is defined on the values in $\mathbf{x}_{t-2}$ with regime $k=1$ corresponding to the case when $\sum_{i=1}^p x_{i,t-2} \leq 0$ and $k=2$ to $\sum_{i=1}^p x_{i,t-2} > 0$. For the TAR(2) process, $\bm{\varepsilon}^{(k)}_t\sim \mathcal{N}_p\left(\mathbf{0},\sigma_k^2\mathbf{I}\right)$ where $\sigma_1=0.5$ and $\sigma_2 = 0.2$. Each $\mathbf{A}^{(k)}_1,\mathbf{A}^{(k)}_2\in\mathbb{R}^{p\times p}$ has elements $\mathbf{A}^{(k)}=(a^{(k)}_{ij})\sim\text{Uniform}(-0.5, 0.5)$ that are thresholded to zero if $|a^{(k)}_{ij}| < 0.1$, and their spectral radii are at most 0.8 to ensure stationary data generation. Like the Lorenz-96 data generation procedure, there is a burn-in period of 500 time steps.
	\par
	Two groups of dependent data observations are generated with $p=6$ (one from each process), containing $\mathbf{x}_{i}$ for $i=1,\ldots,12$. The series $i=1,\ldots,6$ are labelled as the Lorenz-96 process and $i=7,\ldots,12$ as the TAR(2) process. The samples are of length $T=250$, $500$, and $1000$ after burn-in and truncation for lag selection to include in the model, $\gamma = 3$. One index is selected to serve as the response, three to serve as the set of additional variables $\mathbf{Z}$ and three for the covariate set $\mathbf{X}$ depending on a Granger causal designation. A dataset is designated ``Granger causal'' if at least one series in $\mathbf{X}$ has a direct influence on the selected response (\textit{i.e.}, one series in $\mathbf{X}$ is chosen from the same generating process). For each of the settings in Table \ref{tablesim}, 200 datasets are generated for a total of 4800 trials.
	\begin{table}[htb]
			\centering
			\begin{tabular}{cccc}\hline
				Response Variable & $T$ &  \# Causal in $\mathbf{X}$ & \# Causal in $\mathbf{Z}$\\\hline
				Lorenz-96 & $\left\{250,500,1000\right\}$ & $\left\{0, 2\right\}$ & $\left\{0, 2\right\}$\\
				TAR(2) & $\left\{250,500,1000\right\}$ & $\left\{0, 2\right\}$ & $\left\{0, 2\right\}$\\\hline
			\end{tabular}
			\caption{NPGC simulation settings. A variable is labelled \textit{causal} if from the same generating process as the response $\mathbf{Y}$. Variables are randomly selected from within their \textit{causal} or \textit{non-causal} groups.}
		\label{tablesim}
	\end{table}
	
	\subsection{Simulation Results}\label{ss:NPGCsimresults}
	Simulation results are split by Granger causal designation and process of the chosen response variable. The control for false positive results and the ability to detect a Granger causal connection is examined in each setting. AUROC is not considered as it does not implicate a decision rule a priori. Tables \ref{table3a} and \ref{table3b} list the proportion of correctly identified causal relationships $\rho_1$ for TAR(2) and Lorenz-96 response variables, respectively. Results for the finer grid of designations from Table \ref{tablesim} are given in the supplement.
	\begin{table}[htb]
			\centering
			\begin{tabular}{ccccccc}\hline
				& NPGC & cMLP-GL & cMLP-GSGL & cMLP-H & R/U & GNS \\\hline
				$T=250$ & 0.915 & 0.870 & 0.155 & 0.555 & 0.752 & 0.989 \\
				$T=500$ & 0.965 & 0.852 & 0.132 & 0.558 & 0.842 & 0.998 \\
				$T=1000$ & 0.985 & 0.850 & 0.122 & 0.500 & 0.850 & 1.000 \\\hline
			\end{tabular}
			\caption{TAR(2) response compiled simulation results. Proportion of correctly labelled Granger causal outcomes ($\text{GC} = 1$, $\rho_1$ in Table \ref{table2}).}
		\label{table3a}
	\end{table}
	\begin{table}
		\centering
			\begin{tabular}{ccccccc}\hline
				& NPGC & cMLP-GL & cMLP-GSGL & cMLP-H & R/U & GNS \\\hline
				$T=250$ & 0.978 & 0.765 & 0.232 & 0.482 & 0.025 & 1.000 \\
				$T=500$ & 0.988 & 0.728 & 0.160 & 0.442 & 0.001 & 1.000 \\
				$T=1000$ & 0.992 & 0.715 & 0.128 & 0.412 & 0.020 & 1.000 \\\hline
			\end{tabular}
			\caption{Lorenz-96 response compiled simulation results. Proportion of correctly labelled Granger causal outcomes ($\text{GC} = 1$, $\rho_1$ in Table \ref{table2}).}
		\label{table3b}
	\end{table}
	\par
	Figures \ref{figure1a} and \ref{figure1b} plot the observed Type 1 error by specified level of test, with a target 45 degree line included for reference. 
	\begin{figure}[htb]
			\centering
			\includegraphics[width = \textwidth]{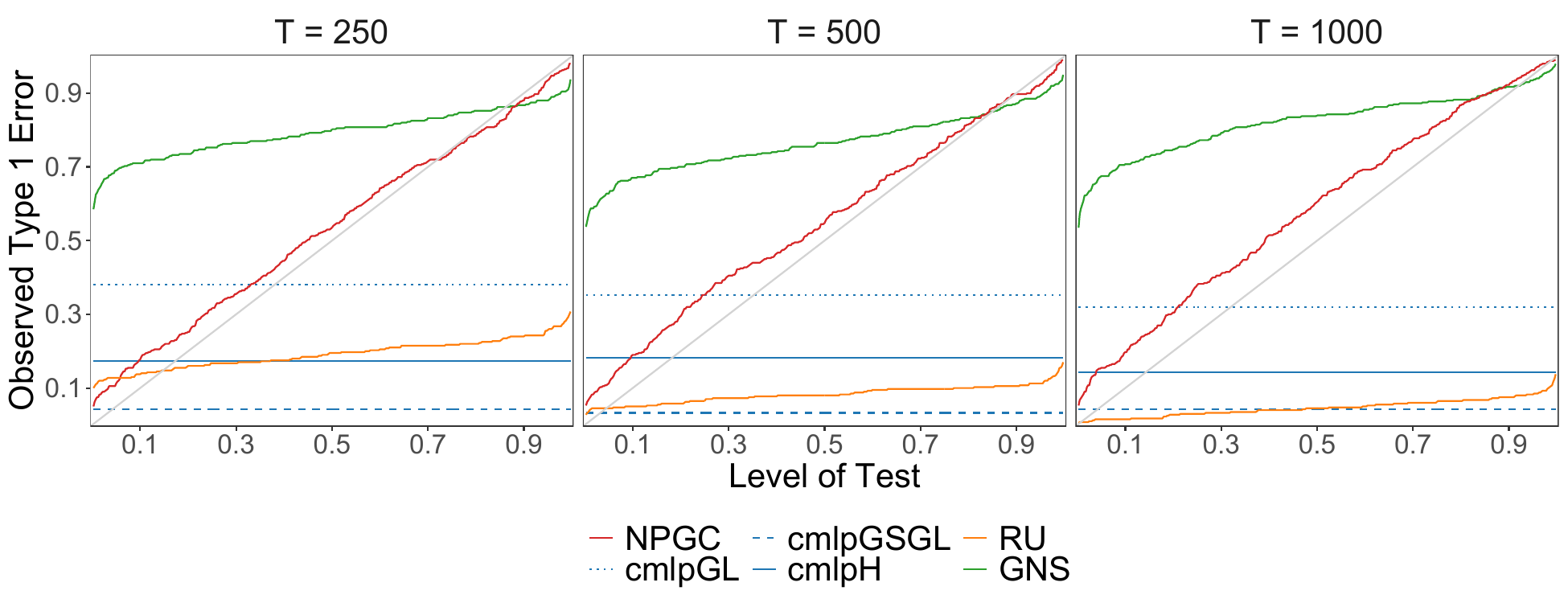}
			\caption{Type 1 error control for TAR(2) simulations. Methods that adhere to a chosen level of test will lie closer to the uniform CDF (gray) included for clarity that indicates the reference quantile $\alpha$.}
			\label{figure1a}
	\end{figure}
	\begin{figure}[htb]
			\centering
			\includegraphics[width = \textwidth]{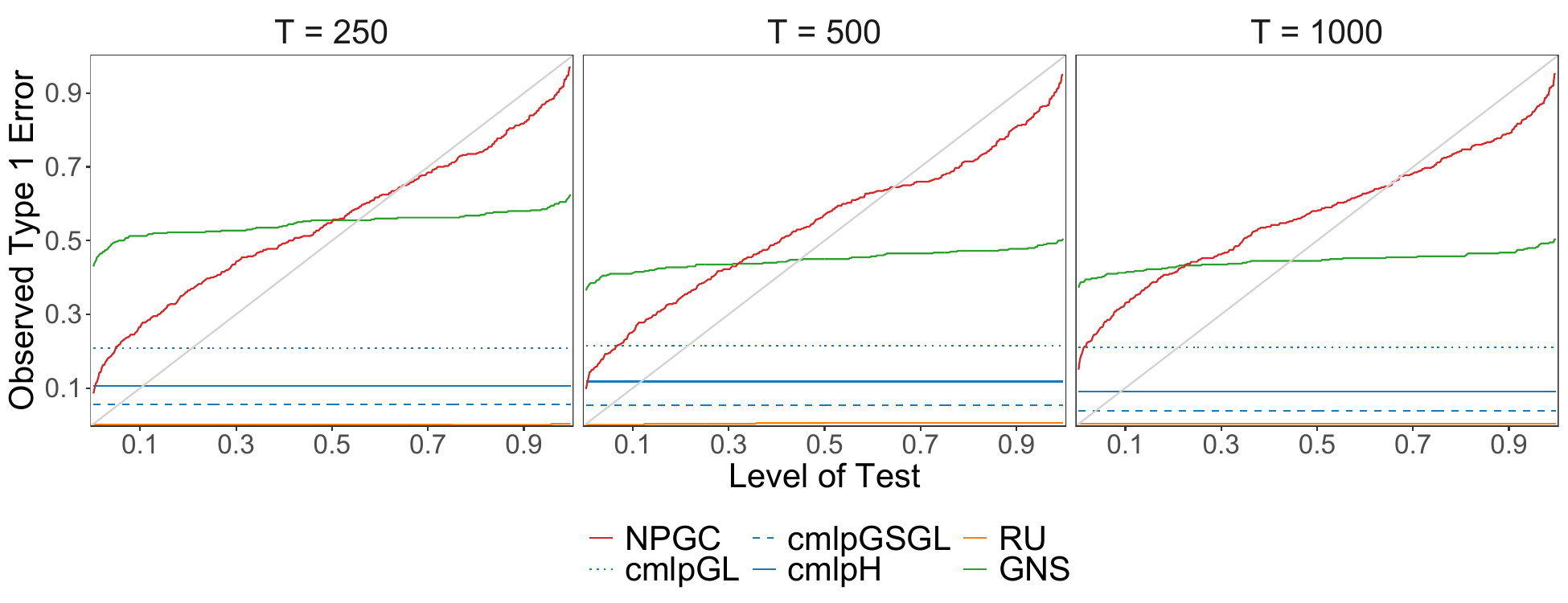}
			\caption{Type 1 error control for Lorenz-96 simulations. Methods that adhere to a chosen level of test will lie closer to the uniform CDF (gray) included for clarity that indicates the reference quantile $\alpha$.}
		\label{figure1b}
	\end{figure}
	The NPGC method succeeds in identifying many cases of Granger causal influence in both processes, and as expected, this ability approaches the upper limit as the number of time points increases. The cMLP methods provide spotty results, and these are greatly influenced by the selected parameter $\lambda$. Without guidance for a specific penalty selection, the global decision of ``Granger causal'' or ``non-causal'' is uncertain. The naive restricted and unrestricted model method fails to identify many Granger causal pairings in the Lorenz-96 trials. The Gaussian noise substitution method correctly identifies nearly all Granger causal pairs, but this comes at the cost of uncontrolled Type 1 error (see Figures \ref{figure1a} and \ref{figure1b}). 
	\par
	Under the null hypothesis, moderate adherence to the asymptotic properties shown in Theorem \ref{theorem3} is demonstrated. The NPGC permutation methodology tracks relatively close to the included reference CDFs in Figure \ref{figure1a} for TAR(2) simulation settings, and exhibits mild to moderate deviations for Lorenz-96 settings in Figure \ref{figure1b}. These deviations can likely be attributed to the complicated nature of the chaotic Lorenz system; the same pattern is not observed in other simulations when both groups are generated by a TAR(2) process (see additional figures in the supplement). The cMLP lasso methods are included as horizontal lines in these plots; they do not have a straightforward extension for a specific level of test, other than computationally expensive iteration over several penalties $\lambda$. The naive methods do not adhere to their theoretical Type 1 error control.
	\par
	Granger causal identification methods that use a penalized objective can be useful, but do not allow for the global decision to label a group of variables ``Granger causal'' or ``non-causal''. These methods should be applied to narrow the scope of research to individual variables of the covariate set after a global method is applied. 
	
	\section{Application Study}\label{se:NPGCapplication}
	The out-of-sample permutation framework is applied to neuronal responses to acoustic stimuli in the primary auditory cortex of an anesthetized (ketamine–medetomidine) rat. The data, taken from the Collaborative Research in Computational Neuroscience data sharing website, consists of in vivo whole-cell recordings (mV) sampled at 4kHz in response to natural sound fragments \citep{machens04, asari09}. A time region of interest between 11 and 15 seconds is isolated from experimental trial 60 of the 050802mw03 data (partially shown in Figures 2D and 2E of \citet{machens04}). The subset of data corresponds to the recordings in response to a jaguar (\textit{Panthera onca}) mating call sound fragment presented at 97.656 kHz; this is resampled to a frequency of 4kHz matching that of the response. Additional sound fragments of a Humpback whale (\textit{Megaptera novaeangliae}) and Knudsen's frog (\textit{Leptodactylus knudseni}), played in other trials throughout the experiment, are included to construct a simplistic, non-causal scenario. Figure \ref{figureapp1} displays the whole-cell recordings to the jaguar mating call and all sound fragments examined. \citet{machens04} contains additional detail on data collection and experimental methods.
	\begin{figure}[htb]
			\centering
			\includegraphics[width = \textwidth]{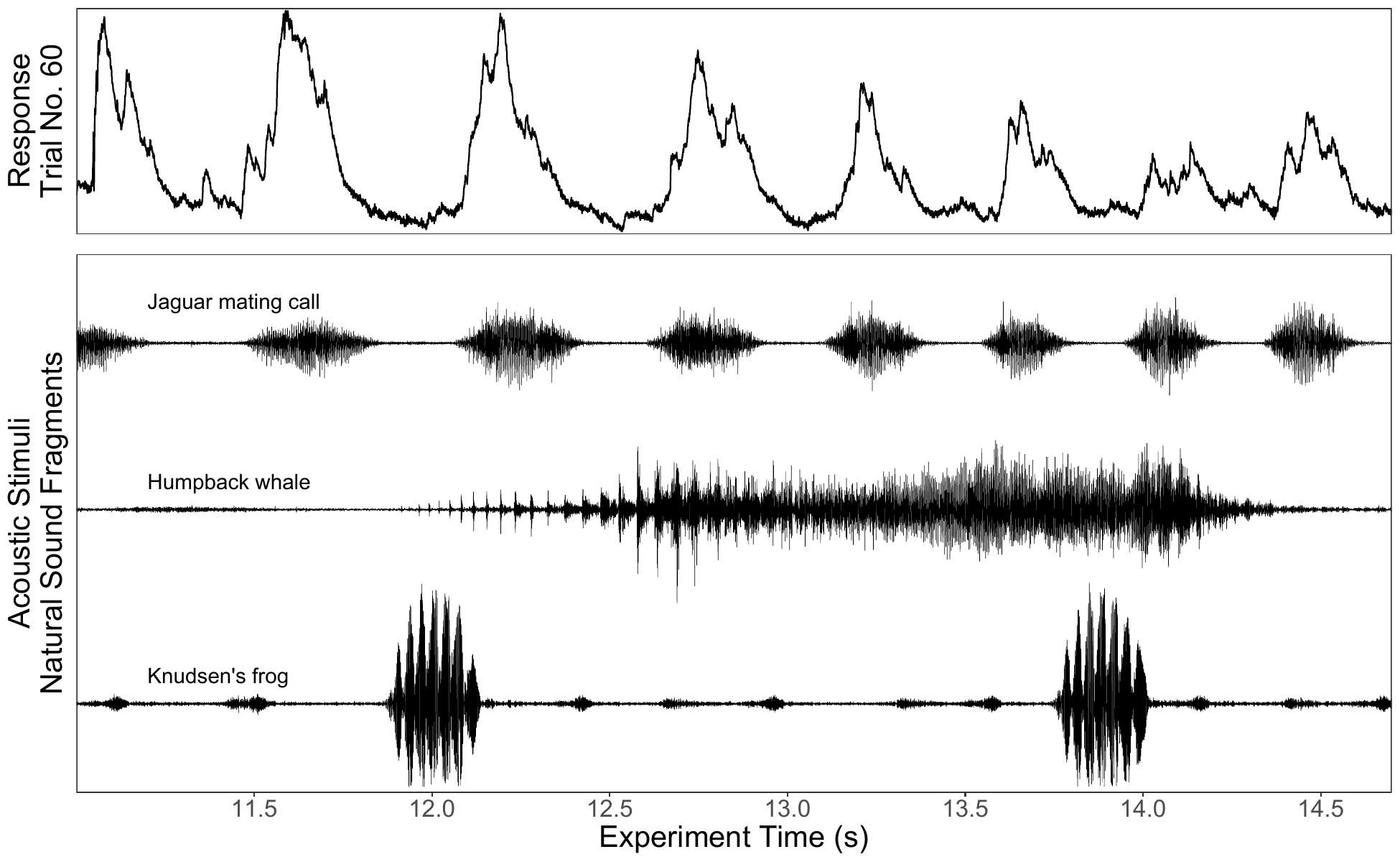}
			\caption{\textit{Top}: In vivo whole-cell recordings (mV) from the contralateral (left) primary auditory cortex of an anesthetized rat. \textit{Bottom}: Jaguar mating call acoustic stimulus fragment resampled at 4kHz. Humpback whale and Knudsen's frog acoustic stimuli are included as non-causal examples for the methodology.}
		\label{figureapp1}
	\end{figure}
	\par
	Performance of the Granger causal detection methodology is evaluated by formulating a farcical example in which any suitable method should be able to recover the correct causal structure. The methods are expected to flag the jaguar sound fragment as causal, while correctly labelling the other two included fragments non-causal. Lagged information from 30ms to 40ms is isolated for inclusion, consistent with the measured the half-maximal synaptic conductance in \citet{wehr03}. One acoustic stimulus fragment is chosen as the covariate set $\mathbf{X}$ and the others are additional variables $\mathbf{Z}$ in the model. Lagged values of the response are included up to 10ms. The NPGC method is implemented with $M=400$ permutations, $K=5$ cross validation folds, $\mathscr{R}=50$ randomly generated FNNs, and a feature dimension of $N=250$ due to the large number of covariates. The comparator penalized optimization methods are examined at the same dimension and a variety of penalty values. The selection of the penalty $\lambda$ presents a major challenge, and this value must be chosen prior to analysis; it should not be fine-tuned after the fact to produce a desired (already known) result. Results for the fabricated, naive scenario are compiled in Figure \ref{figureapp2}.
	\begin{figure}[htb]
			\centering
			\includegraphics[width = \textwidth]{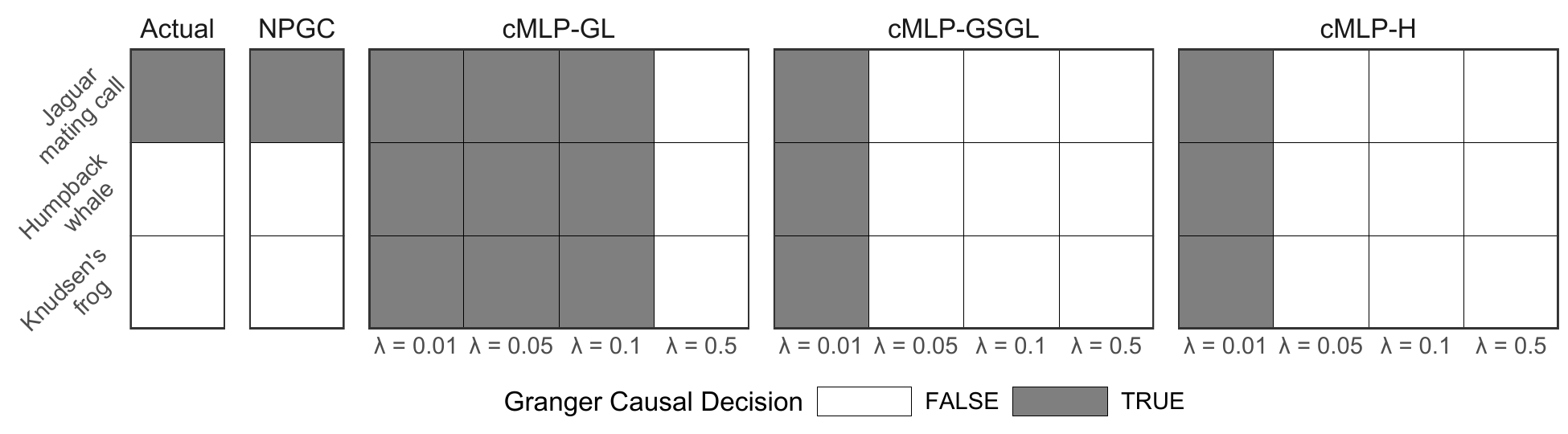}
			\caption{Estimated Granger causal relationship between each acoustic stimulus and primary auditory cortex response. NPGC quantiles $\hat{Q}_M = 0.0025, 1.000, 0.4425,$ from top to bottom. The NPGC method recovers the correct causal structure, while the lasso-based cMLP methods fail to isolate the jaguar stimulus.}
		\label{figureapp2}
	\end{figure}
	Permutation based methodology extracts the correct relationship between the jaguar mating call sound fragment and the whole-cell recording response, and the penalized variable selection approaches do not.
	
	\section{Discussion}\label{se:NPGCdiscussion}
	The NPGC methodology illustrates the use of the permutation-based shift from in-sample to out-of-sample testing. Permutation tests are used widely in literature, and \citet{nauta19} implement a decision rule similar to NPGC. This manuscript explicitly defines Granger causality in this framework, and it provides a theoretical analysis of the corresponding variance estimates and decision rule. The shift allows for control in identifying useful functional relationships when overfitting from an artificial network becomes a concern, and removes the burden of specifying regularization parameters that have a major impact on the observed outcome. Once a Granger causal outcome has been determined, penalized variable selection techniques provide a practical screening method for identifying the prominent relationships, but they do not effectively estimate the global presence or absence of functional connectivity.
	\par
	The permutation framework is able to detect the presence of functional connectivity, and provide a safeguard against misidentification of data-specific noise as functional dependence when the artificial network is overfit. Alternative formulations of the featurizing function $\Psi$ in Equation \ref{featurizer} can utilize the same methodological structure and retain the theoretical guarantees provided the transformation meets the listed conditions and the construction allows for row-wise permutation without breaking the dependence structure across variables in $\mathbf{X}$. Misspecification of the dimension of the feature space should only produce additional false negative results as long as $N$ is less than the number of data points used in model estimation. Using a large enough dimension to capture any potential nonlinear dependence structure that may exist in the dataset is suggested. The NPGC method may exhibit slight undercoverage for a chosen level of test, but \emph{minor} deviations in this realm are not of great concern if the result is correctly interpreted as a potential functional connection and not an outright causal effect. The permutation method circumvents the need for a penalty parameter selection, and provides ease of extension to multiple testing problems and control of family-wise error rate. 
	\par
	Potential misuses of Granger causality include, but are not limited to, repeated application to subsets of a selected covariate group without adequate Type 1 error correction, disregard for the conditional nature of the inferential conclusion, attempts at individual rather than collective covariate inference (unless model specification is complete and exact), inclusion of a covariate without careful scientific or logical reasoning, and use as an outright mechanism for identifying causal relationships rather than predictive links for future study. Prudent selection of the length of lag response included in the covariate matrix is required, and two values may produce different results. In nonlinear processes, selection of the relevant history of the response for inclusion in the model remains an area of future study.
	
	\bibliography{NPGC_arXiv}

\begin{thebibliography}{}

\bibitem[\protect\citeauthoryear{Amblard and Michel}{Amblard and
  Michel}{2011}]{amblard11}
Amblard, P.~O. and O.~J.~J. Michel (2011).
\newblock On directed information theory and granger causality graphs.
\newblock {\em Journal of Computational Neuroscience\/}~{\em 30\/}(1), 7--16.

\bibitem[\protect\citeauthoryear{Anderson and Legendre}{Anderson and
  Legendre}{1999}]{anderson99}
Anderson, M.~J. and P.~Legendre (1999).
\newblock An empirical comparison of permutation methods for tests of partial
  regression coefficients in a linear model.
\newblock {\em Journal of statistical computation and simulation\/}~{\em
  62\/}(3), 271--303.

\bibitem[\protect\citeauthoryear{Anderson and Robinson}{Anderson and
  Robinson}{2001}]{anderson01}
Anderson, M.~J. and J.~Robinson (2001).
\newblock Permutation tests for linear models.
\newblock {\em Australian \& New Zealand Journal of Statistics\/}~{\em
  43\/}(1), 75--88.

\bibitem[\protect\citeauthoryear{Asari, Wehr, Machens, and Zador}{Asari
  et~al.}{2009}]{asari09}
Asari, H., M.~Wehr, C.~K. Machens, and A.~M. Zador (2009).
\newblock Auditory cortex and thalamic neuronal responses to various natural
  and synthetic sounds.
\newblock Collaborative Research in Computational Neuroscience.

\bibitem[\protect\citeauthoryear{Barnett and Seth}{Barnett and
  Seth}{2011}]{barnett11}
Barnett, L. and A.~K. Seth (2011).
\newblock Behaviour of granger causality under filtering: Theoretical
  invariance and practical application.
\newblock {\em Journal of Neuroscience Methods\/}~{\em 201\/}(2), 404--419.

\bibitem[\protect\citeauthoryear{Bernanke}{Bernanke}{1990}]{bernanke90}
Bernanke, B.~S. (1990, 10).
\newblock The federal funds rate and the channels of monetary transnission.
\newblock Working Paper 3487, National Bureau of Economic Research.

\bibitem[\protect\citeauthoryear{Biswas and Ombao}{Biswas and
  Ombao}{2022}]{biswas22}
Biswas, A. and H.~Ombao (2022).
\newblock Frequency-specific non-linear granger causality in a network of brain
  signals.
\newblock In {\em ICASSP 2022-2022 IEEE International Conference on Acoustics,
  Speech and Signal Processing (ICASSP)}, pp.\  1401--1405. IEEE.

\bibitem[\protect\citeauthoryear{Bressler and Seth}{Bressler and
  Seth}{2011}]{bressler11}
Bressler, S.~L. and A.~K. Seth (2011).
\newblock Wiener--granger causality: A well established methodology.
\newblock {\em NeuroImage\/}~{\em 58\/}(2), 323--329.

\bibitem[\protect\citeauthoryear{Caron and Fox}{Caron and Fox}{2017}]{caron17}
Caron, F. and E.~B. Fox (2017, 09).
\newblock Sparse graphs using exchangeable random measures.
\newblock {\em Journal of the Royal Statistical Society: Series B\/}~{\em
  79\/}(5), 1295--1366.

\bibitem[\protect\citeauthoryear{Chao, Corradi, and Swanson}{Chao
  et~al.}{2001}]{chao01}
Chao, J., V.~Corradi, and N.~R. Swanson (2001).
\newblock Out-of-sample tests for granger causality.
\newblock {\em Macroeconomic Dynamics\/}~{\em 5\/}(4), 598--620.

\bibitem[\protect\citeauthoryear{Cox~Jr. and Popken}{Cox~Jr. and
  Popken}{2015}]{cox15}
Cox~Jr., L. A.~T. and D.~A. Popken (2015).
\newblock Has reducing fine particulate matter and ozone caused reduced
  mortality rates in the united states?
\newblock {\em Annals of epidemiology\/}~{\em 25\/}(3), 162--173.

\bibitem[\protect\citeauthoryear{Cybenko}{Cybenko}{1989}]{cybenko89}
Cybenko, G. (1989).
\newblock Approximation by superpositions of a sigmoidal function.
\newblock {\em Mathematics of control, signals and systems\/}~{\em 2\/}(4),
  303--314.

\bibitem[\protect\citeauthoryear{DiCiccio and Romano}{DiCiccio and
  Romano}{2017}]{diciccio17}
DiCiccio, C.~J. and J.~P. Romano (2017).
\newblock Robust permutation tests for correlation and regression coefficients.
\newblock {\em Journal of the American Statistical Association\/}~{\em
  112\/}(519), 1211--1220.

\bibitem[\protect\citeauthoryear{Duggento, Guerrisi, and Toschi}{Duggento
  et~al.}{2021}]{duggento21}
Duggento, A., M.~Guerrisi, and N.~Toschi (2021).
\newblock Echo state network models for nonlinear granger causality.
\newblock {\em Philosophical Transactions of the Royal Society A: Mathematical,
  Physical and Engineering Sciences\/}~{\em 379\/}(2212), 20200256.

\bibitem[\protect\citeauthoryear{Florens and Mouchart}{Florens and
  Mouchart}{1982}]{florens82}
Florens, J.~P. and M.~Mouchart (1982).
\newblock A note on noncausality.
\newblock {\em Econometrica\/}~{\em 50\/}(3), 583--591.

\bibitem[\protect\citeauthoryear{Friston}{Friston}{1994}]{friston94}
Friston, K.~J. (1994).
\newblock Functional and effective connectivity in neuroimaging: a synthesis.
\newblock {\em Human brain mapping\/}~{\em 2\/}(1-2), 56--78.

\bibitem[\protect\citeauthoryear{Friston, Moran, and Seth}{Friston
  et~al.}{2013}]{friston13}
Friston, K.~J., R.~Moran, and A.~K. Seth (2013).
\newblock Analysing connectivity with granger causality and dynamic causal
  modelling.
\newblock {\em Current Opinion in Neurobiology\/}~{\em 23\/}(2), 172--178.

\bibitem[\protect\citeauthoryear{Geweke}{Geweke}{1982}]{geweke82}
Geweke, J. (1982).
\newblock Measurement of linear dependence and feedback between multiple time
  series.
\newblock {\em Journal of the American Statistical Association\/}~{\em
  77\/}(378), 304--313.

\bibitem[\protect\citeauthoryear{Goodfellow, Bengio, and Courville}{Goodfellow
  et~al.}{2016}]{goodfellow16}
Goodfellow, I., Y.~Bengio, and A.~Courville (2016).
\newblock {\em Deep Learning}.
\newblock The MIT Press.

\bibitem[\protect\citeauthoryear{Granger}{Granger}{1969}]{granger69}
Granger, C. W.~J. (1969).
\newblock Investigating causal relations by econometric models and
  cross-spectral methods.
\newblock {\em Econometrica\/}~{\em 37\/}(3), 424--438.

\bibitem[\protect\citeauthoryear{Granger}{Granger}{1980}]{granger80}
Granger, C. W.~J. (1980).
\newblock Testing for causality: A personal viewpoint.
\newblock {\em Journal of Economic Dynamics and Control\/}~{\em 2}, 329--352.

\bibitem[\protect\citeauthoryear{Granger}{Granger}{1988}]{granger88}
Granger, C. W.~J. (1988).
\newblock Some recent development in a concept of causality.
\newblock {\em Journal of Econometrics\/}~{\em 39\/}(1), 199--211.

\bibitem[\protect\citeauthoryear{Graves}{Graves}{2012}]{graves12}
Graves, A. (2012).
\newblock {\em Supervised Sequence Labelling with Recurrent Neural Networks}.
\newblock Studies in Computational Intelligence. Springer Berlin, Heidelberg.

\bibitem[\protect\citeauthoryear{Gupta and Nagar}{Gupta and
  Nagar}{2018}]{gupta18}
Gupta, A.~K. and D.~K. Nagar (2018).
\newblock {\em Matrix variate distributions}.
\newblock Chapman and Hall/CRC.

\bibitem[\protect\citeauthoryear{Henderson and Michailidis}{Henderson and
  Michailidis}{2014}]{henderson14}
Henderson, J. and G.~Michailidis (2014, 04).
\newblock Network reconstruction using nonparametric additive ode models.
\newblock {\em PLOS ONE\/}~{\em 9\/}(4), 1--15.

\bibitem[\protect\citeauthoryear{Hochreiter and Schmidhuber}{Hochreiter and
  Schmidhuber}{1997}]{hochreiter97}
Hochreiter, S. and J.~Schmidhuber (1997).
\newblock Long short-term memory.
\newblock {\em Neural computation\/}~{\em 9\/}(8), 1735--1780.

\bibitem[\protect\citeauthoryear{Holland}{Holland}{1986}]{holland86}
Holland, P.~W. (1986).
\newblock Statistics and causal inference.
\newblock {\em Journal of the American Statistical Association\/}~{\em
  81\/}(396), 945--960.

\bibitem[\protect\citeauthoryear{Hornik}{Hornik}{1991}]{hornik91}
Hornik, K. (1991).
\newblock Approximation capabilities of multilayer feedforward networks.
\newblock {\em Neural Networks\/}~{\em 4\/}(2), 251--257.

\bibitem[\protect\citeauthoryear{Horvath, Sultan, and Ombao}{Horvath
  et~al.}{2022}]{horvath22}
Horvath, S., M.~S. Sultan, and H.~Ombao (2022).
\newblock Granger causality using neural networks.
\newblock {\em arXiv preprint arXiv:2208.03703\/}.

\bibitem[\protect\citeauthoryear{Inoue and Kilian}{Inoue and
  Kilian}{2005}]{inoue05}
Inoue, A. and L.~Kilian (2005).
\newblock In-sample or out-of-sample tests of predictability: Which one should
  we use?
\newblock {\em Econometric Reviews\/}~{\em 23\/}(4), 371--402.

\bibitem[\protect\citeauthoryear{Ivanov and Kilian}{Ivanov and
  Kilian}{2005}]{ivanov05}
Ivanov, V. and L.~Kilian (2005).
\newblock A practitioner's guide to lag order selection for var impulse
  response analysis.
\newblock {\em Studies in Nonlinear Dynamics \& Econometrics\/}~{\em 9\/}(1),
  2.

\bibitem[\protect\citeauthoryear{Jaeger}{Jaeger}{2007}]{jaeger07}
Jaeger, H. (2007).
\newblock Echo state network.
\newblock {\em Scholarpedia\/}~{\em 2\/}(9), 2330.

\bibitem[\protect\citeauthoryear{Karimi and Paul}{Karimi and
  Paul}{2010}]{karimi10}
Karimi, A. and M.~R. Paul (2010).
\newblock Extensive chaos in the lorenz-96 model.
\newblock {\em Chaos: An Interdisciplinary Journal of Nonlinear Science\/}~{\em
  20\/}(4), 043105.

\bibitem[\protect\citeauthoryear{Khanna and Tan}{Khanna and
  Tan}{2019}]{khanna19}
Khanna, S. and V.~Y.~F. Tan (2019).
\newblock Economy statistical recurrent units for inferring nonlinear granger
  causality.
\newblock {\em arXiv preprint arXiv:1911.09879\/}.

\bibitem[\protect\citeauthoryear{Machens, Wehr, and Zador}{Machens
  et~al.}{2004}]{machens04}
Machens, C.~K., M.~Wehr, and A.~M. Zador (2004).
\newblock Linearity of cortical receptive fields measured with natural sounds.
\newblock {\em Journal of Neuroscience\/}~{\em 24\/}(5), 1089--1100.

\bibitem[\protect\citeauthoryear{Marcinkevi{\v c}s and Vogt}{Marcinkevi{\v c}s
  and Vogt}{2021}]{marcinkevics21}
Marcinkevi{\v c}s, R. and J.~E. Vogt (2021).
\newblock Interpretable models for granger causality using self-explaining
  neural networks.
\newblock {\em arXiv preprint arXiv:2101.07600\/}.

\bibitem[\protect\citeauthoryear{Marinazzo, Liao, Chen, and
  Stramaglia}{Marinazzo et~al.}{2011}]{marinazzo11}
Marinazzo, D., W.~Liao, H.~Chen, and S.~Stramaglia (2011).
\newblock Nonlinear connectivity by granger causality.
\newblock {\em NeuroImage\/}~{\em 58\/}(2), 330--338.

\bibitem[\protect\citeauthoryear{Marinazzo, Pellicoro, and
  Stramaglia}{Marinazzo et~al.}{2008}]{marinazzo08}
Marinazzo, D., M.~Pellicoro, and S.~Stramaglia (2008).
\newblock Kernel method for nonlinear granger causality.
\newblock {\em Physical review letters\/}~{\em 100\/}(14), 144103.

\bibitem[\protect\citeauthoryear{Maziarz}{Maziarz}{2015}]{maziarz15}
Maziarz, M. (2015).
\newblock A review of the granger-causality fallacy.
\newblock {\em The journal of philosophical economics: Reflections on economic
  and social issues\/}~{\em 8\/}(2), 86--105.

\bibitem[\protect\citeauthoryear{Meinshausen and B{\"u}hlmann}{Meinshausen and
  B{\"u}hlmann}{2010}]{meinshausen10}
Meinshausen, N. and P.~B{\"u}hlmann (2010, 08).
\newblock Stability selection.
\newblock {\em Journal of the Royal Statistical Society: Series B\/}~{\em
  72\/}(4), 417--473.

\bibitem[\protect\citeauthoryear{Nauta, Bucur, and Seifert}{Nauta
  et~al.}{2019}]{nauta19}
Nauta, M., D.~Bucur, and C.~Seifert (2019).
\newblock Causal discovery with attention-based convolutional neural networks.
\newblock {\em Machine Learning and Knowledge Extraction\/}~{\em 1\/}(1),
  312--340.

\bibitem[\protect\citeauthoryear{Ng and Perron}{Ng and Perron}{2001}]{ng01}
Ng, S. and P.~Perron (2001).
\newblock Lag length selection and the construction of unit root tests with
  good size and power.
\newblock {\em Econometrica\/}~{\em 69\/}(6), 1519--1554.

\bibitem[\protect\citeauthoryear{Nicholson, Matteson, and Bien}{Nicholson
  et~al.}{2017}]{nicholson17}
Nicholson, W.~B., D.~S. Matteson, and J.~Bien (2017).
\newblock Varx-l: Structured regularization for large vector autoregressions
  with exogenous variables.
\newblock {\em International Journal of Forecasting\/}~{\em 33\/}(3), 627--651.

\bibitem[\protect\citeauthoryear{Peters, B{\"u}hlmann, and Meinshausen}{Peters
  et~al.}{2016}]{peters16}
Peters, J., P.~B{\"u}hlmann, and N.~Meinshausen (2016, 10).
\newblock Causal inference by using invariant prediction: Identification and
  confidence intervals.
\newblock {\em Journal of the Royal Statistical Society: Series B\/}~{\em
  78\/}(5), 947--1012.

\bibitem[\protect\citeauthoryear{Reid, Headley, Mill, Sanchez-Romero, Uddin,
  Marinazzo, Lurie, Vald{\'e}s-Sosa, Hanson, Biswal, et~al.}{Reid
  et~al.}{2019}]{reid19}
Reid, A.~T., D.~B. Headley, R.~D. Mill, R.~Sanchez-Romero, L.~Q. Uddin,
  D.~Marinazzo, D.~J. Lurie, P.~A. Vald{\'e}s-Sosa, S.~J. Hanson, B.~B. Biswal,
  et~al. (2019).
\newblock Advancing functional connectivity research from association to
  causation.
\newblock {\em Nature Neuroscience\/}~{\em 22\/}(11), 1751--1760.

\bibitem[\protect\citeauthoryear{Runge, Heitzig, Petoukhov, and Kurths}{Runge
  et~al.}{2012}]{runge12}
Runge, J., J.~Heitzig, V.~Petoukhov, and J.~Kurths (2012).
\newblock Escaping the curse of dimensionality in estimating multivariate
  transfer entropy.
\newblock {\em Physical review letters\/}~{\em 108\/}(25), 258701.

\bibitem[\protect\citeauthoryear{Seth, Barrett, and Barnett}{Seth
  et~al.}{2015}]{seth15}
Seth, A.~K., A.~B. Barrett, and L.~Barnett (2015).
\newblock Granger causality analysis in neuroscience and neuroimaging.
\newblock {\em Journal of Neuroscience\/}~{\em 35\/}(8), 3293--3297.

\bibitem[\protect\citeauthoryear{Shojaie and Fox}{Shojaie and
  Fox}{2022}]{shojaie22}
Shojaie, A. and E.~B. Fox (2022).
\newblock Granger causality: A review and recent advances.
\newblock {\em Annual Review of Statistics and Its Application\/}~{\em 9\/}(1),
  289--319.

\bibitem[\protect\citeauthoryear{Shojaie and Michailidis}{Shojaie and
  Michailidis}{2010}]{shojaie10}
Shojaie, A. and G.~Michailidis (2010).
\newblock Discovering graphical granger causality using the truncating lasso
  penalty.
\newblock {\em Bioinformatics\/}~{\em 26\/}(18), i517--i523.

\bibitem[\protect\citeauthoryear{Simon, Friedman, Hastie, and Tibshirani}{Simon
  et~al.}{2013}]{simon13}
Simon, N., J.~Friedman, T.~Hastie, and R.~Tibshirani (2013).
\newblock A sparse-group lasso.
\newblock {\em Journal of Computational and Graphical Statistics\/}~{\em
  22\/}(2), 231--245.

\bibitem[\protect\citeauthoryear{Sims}{Sims}{1972}]{sims72}
Sims, C.~A. (1972).
\newblock Money, income, and causality.
\newblock {\em The American Economic Review\/}~{\em 62\/}(4), 540--552.

\bibitem[\protect\citeauthoryear{Suppes}{Suppes}{1970}]{suppes70}
Suppes, P. (1970).
\newblock {\em A Probabilistic Theory of Causality}, Volume~24.
\newblock Amsterdam: North Holland Publishing Co.

\bibitem[\protect\citeauthoryear{Suryadi, Chew, and Ong}{Suryadi
  et~al.}{2023}]{suryadi23}
Suryadi, L.~Y. Chew, and Y.~S. Ong (2023).
\newblock Granger causality using jacobian in neural networks.
\newblock {\em Chaos: An Interdisciplinary Journal of Nonlinear Science\/}~{\em
  33\/}(2), 023126.

\bibitem[\protect\citeauthoryear{Tank, Covert, Foti, Shojaie, and Fox}{Tank
  et~al.}{2022}]{tank22}
Tank, A., I.~Covert, N.~Foti, A.~Shojaie, and E.~B. Fox (2022).
\newblock Neural granger causality.
\newblock {\em IEEE Transactions on Pattern Analysis and Machine
  Intelligence\/}~{\em 44\/}(8), 4267--4279.

\bibitem[\protect\citeauthoryear{Tibshirani}{Tibshirani}{1996}]{tibshirani96}
Tibshirani, R. (1996).
\newblock Regression shrinkage and selection via the lasso.
\newblock {\em Journal of the Royal Statistical Society: Series B\/}~{\em
  58\/}(1), 267--288.

\bibitem[\protect\citeauthoryear{Tong and Lim}{Tong and Lim}{1980}]{tong80}
Tong, H. and K.~S. Lim (1980).
\newblock Threshold autoregression, limit cycles and cyclical data.
\newblock {\em Journal of the Royal Statistical Society: Series B\/}~{\em
  42\/}(3), 245--268.

\bibitem[\protect\citeauthoryear{Van~der Laan}{Van~der Laan}{2006}]{van06}
Van~der Laan, M.~J. (2006).
\newblock Statistical inference for variable importance.
\newblock {\em The International Journal of Biostatistics\/}~{\em 2\/}(1), 2.

\bibitem[\protect\citeauthoryear{Vicente, Wibral, Lindner, and Pipa}{Vicente
  et~al.}{2011}]{vicente11}
Vicente, R., M.~Wibral, M.~Lindner, and G.~Pipa (2011).
\newblock Transfer entropy---a model-free measure of effective connectivity for
  the neurosciences.
\newblock {\em Journal of Computational Neuroscience\/}~{\em 30\/}(1), 45--67.

\bibitem[\protect\citeauthoryear{Wehr and Zador}{Wehr and Zador}{2003}]{wehr03}
Wehr, M. and A.~M. Zador (2003).
\newblock Balanced inhibition underlies tuning and sharpens spike timing in
  auditory cortex.
\newblock {\em Nature\/}~{\em 426\/}(6965), 442--446.

\bibitem[\protect\citeauthoryear{Wiener}{Wiener}{1956}]{wiener56}
Wiener, N. (1956).
\newblock The theory of prediction.
\newblock In E.~F. Beckenback (Ed.), {\em Modern Mathematics for Engineers}, 1,
  Chapter~8. McGraw-Hill, New York.

\bibitem[\protect\citeauthoryear{Winkler, Ridgway, Webster, Smith, and
  Nichols}{Winkler et~al.}{2014}]{winkler14}
Winkler, A.~M., G.~R. Ridgway, M.~A. Webster, S.~M. Smith, and T.~E. Nichols
  (2014).
\newblock Permutation inference for the general linear model.
\newblock {\em Neuroimage\/}~{\em 92}, 381--397.

\bibitem[\protect\citeauthoryear{Wu, Lu, Xue, and Liang}{Wu
  et~al.}{2014}]{wu14}
Wu, H., T.~Lu, H.~Xue, and H.~Liang (2014).
\newblock Sparse additive ordinary differential equations for dynamic gene
  regulatory network modeling.
\newblock {\em Journal of the American Statistical Association\/}~{\em
  109\/}(506), 700--716.

\bibitem[\protect\citeauthoryear{Yuan and Lin}{Yuan and Lin}{2006}]{yuan06}
Yuan, M. and Y.~Lin (2006).
\newblock Model selection and estimation in regression with grouped variables.
\newblock {\em Journal of the Royal Statistical Society: Series B\/}~{\em
  68\/}(1), 49--67.

\end{thebibliography}
	
	\newpage
	\begin{center}
		{\large\bf SUPPLEMENTARY MATERIAL}\\
		{\normalsize \bf Title}
	\end{center}
	\begin{description}
		\item[Additional Tables \& Figures:] Tables and figures of simulation and application results not presented in the main text.
		\item[Algorithms \& Pseudocode:] Algorithms and pseudocode for methodology presented in the main text.
		\item[Theory Supplement:] Additional conditions for theoretical results, finite sample behavior, and proofs to theorems presented in the main text.
		\item[Code \& Supporting Material:] Files (.RData, .npy, and .csv) used to assess performance of change point methods, code (.R, .cpp, and .py) used to generate results and figures, and data (.mat) and code (.R, .py, and .m) used to generate output in Section \ref{se:NPGCapplication} can be found at $$\verb|github.com/noahgade/NonlinearPermutedGrangerCausality|.$$
		
	\end{description}
	\pagebreak
	\FloatBarrier
	\section*{Additional Tables \& Figures}
	\subsection*{Tables}
	\begin{table}[htb]
		\centering
			\begin{tabular}{ccccccc}\hline
				& NPGC & cMLP-GL & cMLP-GSGL & cMLP-H & R/U & GNS \\\hline
				$T=250$ & 0.895 & 0.795 & 0.110 & 0.435 & 0.865 & 0.985 \\
				$T=500$ & 0.975 & 0.805 & 0.130 & 0.505 & 0.910 & 1.000 \\
				$T=1000$ & 0.990 & 0.805 & 0.090 & 0.420 & 0.915 & 1.000 \\\hline
			\end{tabular}
			\caption{TAR(2) simulation results. Proportion of correctly labelled Granger causal outcomes ($\text{GC} = 1$, $\rho_1$ in Table \ref{table2}) for zero Granger causal variables included in additional set $\mathbf{Z}$.}
			\label{tableA}
		\end{table}
		\begin{table}[htb]
			\centering
			\begin{tabular}{ccccccc}\hline
				& NPGC & cMLP-GL & cMLP-GSGL & cMLP-H & R/U & GNS \\\hline
				$T=250$ & 0.935 & 0.945 & 0.200 & 0.675 & 0.640 & 0.990 \\
				$T=500$ & 0.955 & 0.900 & 0.135 & 0.610 & 0.775 & 0.995 \\
				$T=1000$ & 0.980 & 0.895 & 0.155 & 0.580 & 0.785 & 1.000 \\\hline
			\end{tabular}
			\caption{TAR(2) simulation results. Proportion of correctly labelled Granger causal outcomes ($\text{GC} = 1$, $\rho_1$ in Table \ref{table2}) for two Granger causal variables included in additional set $\mathbf{Z}$.}
			\label{tableB}
	\end{table}
	
	\begin{table}[htb]
			\centering
			\begin{tabular}{ccccccc}\hline
				& NPGC & cMLP-GL & cMLP-GSGL & cMLP-H & R/U & GNS \\\hline
				$T=250$ & 0.955 & 0.705 & 0.220 & 0.435 & 0.050 & 1.000 \\
				$T=500$ & 0.980 & 0.650 & 0.115 & 0.380 & 0.015 & 1.000 \\
				$T=1000$ & 0.985 & 0.670 & 0.090 & 0.360 & 0.040 & 1.000 \\\hline
			\end{tabular}
			\caption{Lorenz-96 simulation results. Proportion of correctly labelled Granger causal outcomes ($\text{GC} = 1$, $\rho_1$ in Table \ref{table2}) for zero Granger causal variables included in additional set $\mathbf{Z}$.}
			\label{tableC}
		\end{table}
		\begin{table}[htb]
			\centering
			\begin{tabular}{ccccccc}\hline
				& NPGC & cMLP-GL & cMLP-GSGL & cMLP-H & R/U & GNS \\\hline
				$T=250$ & 1.000 & 0.825 & 0.245 & 0.530 & 0.000 & 1.000 \\
				$T=500$ & 0.995 & 0.805 & 0.205 & 0.505 & 0.000 & 1.000 \\
				$T=1000$ & 1.000 & 0.760 & 0.165 & 0.465 & 0.000 & 1.000 \\\hline
			\end{tabular}
			\caption{Lorenz-96 simulation results. Proportion of correctly labelled Granger causal outcomes ($\text{GC} = 1$, $\rho_1$ in Table \ref{table2}) for two Granger causal variables included in additional set $\mathbf{Z}$.}
			\label{tableD}
	\end{table}
	\FloatBarrier
	
	\subsection*{Figures}
	\begin{figure}[htb]
			\centering
			\includegraphics[width = \textwidth]{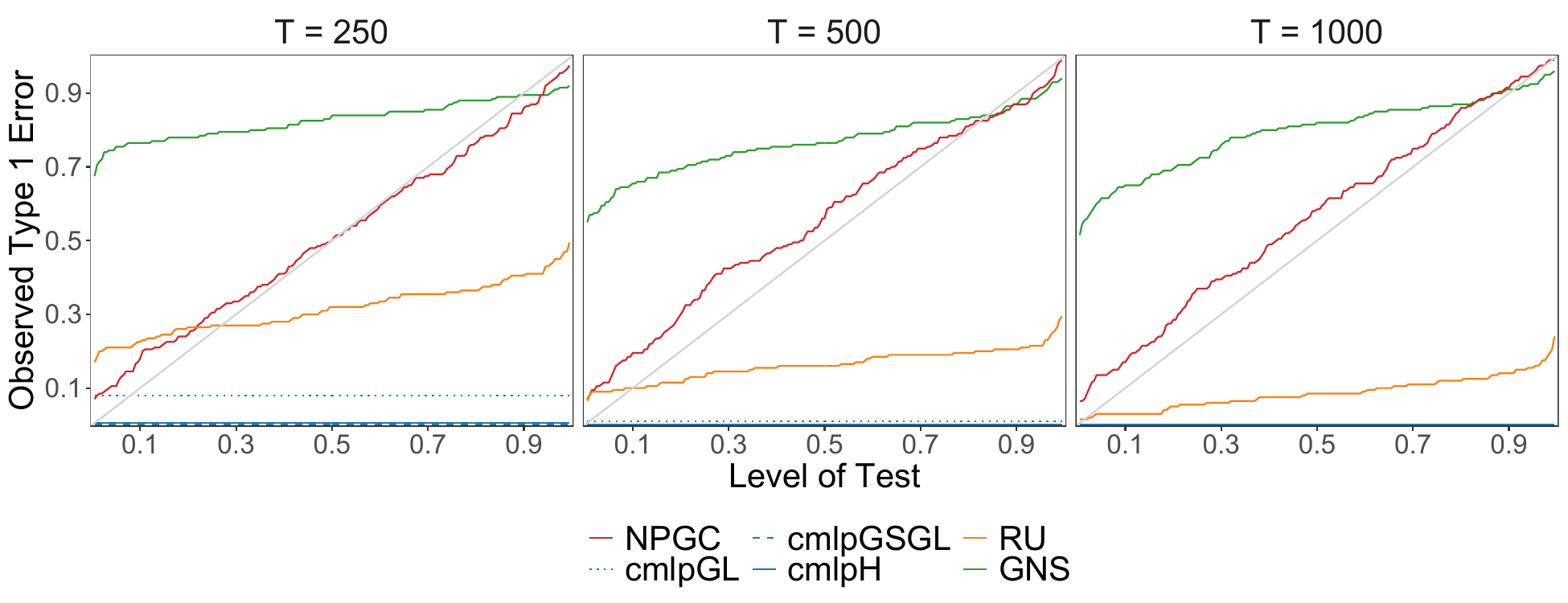}
			\caption{Type 1 error control for TAR(2) simulations by reference quantile $\alpha$ with zero Granger causal variables included in additional set $\mathbf{Z}$.}
			\label{figureA}
		\end{figure}

		\begin{figure}[htb]
			\centering
			\includegraphics[width = \textwidth]{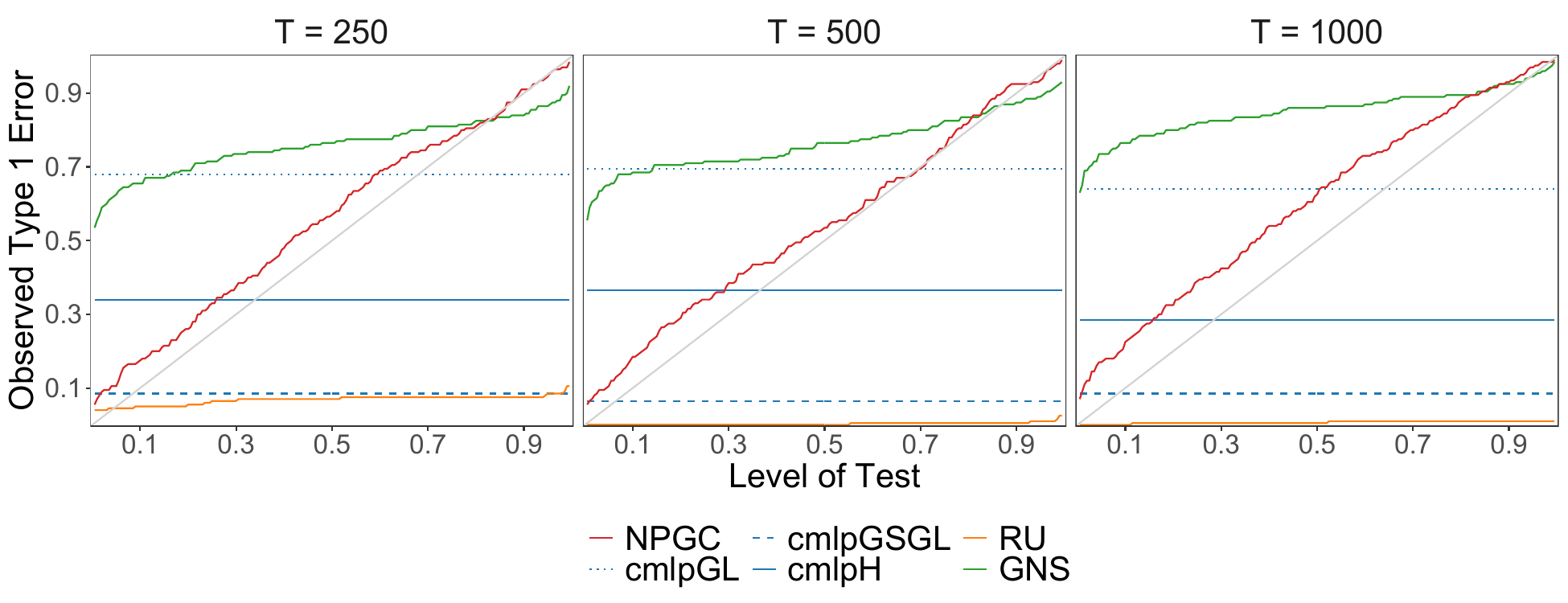}
			\caption{Type 1 error control for TAR(2) simulations by reference quantile $\alpha$ with two Granger causal variables included in additional set $\mathbf{Z}$.}
			\label{figureB}
	\end{figure}
	
	\begin{figure}[htb]
			\centering
			\includegraphics[width = \textwidth]{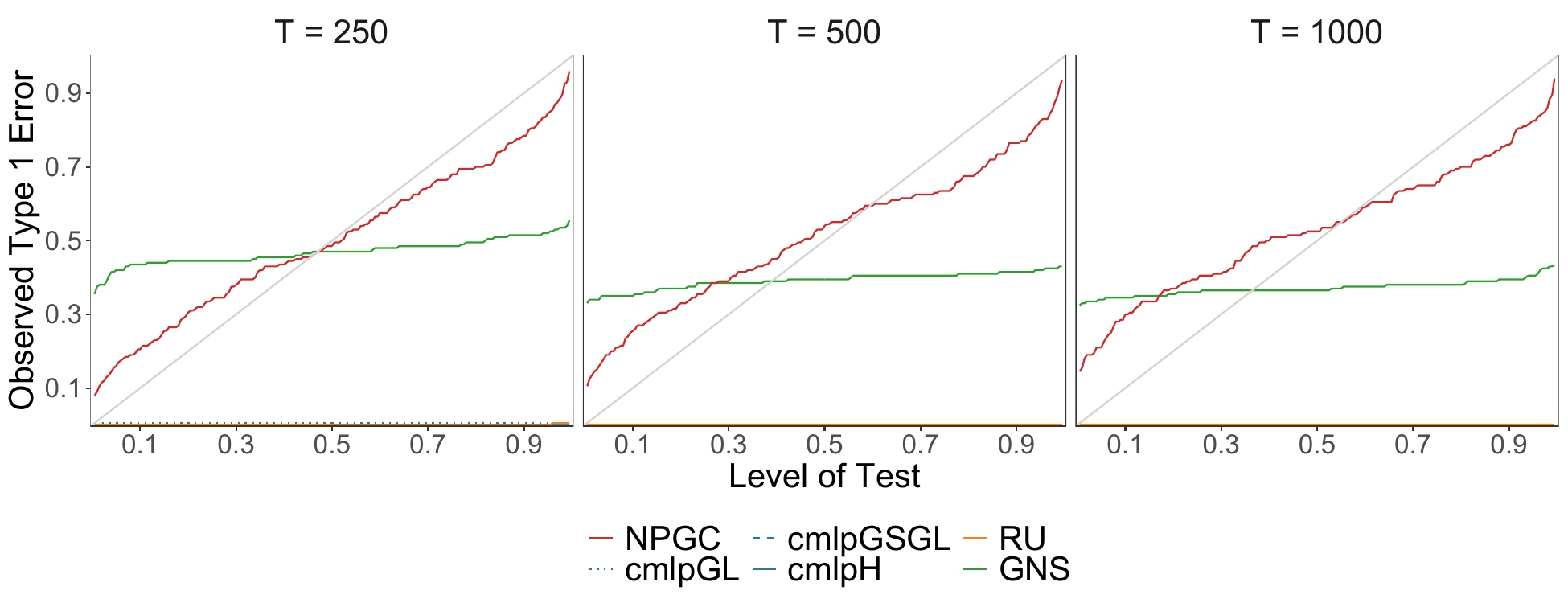}
			\caption{Type 1 error control for Lorenz-96 simulations by reference quantile $\alpha$ with zero Granger causal variables included in additional set $\mathbf{Z}$.}
			\label{figureC}
		\end{figure}

			\begin{figure}[htb]
			\centering
			\includegraphics[width = \textwidth]{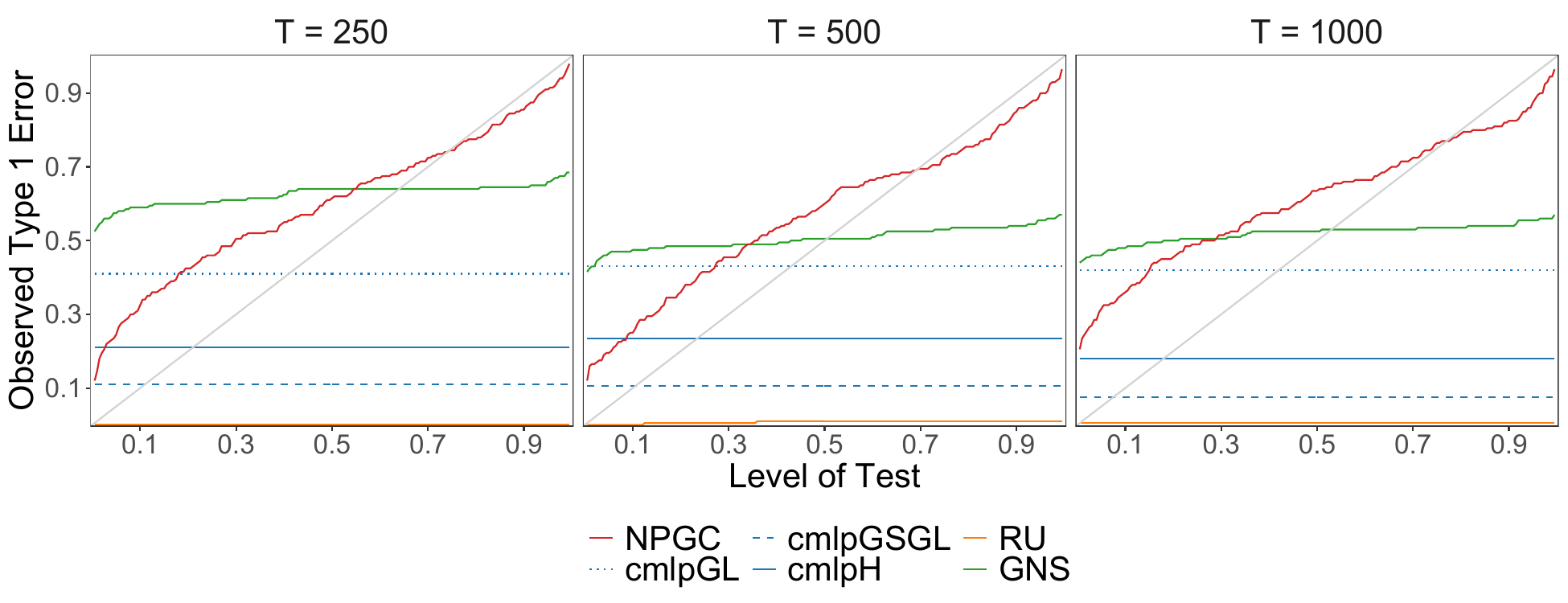}
			\caption{Type 1 error control for Lorenz-96 simulations by reference quantile $\alpha$ with two Granger causal variables included in additional set $\mathbf{Z}$.}
			\label{figureD}
	\end{figure}
	
	\begin{figure}[htb]
		\centering
			\includegraphics[width = \textwidth]{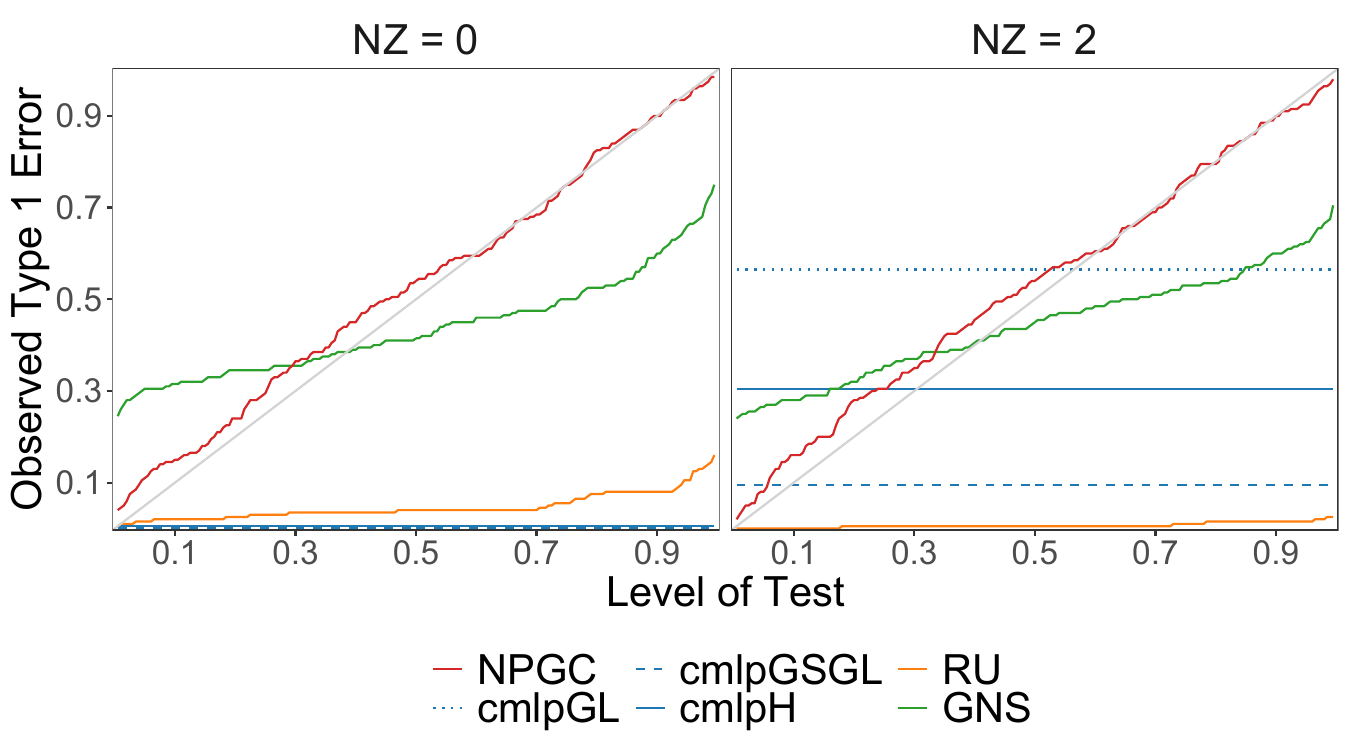}
			\caption*{Type 1 error control for two-group TAR(2) simulations by reference quantile $\alpha$ simulations without Lorenz-96 group included. \textit{Left}: Zero Granger causal variables. \textit{Right}: Two Granger causal variables included in the additional set $\mathbf{Z}$. All data sets are $T=1000$.}
		\label{figureE}
	\end{figure}
	
	\pagebreak
	\FloatBarrier
	\section*{Algorithms \& Pseudocode}
	
	\floatname{algorithm}{Algorithm}
	\begin{algorithm}
		\caption{Nonlinear Permuted Granger Causality}
		\label{npgcalg}
		\vskip0.05in
		\textbf{Inputs}: $\left(\mathbf{X}, \mathbf{Y}, \mathbf{Z}\right)_\omega$ for all $\varphi$ realizations $\omega\in\Omega_\text{obs}$; lag selection $\gamma$; \# permutations $M$; \# random featurizations $\mathscr{R}$; feature space dimension $N$; \# cross-validation folds $K$
		\vskip0in
		\textbf{Outputs}: $\hat{Q}_M$; $\hat{\vartheta}_{m}$ for each permutation $m=1,\ldots,M$
		\vskip0in
		\begin{algorithmic}[1]
			\State Generate permutations $\tilde{\mathbf{X}}_m = \bm{\Pi}_m\mathbf{X}$ for $m=1,\ldots,M$ with $\bm{\Pi}_1 = \mathbf{I}$
			\State Initialize $\mathbf{W}_r\in\mathbb{R}^{(1 + \gamma d + q + p)\times N}$ where each element $w_{r,ij} \sim \mathcal{N}\left(0, 1\right)$ for all $\mathscr{R}$
			\For{$m$ in $1:M$}
			\For{$\omega$ in $1:\varphi$}
			\For{$r$ in $1:\mathscr{R}$}
			\State $\mathbf{H}_{m,\omega, r} \leftarrow \tanh\left(\left[\mathbf{1}\;\mathbf{Y}_{\text{lag},\omega}\;\mathbf{Z}_{\omega}\;\tilde{\mathbf{X}}_{m,\omega}\right] \mathbf{W}_r\right)$
			\For{$k$ in $1:K$}
			\State $\mathbf{R}_{m,\omega,r,k} \leftarrow \mathbf{H}_{m,\omega,r,k} \left(\mathbf{H}_{m,\omega,r,-k}'\mathbf{H}_{m,\omega,r,-k}\right)^{-1}\mathbf{H}_{m,\omega,r,-k}'\mathbf{Y}_{\omega,-k} - \mathbf{Y}_{\omega,k}$
			\EndFor
			\EndFor
			\EndFor
			\State $\hat{\vartheta}_{m} \leftarrow \left(\varphi \mathscr{R}K \right)^{-1} \sum_{\omega=1}^{\varphi} \sum_{r=1}^{\mathscr{R}}\sum_{k=1}^K T_k^{-1} \text{tr}\left(\mathbf{R}_{m,\omega,r,k}'\mathbf{R}_{m,\omega,r,k}\right)$
			\EndFor
			\State $\hat{Q}_{M} \leftarrow M^{-1}\sum_{m=1}^M\mathbf{1}\left\{\hat{\vartheta}_{m} \leq \hat{\vartheta}_{1}\right\}$
			\vskip0in
			\Return $\hat{Q}_{M}$; $\hat{\vartheta}_{m}$ for $m=1,\ldots,M$
		\end{algorithmic}
	\end{algorithm}
	\pagebreak
	Automated selection of the feature space dimension in Algorithm \ref{fdselectalg} uses cross validation sets $1,\ldots,k^*$ of the $K$ total to obtain $N$. Implementation of Algorithm \ref{fdselectalg} requires alteration of Algorithm \ref{npgcalg} to only use test sets $k = k^* + 1,\ldots,K$.
	\floatname{algorithm}{Algorithm}
	\begin{algorithm}
		\caption{Automated Feature Space Dimension Selection }
		\label{fdselectalg}
		\vskip0in
		\textbf{Inputs}: $\left(\mathbf{X}, \mathbf{Y}, \mathbf{Z}\right)_\omega$ for all $\varphi$ realizations $\omega\in\Omega_\text{obs}$; lag selection $\gamma$; \# random FNN generations $\mathscr{R}$; \# cross-validation folds $K$
		\vskip0in
		\textbf{Outputs}: feature space dimension $N$
		\vskip0in
		\begin{algorithmic}[1]
			\State Remove set $k^* + 1,\ldots,K$ from $\left(\mathbf{X}, \mathbf{Y}, \mathbf{Z}\right)_\omega$ leaving only folds $1,\ldots,k^*$
			\State $N_{\max} \leftarrow \sum_{k=1}^{k^*}T_k-1$ 
			\For{$n$ in $10:10:N_{\max}$}
			\State Initialize $\mathbf{W}_r\in\mathbb{R}^{\left(1 + \gamma d + q + p\right)\times n}$ where each element $w_{r,ij} \sim \mathcal{N}\left(0, 1\right)$ for all $\mathscr{R}$
			\For{$\omega$ in $1:\varphi$}
			\For{$r$ in $1:\mathscr{R}$}
			\State $\mathbf{H}_{n,\omega,r} \leftarrow \tanh\left(\left[\mathbf{1}\;\mathbf{Y}_{\text{lag}, \omega}\;\mathbf{Z}_{\omega}\;\mathbf{X}_{\omega}\right] \mathbf{W}_r\right)$
			\For{$k$ in $1:k^*$}
			\State $\mathbf{R}_{n,\omega,r,k} \leftarrow \mathbf{H}_{n,\omega,r,k} \left(\mathbf{H}_{n,
				\omega,r,-k}'\mathbf{H}_{n,\omega,r,-k}\right)^{-1}\mathbf{H}_{n,\omega,r,-k}'\mathbf{Y}_{\omega,-k} - \mathbf{Y}_{\omega,k}$
			\EndFor
			\EndFor
			\EndFor
			\EndFor
			\State $N \leftarrow \arg \min_{n}\left\{\left(\varphi \mathscr{R} k^*\right)^{-1}\sum_{\omega=1}^{\varphi}\sum_{r=1}^\mathscr{R}\sum_{k=1}^{k^*} T_k^{-1}\text{tr}\left(\mathbf{R}_{n,\omega,r,k}'\mathbf{R}_{n,\omega,r,k}\right)\right\}$
			\vskip0in
			\Return $N$
		\end{algorithmic}
	\end{algorithm}
	
	\FloatBarrier
	\newpage
	\FloatBarrier
	\section*{Theory Supplement}
	\subsection*{Additional Conditions for Theoretical Results}
	Theoretical results are derived for a generic activation function $g$ in Equation \ref{featurizer} with the constraints of Condition \ref{functioncondition}.
	\begin{condition}\label{functioncondition}
		The nonlinear function activation function $g$ in Equation \ref{featurizer} is bounded, $g:\mathbb{R}\rightarrow \left[a,b\right]$ for some $a<b$, such that for any $x\in\mathbb{R}$, $|g(x)|\leq G=\max\left\{|a|, |b|\right\}$ and $G<\infty$.
	\end{condition}
	\begin{condition}\label{rankcondition}
		Let $\mathcal{W}$ be the space of all element-wise randomly generated FNNs, $w_{ij}\sim\mathcal{N}(0,1)$, such that for all $\mathbf{W}_r\in\mathcal{W}$, the matrix $\mathbf{W}_r$ is full rank and generates a feature matrix that is full rank with a finite condition number. For permutation $m$, realization $\omega$, and random featurization $r$,
		\begin{align}
			\text{rank}\left(\mathbf{H}_{m,\omega,r}\right)&=N\label{rankconditionequation}\\
			\text{and}\hskip0.2in\kappa(\mathbf{H}_{m,\omega,r})&=\sigma_1(\mathbf{H}_{m,\omega,r})/\sigma_{N}(\mathbf{H}_{m,\omega,r})\leq \kappa_{\max} <\infty.\label{conditionequation}
		\end{align}
		All initialized model matrices $\mathbf{W}_r\in\mathcal{W}$.
	\end{condition}
	\par
	For the matrix $\mathbf{H}_{m,\omega,r}=(h_{m,\omega,r, ij})$, define $G$ as the maximal element from Condition \ref{functioncondition} and the average squared entry as $\bar{h}_{m,\omega,r}^2$.
	\begin{condition}\label{entriescondition}
		The following expected values exist and are finite:
		\begin{align}
			\mathbb{E}\left[G^{-2}\bar{h}_{m,\omega,r}^2|\mathbf{S}_{m,\omega,r}\right] &= \nu^2<\infty,\\
			\mathbb{E}\left[G^2\left(\bar{h}_{m,\omega,r}^2\right)^{-1}|\mathbf{S}_{m,\omega,r}\right] &= \xi^2<\infty,\\
			\text{and}\hskip0.1in\mathbb{E}\left[G^4\left(\bar{h}_{m,\omega,r}^2\right)^{-2}|\mathbf{S}_{m,\omega,r}\right] &= \varrho^4<\infty.
		\end{align}
	\end{condition}
	\par
	Combining Conditions \ref{functioncondition}, \ref{rankcondition}, and \ref{entriescondition}, $0<\nu^2\leq 1$, $1\leq \xi^2<\infty$, and $1\leq \varrho^4 < \infty$. Define $f$ as the true functional relationship between the response and the unpermuted predictor matrix for all $\omega\in\Omega$, 
	\begin{align}
		\mathbf{Y}_\omega=f\left(\left[\mathbf{1}\;\mathbf{Y}_{\text{lag},\omega}\;\mathbf{Z}_\omega\;\tilde{\mathbf{X}}_{1,\omega}\right]\right) + \bm{\mathscr{U}}_{\omega},\label{truef}
	\end{align}
	and the approximating model form for permutations $m=1,\ldots,M$ and featurizations $r=1,\ldots,\mathscr{R}$.
	\begin{align}\label{modelequation}
		\mathbf{Y}_\omega = \mathbf{H}_{m,\omega,r}\mathbf{W}^L_{m,\omega,r} + \mathbf{U}_{m,\omega,r}
	\end{align}
	\begin{condition}\label{approximationcondition}
		For every $\eta>0$, there exists a fixed $N$, where $1 + \gamma d + q + p\leq N<\infty$, such that as the number of random featurizations $\mathscr{R}\rightarrow\infty$,
		\begin{align}\label{approximationconditionequation}
			\sup_{\omega\in\Omega} \left\| \frac{1}{\mathscr{R}}\sum_{r=1}^{\mathscr{R}}\mathbf{H}_{1,\omega,r}\mathbf{W}^L_{1,\omega,r} - f\left(\left[\mathbf{1}\;\mathbf{Y}_{\text{lag},\omega}\;\mathbf{Z}_\omega\;\tilde{\mathbf{X}}_{1,\omega}\right]\right)\right\| < \eta,
		\end{align}
		where $\mathbf{W}^L_{1,\omega,r}$ is the true coefficient matrix of feature space $r$ for the unpermuted covariate set, $\mathbf{H}_{1,\omega,r} = g\left(\left[\mathbf{1}\;\mathbf{Y}_{\text{lag},\omega}\;\mathbf{Z}_\omega\;\tilde{\mathbf{X}}_{1,\omega}\right]\mathbf{W}_r\right)$.
	\end{condition}
	\par
	Note that Condition \ref{approximationcondition} specifically pertains to the $f$ piece of the true functional relationship; no assumption is made of the closeness of a transformed response $\mathbf{H}_{1,\omega,r}\mathbf{W}^L_{1,\omega,r}$ and the true values $\mathbf{Y}_\omega$ if the predictors themselves are not reliable and the entries of $\bm{\mathscr{U}}_{\omega}$ in Equation \ref{truef} are large. 
	
	\subsection*{Finite Sample Distribution}\label{ss:FiniteNPGC}
	For a finite sample, the distribution of the estimated quantity $\hat{\vartheta}_{m}$ is derived as a sum of linear combinations of chi-square random variables. Define the following chi-square random variables 
	\begin{align}
		X_{m,\omega,r,k,i},Y_{m,\omega,r,k,ij}, Z_{m,\omega,r,k,ij} \sim \chi_1^2
	\end{align}
	for all $\omega\in\Omega_\text{obs}$, $r=1,\ldots,\mathscr{R}$, $k=1,\ldots,K$, $i=1,\ldots,T_kd$ and $j<i$.  Denote 
	\begin{align}
		\mathbf{H}_{m,\omega,r,k}\left[\mathbf{H}_{m,\omega,r,-k}'\mathbf{H}_{m,\omega,r,-k}\right]^{-1}\mathbf{H}_{m,\omega,r,k}' = (\phi_{m,\omega,r,k,ij})
	\end{align}
	and $\mathbf{S}_{m,\omega,r}=(s_{m,\omega,r,ij})$, with $i'=\lceil i/d\rceil$,  $j'=\lceil j/d\rceil$, $i^*=i \bmod d$, and $j^*=j \bmod d$. 
	\begin{theorem}\label{theorem5}
		Under the listed conditions, a finite sample containing $T$ observations, $\mathscr{R}$ random generated FNNs, and $\varphi$ realizations in the set $\Omega_\text{obs}$, the estimate for the variation parameter $\hat{\vartheta}_{m}$ defined in Equation \ref{parameter} follows a generalized chi-square distribution.
		\begin{align}
			\hat{\vartheta}_{m} &\sim \frac{1}{\varphi\mathscr{R}K}\sum_{\omega = 1}^{\varphi} \sum_{r=1}^{\mathscr{R}} \sum_{k=1}^K \frac{1}{T_k} \sum_{i=1}^{T_kd} \bigg[\left(\phi_{m,\omega,r,k,i'i'}+1\right)s_{m,\omega,r,k,i^*i^*}X_{m,\omega,r,k,i} \nonumber\\
			&\hskip0.5in + \sum_{j=1}^{d\lfloor (i-1) / d \rfloor} \phi_{m,\omega,r,k,i'j'}s_{m,\omega,r,k,i^*j^*}\left(Y_{m,\omega,r,k,ij} - Z_{m,\omega,r,k,ij}\right) \nonumber\\
			&\hskip0.5in + \sum_{j=d\lfloor (i-1) / d\rfloor + 1}^{i-1}\left(\phi_{m,\omega,r,k,i'j'} + 1\right)s_{m,\omega,r,k,i^*j^*}\left(Y_{m,\omega,r,j,ij} - Z_{m,\omega,r,k,ij}\right) \bigg]
		\end{align}
	\end{theorem}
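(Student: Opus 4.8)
The plan is to reduce each summand $\text{tr}(\mathbf{R}_{m,\omega,r,k}'\mathbf{R}_{m,\omega,r,k})$ to a quadratic form in a mean-zero multivariate normal vector, and then expand that quadratic form entrywise into scaled chi-square pieces. First I would fix a single block $(\omega,r,k)$ and combine the model of Equation \ref{modelequation} with the out-of-sample residual of Equation \ref{testresiduals}. Writing $\mathbf{P}=\mathbf{H}_{m,\omega,r,k}(\mathbf{H}_{m,\omega,r,-k}'\mathbf{H}_{m,\omega,r,-k})^{-1}\mathbf{H}_{m,\omega,r,-k}'$ and substituting $\mathbf{Y}=\mathbf{H}\mathbf{W}^L+\mathbf{U}$, the signal term cancels because $\mathbf{P}\mathbf{H}_{-k}=\mathbf{H}_k$, leaving $\mathbf{R}=\mathbf{P}\mathbf{U}_{-k}-\mathbf{U}_k$. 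By Condition \ref{modelerrorcondition} the error rows are i.i.d. $\mathcal{N}_d(\mathbf{0},\mathbf{S}_{m,\omega,r})$ and independent across the disjoint training and test times, so $\text{vec}(\mathbf{R})$ is exactly mean-zero Gaussian.

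Next I would compute its covariance. Since the training and test errors are independent their contributions add, and $\mathbf{P}\mathbf{P}'=\mathbf{H}_k(\mathbf{H}_{-k}'\mathbf{H}_{-k})^{-1}\mathbf{H}_k'$, which is precisely the matrix $(\phi_{m,\omega,r,k,ij})$ defined before the statement. Stacking the residual row-by-row so that entry $i$ carries time index $i'=\lceil i/d\rceil$ and coordinate $i^*=i\bmod d$, the covariance is the Kronecker product $(\bm{\Phi}+\mathbf{I})\otimes\mathbf{S}$ with entries $\gamma_{ij}=(\phi_{i'j'}+\delta_{i'j'})s_{i^*j^*}$. The rank and conditioning guarantees of Condition \ref{rankcondition} make $\mathbf{H}_{-k}'\mathbf{H}_{-k}$ invertible and $\bm{\Phi}$ positive semidefinite, so $\bm{\Gamma}=(\bm{\Phi}+\mathbf{I})\otimes\mathbf{S}\succ\mathbf{0}$ and admits a symmetric square root $\bm{\Gamma}^{1/2}$.

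The central step is to write $\text{tr}(\mathbf{R}'\mathbf{R})=\|\mathbf{v}\|^2$ for $\mathbf{v}=\text{vec}(\mathbf{R})=\bm{\Gamma}^{1/2}\mathbf{z}$ with $\mathbf{z}\sim\mathcal{N}(\mathbf{0},\mathbf{I})$, and to observe the exact (almost sure) identity $\|\mathbf{v}\|^2=\mathbf{z}'\bm{\Gamma}\mathbf{z}=\sum_i\gamma_{ii}z_i^2+2\sum_i\sum_{j<i}\gamma_{ij}z_iz_j$. The diagonal terms supply $X_{m,\omega,r,k,i}=z_i^2\sim\chi_1^2$ with coefficient $\gamma_{ii}=(\phi_{i'i'}+1)s_{i^*i^*}$, matching the first bracketed term. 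For each off-diagonal pair I would apply the polarization identity $z_iz_j=\frac{1}{2}(Y_{ij}-Z_{ij})$, where $Y_{ij}=((z_i+z_j)/\sqrt{2})^2$ and $Z_{ij}=((z_i-z_j)/\sqrt{2})^2$ are each $\chi_1^2$; the factor $2$ cancels and leaves $\gamma_{ij}(Y_{ij}-Z_{ij})$. Partitioning $j<i$ according to $\delta_{i'j'}$ then produces exactly the paper's two sums: the different-time pairs $j\le d\lfloor(i-1)/d\rfloor$ with coefficient $\phi_{i'j'}s_{i^*j^*}$, and the same-time pairs $d\lfloor(i-1)/d\rfloor<j\le i-1$ with coefficient $(\phi_{i'j'}+1)s_{i^*j^*}$.

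Finally I would reinstate the normalization $1/(\varphi\mathscr{R}KT_k)$ and sum the per-block identities over all $\omega,r,k$ to recover $\hat{\vartheta}_m$ of Equation \ref{parameter}; because each block identity is pathwise, the result is an \emph{exact} distributional equality, with the constructed $X,Y,Z$ marginally $\chi_1^2$ and dependent across indices through the shared whitened variables, which is all that a generalized chi-square representation requires. The main obstacle is the bookkeeping of this central step: one must track the Kronecker indexing carefully so that the covariance entries $\gamma_{ij}$ collapse to the precise $\phi$-and-$s$ coefficients under the correct $j$-ranges, and must resist the tempting but false shortcut of treating $\|\mathbf{v}\|^2$ as a sum of independent scaled chi-squares. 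It is exactly the off-diagonal polarization differences, and their dependence on the common $\mathbf{z}$, that make the stated distribution correct beyond the first two moments.
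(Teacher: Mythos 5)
Your proposal is correct and follows essentially the same route as the paper's proof: vectorize the matrix-normal residual to get covariance $\left(\bm{\Phi}+\mathbf{I}_{T_k}\right)\otimes\mathbf{S}_{m,\omega,r}$, expand the trace as a Gaussian quadratic form, convert diagonal terms to $\chi_1^2$ variables and off-diagonal terms via the polarization identity into differences of $\chi_1^2$ variables, and partition $j<i$ by whether $i'=j'$ to recover the $\phi$-and-$s$ coefficients before aggregating over $(\omega,r,k)$. The only addition is that you explicitly derive the residual law (signal cancellation via $\mathbf{P}\mathbf{H}_{-k}=\mathbf{H}_k$ and the covariance $\mathbf{P}\mathbf{P}'=\bm{\Phi}$), which the paper asserts directly as Equation \ref{Rdistribution}; that is a welcome but inessential elaboration.
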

	
	\subsection*{Proofs}
	\begin{lemma}\label{lemma1}
		Under the listed conditions, $\lim_{T\rightarrow\infty} \mathbb{E}\left[\hat{\vartheta}_m\right] = \vartheta_m$.
	\end{lemma}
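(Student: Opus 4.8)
The plan is to compute $\mathbb{E}[\hat{\vartheta}_m]$ directly, isolating a leading term equal to $\vartheta_m$ plus a bias term that vanishes as the training blocks grow. Since $\hat{\vartheta}_m$ in Equation \ref{parameter} is a finite average over $\omega$, $r$, and $k$, and limits pass through finite averages, it suffices to analyze a single summand $T_k^{-1}\mathbb{E}\left[\text{tr}\left(\mathbf{R}_{m,\omega,r,k}'\mathbf{R}_{m,\omega,r,k}\right)\right]$. First I would substitute the model form $\mathbf{Y}_\omega = \mathbf{H}_{m,\omega,r}\mathbf{W}^L_{m,\omega,r} + \mathbf{U}_{m,\omega,r}$ into the out-of-sample residual of Equation \ref{testresiduals}. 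Writing $\mathbf{P}$ for the operator $\mathbf{H}_{m,\omega,r,k}\left(\mathbf{H}_{m,\omega,r,-k}'\mathbf{H}_{m,\omega,r,-k}\right)^{-1}\mathbf{H}_{m,\omega,r,-k}'$, the $\mathbf{H}\mathbf{W}^L$ contributions on the training and test blocks cancel because the same coefficient matrix $\mathbf{W}^L_{m,\omega,r}$ governs all rows, leaving the clean identity $\mathbf{R}_{m,\omega,r,k} = \mathbf{P}\,\mathbf{U}_{m,\omega,r,-k} - \mathbf{U}_{m,\omega,r,k}$ involving only the model errors.

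Next I would condition on the features and on $\mathbf{S}_{m,\omega,r}$ and take expectation over the errors using Condition \ref{modelerrorcondition}. Because the errors are mean-zero and independent across time, the training-block errors $\mathbf{U}_{m,\omega,r,-k}$ are independent of the test-block errors $\mathbf{U}_{m,\omega,r,k}$, so the cross term in the expansion of $\text{tr}(\mathbf{R}'\mathbf{R})$ vanishes. The two remaining quadratic forms evaluate to $\text{tr}(\mathbf{P}'\mathbf{P})\,\text{tr}(\mathbf{S}_{m,\omega,r})$ and $T_k\,\text{tr}(\mathbf{S}_{m,\omega,r})$, giving
\begin{align}
\frac{1}{T_k}\mathbb{E}\left[\text{tr}\left(\mathbf{R}_{m,\omega,r,k}'\mathbf{R}_{m,\omega,r,k}\right)\,\middle|\,\mathbf{H},\mathbf{S}_{m,\omega,r}\right] = \left(\frac{\text{tr}(\mathbf{P}'\mathbf{P})}{T_k} + 1\right)\text{tr}\left(\mathbf{S}_{m,\omega,r}\right).
\end{align}
Taking the tower expectation and applying the nested means of Condition \ref{distributioncondition}, namely $\mathbb{E}[\text{tr}(\mathbf{S}_{m,\omega,r})\mid\bm{\Sigma}_{m,\omega}]=\text{tr}(\bm{\Sigma}_{m,\omega})$ and $\mathbb{E}[\text{tr}(\bm{\Sigma}_{m,\omega})]=\vartheta_m$, the $+1$ term produces exactly $\vartheta_m$, so the whole problem reduces to showing the bias term $\mathbb{E}\left[T_k^{-1}\text{tr}(\mathbf{P}'\mathbf{P})\,\text{tr}(\mathbf{S}_{m,\omega,r})\right]\to 0$.

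The crux is the spectral control of $\text{tr}(\mathbf{P}'\mathbf{P})$. Using the cyclic property of the trace and $\left(\mathbf{H}_{-k}'\mathbf{H}_{-k}\right)^{-1}\mathbf{H}_{-k}'\mathbf{H}_{-k}=\mathbf{I}$, this simplifies to $\text{tr}\left(\left(\mathbf{H}_{m,\omega,r,-k}'\mathbf{H}_{m,\omega,r,-k}\right)^{-1}\mathbf{H}_{m,\omega,r,k}'\mathbf{H}_{m,\omega,r,k}\right)$, which is bounded above by $\sigma_N^{-2}(\mathbf{H}_{m,\omega,r,-k})\,\text{tr}\left(\mathbf{H}_{m,\omega,r,k}'\mathbf{H}_{m,\omega,r,k}\right)$. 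Condition \ref{functioncondition} bounds the numerator by $T_k N G^2$, while Condition \ref{rankcondition} together with $\sigma_1^2\geq N^{-1}\|\mathbf{H}_{m,\omega,r,-k}\|_F^2$ bounds the smallest singular value from below by $\sigma_N^2\geq T_{-k}\bar{h}^2/\kappa_{\max}^2$, where $\bar{h}^2$ denotes the average squared entry of the training feature block. Combining, $T_k^{-1}\text{tr}(\mathbf{P}'\mathbf{P})\leq N G^2\kappa_{\max}^2/(T_{-k}\bar{h}^2)$, so after multiplying by $\text{tr}(\mathbf{S}_{m,\omega,r})$ and conditioning on $\mathbf{S}_{m,\omega,r}$, Condition \ref{entriescondition} (which gives $\mathbb{E}[G^2/\bar{h}^2\mid\mathbf{S}]=\xi^2$) yields the bound $N\kappa_{\max}^2\xi^2\vartheta_m/T_{-k}$. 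Since $N$ is fixed and finite while $T_{-k}=T-T_k\to\infty$, the bias vanishes and $\mathbb{E}[\hat{\vartheta}_m]\to\vartheta_m$.

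I expect the spectral bound to be the main obstacle: making rigorous that the smallest singular value of the training feature matrix grows linearly in $T_{-k}$, so that $T_k^{-1}\text{tr}(\mathbf{P}'\mathbf{P})$ is of order $N/T_{-k}$, is precisely where the featurization regularity conditions all enter, and where the requirement that $N$ remain below the effective training size is essential. Some care is also needed to justify treating the model errors as independent of the features when conditioning, which relies on the lag-selection assumption underlying Condition \ref{modelerrorcondition}.
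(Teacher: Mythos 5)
Your proposal is correct and follows essentially the same route as the paper's proof: your cancellation $\mathbf{R}_{m,\omega,r,k}=\mathbf{P}\,\mathbf{U}_{m,\omega,r,-k}-\mathbf{U}_{m,\omega,r,k}$ is exactly the paper's matrix-normal representation of the residual with row covariance $\bm{\Phi}_{m,\omega,r,k}+\mathbf{I}_{T_k}$, your conditional identity $\left(T_k^{-1}\text{tr}(\mathbf{P}'\mathbf{P})+1\right)\text{tr}\left(\mathbf{S}_{m,\omega,r}\right)$ coincides with the paper's Equation \ref{dh2}, and your spectral control of $\text{tr}\left(\left[\mathbf{H}_{m,\omega,r,-k}'\mathbf{H}_{m,\omega,r,-k}\right]^{-1}\mathbf{H}_{m,\omega,r,k}'\mathbf{H}_{m,\omega,r,k}\right)$ via Conditions \ref{functioncondition}, \ref{rankcondition}, and \ref{entriescondition} mirrors the paper's bound (yours, using $\lambda_{\max}$ times the trace rather than Von Neumann's inequality, is tighter by a factor of $N$, which is immaterial in the limit). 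The only cosmetic difference is that you use nonnegativity of the bias term plus a vanishing upper bound where the paper runs a two-sided sandwich with an explicit $\nu^2$ lower bound; both yield $\lim_{T\rightarrow\infty}\mathbb{E}\left[\hat{\vartheta}_m\right]=\vartheta_m$ identically.
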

	\par
	The proof of Lemma \ref{lemma1} begins with the estimate for one test set with a single individual randomly generated FNN given the variation measure $\mathbf{S}_{m,\omega,r}$. We aggregate these expectations over the sets $k=1,\ldots,K$, generated FNNs $r=1,\ldots,\mathscr{R}$ and the $\varphi\geq1$ observed datasets in $\Omega_\text{obs}$.
	\begin{proof}[Proof of Lemma \ref{lemma1}]
		We seek the conditional expectation $\mathbb{E}\left[\text{tr}\left(\mathbf{R}_{m,\omega,r,k}'\mathbf{R}_{m,\omega,r,k}\right)|\mathbf{S}_{m,\omega,r}\right]$, and can write the unconditional expectation of the underlying parameter based on the definition in Equation \ref{parameter}.
		\begin{align}
			\mathbb{E}\left[\hat{\vartheta}_m\right] &= \frac{1}{\varphi\mathscr{R}K}\sum_{\omega=1}^{\varphi}\sum_{r=1}^{\mathscr{R}}\sum_{k=1}^K\frac{1}{T_k}\mathbb{E}\left[\mathbb{E}\left(\mathbb{E}\left[\text{tr}\left(\mathbf{R}_{m,\omega,r,k}'\mathbf{R}_{m,\omega,r,k}\right)|\mathbf{S}_{m,\omega,r}\right]|\bm{\Sigma}_{m,\omega}\right)\right]\label{fullexp1}
		\end{align}
		From Condition \ref{modelerrorcondition}, we can write the distribution of the residual term $\mathbf{R}_{m,\omega,r,k}$ from that of $\mathbf{U}_{m,\omega,r}$.
		\begin{align}
			\mathbf{U}_{m,\omega,r} &\sim \mathcal{MN}_{T\times d}\left(\mathbf{0},\mathbf{I}_{T},\mathbf{S}_{m,\omega,r}\right)\\
			\mathbf{R}_{m,\omega,r,k} &= \mathbf{H}_{m,\omega,r,k} \left[\mathbf{H}_{m,\omega,r,-k}'\mathbf{H}_{m,\omega,r,-k}\right]^{-1}\mathbf{H}_{m,\omega,r,-k}'\mathbf{Y}_{\omega,-k} - \mathbf{Y}_{\omega,k}\\
			\mathbf{R}_{m,\omega,r,k} &\sim \mathcal{MN}_{T_k \times d}\left(\mathbf{0}, \mathbf{H}_{m,\omega,r,k} \left[\mathbf{H}_{m,\omega,r,-k}'\mathbf{H}_{m,\omega,r,-k}\right]^{-1}\mathbf{H}_{m,\omega,r,k}' + \mathbf{I}_{T_{k}}, \mathbf{S}_{m,\omega,r}\right)\label{Rdistribution}
		\end{align}
		Define $\bm{\Phi}_{m,\omega,r,k} = \mathbf{H}_{m,\omega,r,k}\left[\mathbf{H}_{m,\omega,r,-k}'\mathbf{H}_{m,\omega,r,-k}\right]^{-1}\mathbf{H}_{m,\omega,r,k}'$, and further define $\mathbf{V}_{m,\omega,r,k}\sim\mathcal{MN}_{T_{k}\times d}(\mathbf{0}, \mathbf{I}_{T_{k}}, \mathbf{I}_d)$ such that we can write $\mathbf{R}_{m,\omega,r,k} = \left(\bm{\Phi}_{m,\omega,r,k} + \mathbf{I}_{T_{k}}\right)^{1/2}\mathbf{V}_{m,\omega,r,k}\mathbf{S}_{m,\omega,r}^{1/2}$.
		\begin{align}
			&\mathbb{E}\left[\text{tr}\left(\mathbf{R}'_{m,\omega,r,k}\mathbf{R}_{m,\omega,r,k}\right)|\mathbf{S}_{m,\omega,r}\right]\nonumber\\
			&\hskip1in= \mathbb{E}\left[\text{tr}\left(\mathbf{S}_{m,\omega,r}^{1/2}\mathbf{V}'_{m,\omega,r,k}\left[\bm{\Phi}_{m,\omega,r,k} + \mathbf{I}_{T_{k}}\right]\mathbf{V}_{m,\omega,r,k}\mathbf{S}_{m,\omega,r}^{1/2}\right)|\mathbf{S}_{m,\omega,r}\right]\label{decomp1}\\
			&\hskip1in= \mathbb{E}\left[\text{tr}\left(\mathbf{S}_{m,\omega,r}\mathbf{V}'_{m,\omega,r,k}\left[\bm{\Phi}_{m,\omega,r,k} + \mathbf{I}_{T_{k}}\right]\mathbf{V}_{m,\omega,r,k}\right) |\mathbf{S}_{m,\omega,r}\right]\\
			&\hskip1in= \text{tr}\left(\mathbb{E}\left[\mathbf{S}_{m,\omega,r}\mathbf{V}'_{m,\omega,r,k}\left[\bm{\Phi}_{m,\omega,r,k} + \mathbf{I}_{T_{k}}\right]\mathbf{V}_{m,\omega,r,k} |\mathbf{S}_{m,\omega,r}\right]\right)\\
			&\hskip1in= \text{tr}\left(\mathbf{S}_{m,\omega,r}\mathbb{E}\left[\mathbf{V}'_{m,\omega,r,k}\left[\bm{\Phi}_{m,\omega,r,k} + \mathbf{I}_{T_{k}}\right]\mathbf{V}_{m,\omega,r,k} |\mathbf{S}_{m,\omega,r}\right]\right)\label{dh1}
		\end{align}
		We can write 
		\begin{align}
			&\mathbb{E}\left[\mathbf{V}'_{m,\omega,r,k}\left[\bm{\Phi}_{m,\omega,r,k} + \mathbf{I}_{T_{k}}\right]\mathbf{V}_{m,\omega,r,k}|\mathbf{S}_{m,\omega,r}\right]\nonumber\\
			&\hskip1in= \mathbb{E}\left[\mathbb{E}\left(\mathbf{V}'_{m,\omega,r,k}\left[\bm{\Phi}_{m,\omega,r,k} + \mathbf{I}_{T_{k}}\right]\mathbf{V}_{m,\omega,r,k}|\bm{\Phi}_{m,\omega,r,k}, \mathbf{S}_{m,\omega,r}\right)\big|\mathbf{S}_{m,\omega,r}\right],
		\end{align}
		and from \citet{gupta18},
		\begin{align}
			&\mathbb{E}\left(\mathbf{V}'_{m,\omega,r,k}\left[\bm{\Phi}_{m,\omega,r,k} + \mathbf{I}_{T_{k}}\right]\mathbf{V}_{m,\omega,r,k}|\bm{\Phi}_{m,\omega,r,k}, \mathbf{S}_{m,\omega,r}\right)]\nonumber\\
			&\hskip1in= \text{tr}\left(\mathbf{I}_{T_{k}}\left[\bm{\Phi}_{m,\omega,r,k} + \mathbf{I}_{T_{k}}\right]\right)\mathbf{I}_d\\
			&\hskip1in= \text{tr}\left(\bm{\Phi}_{m,\omega,r,k} + \mathbf{I}_{T_{k}}\right)\mathbf{I}_d.
		\end{align}
		Thus,
		\begin{align}
			\mathbb{E}\left[\mathbf{V}'_{m,\omega,r,k}\left[\bm{\Phi}_{m,\omega,r,k} + \mathbf{I}_{T_{k}}\right]\mathbf{V}_{m,\omega,r,k}|\mathbf{S}_{m,\omega,r}\right] &= \mathbb{E}\left[\text{tr}\left(\bm{\Phi}_{m,\omega,r,k} + \mathbf{I}_{T_{k}}\right)\mathbf{I}_d|\mathbf{S}_{m,\omega,r}\right]\\
			&= \mathbb{E}\left[\text{tr}\left(\bm{\Phi}_{m,\omega,r,k}\right)|\mathbf{S}_{m,\omega,r}\right]\mathbf{I}_d + T_{k}\mathbf{I}_d.\label{dh2}
		\end{align}
		We modify Equation \ref{dh2} by exchanging $\bm{\Phi}_{m,\omega,r,k}$ for its definition as written above.
		\begin{align}
			\mathbb{E}\left[\text{tr}\left(\bm{\Phi}_{m,\omega,r,k}\right)|\mathbf{S}_{m,\omega,r}\right] &= \mathbb{E}\left[\text{tr}\left(\mathbf{H}_{m,\omega,r,k}\left[\mathbf{H}_{m,\omega,r,-k}'\mathbf{H}_{m,\omega,r,-k}\right]^{-1}\mathbf{H}_{m,\omega,r,k}'\right)|\mathbf{S}_{m,\omega,r}\right]\\
			&= \mathbb{E}\left[\text{tr}\left(\left[\mathbf{H}_{m,\omega,r,-k}'\mathbf{H}_{m,\omega,r,-k}\right]^{-1}\mathbf{H}_{m,\omega,r,k}'\mathbf{H}_{m,\omega,r,k}\right)|\mathbf{S}_{m,\omega,r}\right]\label{dh3}
		\end{align}
		\vskip0in
		We examine the matrices $\mathbf{H}_{m,\omega,r,-k}$ and $\mathbf{H}_{m,\omega,r,k}$ under Conditions \ref{functioncondition}, \ref{rankcondition}, \ref{entriescondition}, and \ref{approximationcondition}. Recall $T_k$ is the number of rows in the test set, and $T_{-k}$ the number in the training set. We obtain a fixed and finite $N$ from Condition \ref{approximationcondition}, and $\text{rank}(\mathbf{H}_{m,\omega,r,-k})= N$ under Equation \ref{rankconditionequation} from Condition \ref{rankcondition}. We label the singular values of the matrices $\sigma_i(\mathbf{H}_{m,\omega,r,-k})>0$ and $\sigma_i(\mathbf{H}_{m,\omega,r,k})\geq 0$ for $i=1,\ldots,N$. For any matrix $\mathbf{P}$, $\left\|\mathbf{P}\right\|_F = \sqrt{\sum_{i=1}^{n_1}\sum_{j=1}^{n_2}|p_{ij}|^2} = \sqrt{\sum_{i=1}^{\text{rank}(\mathbf{P})}\sigma_i(\mathbf{P})^2}$. From Condition \ref{functioncondition}, each entry in $\mathbf{H}_{m,\omega,r,-k}$ and $\mathbf{H}_{m,\omega,r,k}$ lies in a bounded interval, $h_{m,\omega,r,-k,ij},h_{m,\omega,r,k,ij}\in[a,b]$ such that $|h_{m,\omega,r,-k,ij}|,|h_{m,\omega,r,k,ij}|\leq G$, and we use this to establish the upper bounds for the maximal singular values shown in Equations \ref{maxsvtrain} and \ref{maxsvtest}. To write the minimum values, we define the average squared matrix entry as in Condition \ref{entriescondition}, and note the combination of Conditions \ref{functioncondition}, \ref{rankcondition}, and \ref{entriescondition} yields $0<\nu^2\leq 1$, $1\leq \xi^2<\infty$, and $1\leq \varrho^4 < \infty$.
		\begin{align}
			\sqrt{T_{-k}\bar{h}_{m,\omega,r,-k}^2}&\leq \sigma_1\left(\mathbf{H}_{m,\omega,r,-k}\right) \leq \left\|\mathbf{H}_{m,\omega,r,-k}\right\|_F \leq \sqrt{T_{-k}NG^2}\label{maxsvtrain}\\
			\sqrt{T_k\bar{h}^2_{m,\omega,r,k}}&\leq\sigma_1\left(\mathbf{H}_{m,\omega,r,k}\right) \leq \left\|\mathbf{H}_{m,\omega,r,k}\right\|_F \leq \sqrt{T_{k}NG^2}\label{maxsvtest}
		\end{align}
		\vskip0in
		Under Condition \ref{rankcondition}, $\mathbf{H}_{m,\omega,r,-k}$ has a finite condition number $\kappa\left(\mathbf{H}_{m,\omega,r,-k}\right)\leq\kappa_{\max}$, and we can use the bounds in Equations \ref{maxsvtrain} and \ref{maxsvtest} to obtain bounds for the minimum singular values. The rank of $\mathbf{H}_{m,\omega,r,k}$ is not necessarily $N$, as $T_k$ can be less than $N$, but it is at least $\min\left\{T_k,N\right\}$.
		\begin{align}
			\frac{\sqrt{T_{-k}\bar{h}_{m,\omega,r,-k}^2}}{\kappa_{\max}}&\leq \sigma_N\left(\mathbf{H}_{m,\omega,r,-k}\right) \leq  \sqrt{T_{-k}\bar{h}_{m,\omega,r,-k}^2}\label{minsvtrain}\\
			0 &\leq \sigma_N\left(\mathbf{H}_{m,\omega,r,k}\right) \leq \sqrt{T_k\bar{h}^2_{m,\omega,r,k}}\label{minsvtest}
		\end{align}
		\vskip0in
		We return to the $\text{tr}\left(\left[\mathbf{H}_{m,\omega,r,-k}'\mathbf{H}_{m,\omega,r,-k}\right]^{-1}\mathbf{H}_{m,\omega,r,k}'\mathbf{H}_{m,\omega,r,k}\right)$ piece of Equation \ref{dh3}. Applying the above results, we can write $\left[\mathbf{H}_{m,\omega,r,-k}'\mathbf{H}_{m,\omega,r,-k}\right]^{-1}\succ 0$, $\mathbf{H}_{m,\omega,r,k}'\mathbf{H}_{m,\omega,r,k}\succeq 0$, and the following bounds on the extreme eigenvalues.
		\begin{align}
			\frac{1}{T_{-k}\bar{h}_{m,\omega,r,-k}^2} &\leq \lambda_1\left(\left[\mathbf{H}_{m,\omega,r,-k}'\mathbf{H}_{m,\omega,r,-k}\right]^{-1}\right) \leq \frac{\kappa_{\max}^2}{T_{-k}\bar{h}_{m,\omega,r,-k}^2}\label{traininverseeigen1}\\
			\frac{1}{T_{-k}NG^2} &\leq \lambda_N\left(\left[\mathbf{H}_{m,\omega,r,-k}'\mathbf{H}_{m,\omega,r,-k}\right]^{-1}\right) \leq \frac{1}{T_{-k}\bar{h}_{m,\omega,r,-k}^2}\label{traininverseeigen2}\\
			T_k\bar{h}^2_{m,\omega,r,k} &\leq \lambda_1\left(\mathbf{H}_{m,\omega,r,k}'\mathbf{H}_{m,\omega,r,k}\right) \leq T_kNG^2\label{testeigen1}\\
			0 &\leq \lambda_N\left(\mathbf{H}_{m,\omega,r,k}'\mathbf{H}_{m,\omega,r,k}\right) \leq T_k\bar{h}^2_{m,\omega,r,k}\label{testeigen2}
		\end{align}
		\vskip0in
		We can bound the trace of the product with the eigenvalues using Von Neumann's trace inequality.
		\begin{align}
			&\text{tr}\left(\left[\mathbf{H}_{m,\omega,r,-k}'\mathbf{H}_{m,\omega,r,-k}\right]^{-1}\mathbf{H}_{m,\omega,r,k}'\mathbf{H}_{m,\omega,r,k}\right)\nonumber\\
			&\hskip1in\leq \sum_{i=1}^N \lambda_i\left(\left[\mathbf{H}_{m,\omega,r,-k}'\mathbf{H}_{m,\omega,r,-k}\right]^{-1}\right) \lambda_i\left(\mathbf{H}_{m,\omega,r,k}'\mathbf{H}_{m,\omega,r,k}\right)\\
			&\hskip1in\leq N\lambda_1\left(\left[\mathbf{H}_{m,\omega,r,-k}'\mathbf{H}_{m,\omega,r,-k}\right]^{-1}\right) \lambda_1\left(\mathbf{H}_{m,\omega,r,k}'\mathbf{H}_{m,\omega,r,k}\right)\\
			&\hskip1in\leq \frac{T_kN^2G^2\kappa_{\max}^2}{T_{-k}\bar{h}_{m,\omega,r,-k}^2}\label{tracebound}
		\end{align}
		The upper bound in Equation \ref{tracebound} is a positive, finite constant. Returning to Equation \ref{dh3},
		\begin{align}
			\mathbb{E}\left[\text{tr}\left(\bm{\Phi}_{m,\omega,r,k}\right)|\mathbf{S}_{m,\omega,r}\right] &\leq \mathbb{E}\left[\frac{T_kN^2G^2\kappa_{\max}^2}{T_{-k}\bar{h}_{m,\omega,r,-k}^2}\big|\mathbf{S}_{m,\omega,r}\right]\\
			&\leq \frac{T_k}{T_{-k}}\left(NG\kappa_{\max}\right)^2\mathbb{E}\left[\left(\bar{h}_{m,\omega,r,-k}^2\right)^{-1}|\mathbf{S}_{m,\omega,r}\right]\\
			&\leq \frac{T_k}{T_{-k}}\left(N\xi\kappa_{\max}\right)^2.
		\end{align}
		Using this result in Equation \ref{dh2},
		\begin{align}
			\mathbb{E}\left[\mathbf{V}'_{m,\omega,r,k}\left[\bm{\Phi}_{m,\omega,r,k} + \mathbf{I}_{T_{k}}\right]\mathbf{V}_{m,\omega,r,k}|\mathbf{S}_{m,\omega,r}\right] &\leq \frac{T_k}{T_{-k}}\left(N\xi\kappa_{\max}\right)^2\mathbf{I}_d + T_k\mathbf{I}_d,
		\end{align}
		and in Equation \ref{dh1},
		\begin{align}
			\mathbb{E}\left[\text{tr}\left(\mathbf{R}'_{m,\omega,r,k}\mathbf{R}_{m,\omega,r,k}\right)|\mathbf{S}_{m,\omega,r}\right] &\leq \text{tr}\left(\mathbf{S}_{m,\omega,r}\left[\frac{T_k}{T_{-k}}\left(N\xi\kappa_{\max}\right)^2\mathbf{I}_d + T_k\mathbf{I}_d\right]\right)\\
			&\leq \left[\frac{T_k}{T_{-k}}\left(N\xi\kappa_{\max}\right)^2 + T_k\right]\text{tr}\left(\mathbf{S}_{m,\omega,r}\right)\label{help1}.
		\end{align}
		\vskip0in
		We now require a lower bound for the expectation. We examine the trace inequality
		\begin{align}
			&\text{tr}\left(\left[\mathbf{H}_{m,\omega,r,-k}'\mathbf{H}_{m,\omega,r,-k}\right]^{-1}\mathbf{H}_{m,\omega,r,k}'\mathbf{H}_{m,\omega,r,k}\right)\nonumber\\
			&\hskip1in\geq \sum_{i=1}^N \lambda_i\left(\left[\mathbf{H}_{m,\omega,r,-k}'\mathbf{H}_{m,\omega,r,-k}\right]^{-1}\right) \lambda_{N-i+1}\left(\mathbf{H}_{m,\omega,r,k}'\mathbf{H}_{m,\omega,r,k}\right)\\
			&\hskip1in\geq \lambda_N\left(\left[\mathbf{H}_{m,\omega,r,-k}'\mathbf{H}_{m,\omega,r,-k}\right]^{-1}\right) \lambda_1\left(\mathbf{H}_{m,\omega,r,k}'\mathbf{H}_{m,\omega,r,k}\right)\\
			&\hskip1in\geq \frac{T_k\bar{h}^2_{m,\omega,r,k}}{T_{-k}NG^2}.\label{lowertracebound}
		\end{align}
		Again returning to Equation \ref{dh3},
		\begin{align}
			\mathbb{E}\left[\text{tr}\left(\bm{\Phi}_{m,\omega,r,k}\right)|\mathbf{S}_{m,\omega,r}\right] &\geq \mathbb{E}\left[\frac{T_k\bar{h}^2_{m,\omega,r,k}}{T_{-k}NG^2}\big|\mathbf{S}_{m,\omega,r}\right]\\
			&\geq \frac{T_k}{T_{-k}NG^2}\mathbb{E}\left[\bar{h}_{m,\omega,r,k}^2|\mathbf{S}_{m,\omega,r}\right]\\
			&\geq \frac{T_k\nu^2}{T_{-k}N},
		\end{align}
		and using the result in Equation \ref{dh2},
		\begin{align}
			\mathbb{E}\left[\mathbf{V}'_{m,\omega,r,k}\left[\bm{\Phi}_{m,\omega,r,k} + \mathbf{I}_{T_{k}}\right]\mathbf{V}_{m,\omega,r,k}|\mathbf{S}_{m,\omega,r}\right] &\geq \frac{T_k\nu^2}{T_{-k}N}\mathbf{I}_d + T_k\mathbf{I}_d.
		\end{align}
		This yields the lower bound,
		\begin{align}
			\mathbb{E}\left[\text{tr}\left(\mathbf{R}'_{m,\omega,r,k}\mathbf{R}_{m,\omega,r,k}\right)|\mathbf{S}_{m,\omega,r}\right] &\geq \text{tr}\left(\mathbf{S}_{m,\omega,r}\left[\frac{T_k\nu^2}{T_{-k}N}\mathbf{I}_d + T_k\mathbf{I}_d\right]\right)\\
			&\geq \left[\frac{T_k\nu^2}{T_{-k}N} + T_k\right]\text{tr}\left(\mathbf{S}_{m,\omega,r}\right)\label{help2}.
		\end{align}
		\vskip0in
		We return to the expectation equation given in Equation \ref{fullexp1} with the two bounds from Equations \ref{help1} and \ref{help2}.
		\begin{align}
			&\frac{1}{\varphi\mathscr{R}K}\sum_{\omega=1}^{\varphi}\sum_{r=1}^{\mathscr{R}}\sum_{k=1}^K\frac{1}{T_k}\mathbb{E}\left[\mathbb{E}\left(\left[\frac{T_k\nu^2}{T_{-k}N} + T_k\right]\text{tr}\left(\mathbf{S}_{m,\omega,r}\right)|\bm{\Sigma}_{m,\omega}\right)\right]\leq \mathbb{E}\left[\hat{\vartheta}_m\right] \nonumber\\ 
			&\hskip0.5in \leq\frac{1}{\varphi\mathscr{R}K}\sum_{\omega=1}^{\varphi}\sum_{r=1}^{\mathscr{R}}\sum_{k=1}^K\frac{1}{T_k}\mathbb{E}\left[\mathbb{E}\left(\left[\frac{T_k}{T_{-k}}\left(N\xi\kappa_{\max}\right)^2 + T_k\right]\text{tr}\left(\mathbf{S}_{m,\omega,r}\right)|\bm{\Sigma}_{m,\omega}\right)\right] \\
			&\frac{1}{\varphi\mathscr{R}K}\sum_{\omega=1}^{\varphi}\sum_{r=1}^{\mathscr{R}}\sum_{k=1}^K\frac{1}{T_k}\mathbb{E}\left[\left(\frac{T_k\nu^2}{T_{-k}N} + T_k\right)\text{tr}\left(\bm{\Sigma}_{m,\omega}\right)\right]\leq \mathbb{E}\left[\hat{\vartheta}_m\right] \nonumber\\ 
			&\hskip0.5in \leq\frac{1}{\varphi\mathscr{R}K}\sum_{\omega=1}^{\varphi}\sum_{r=1}^{\mathscr{R}}\sum_{k=1}^K\frac{1}{T_k}\mathbb{E}\left[\left(\frac{T_k}{T_{-k}}\left(N\xi\kappa_{\max}\right)^2 + T_k\right)\text{tr}\left(\bm{\Sigma}_{m,\omega}\right)\right]\\
			&\frac{1}{\varphi\mathscr{R}K}\sum_{\omega=1}^{\varphi}\sum_{r=1}^{\mathscr{R}}\sum_{k=1}^K\left[\frac{\nu^2}{T_{-k}N} + 1\right]\vartheta_m \leq \mathbb{E}\left[\hat{\vartheta}_m\right] \nonumber\\ 
			&\hskip0.5in \leq\frac{1}{\varphi\mathscr{R}K}\sum_{\omega=1}^{\varphi}\sum_{r=1}^{\mathscr{R}}\sum_{k=1}^K\left[\frac{1}{T_{-k}}\left(N\xi\kappa_{\max}\right)^2 + 1\right]\vartheta_m
		\end{align}
		\vskip0in
		In the limit as $T\rightarrow\infty$, with a fixed test set size $T_k<\infty$, $T_{-k} = T-T_k \rightarrow\infty$.
		\begin{align}
			&\lim_{T\rightarrow\infty} \frac{1}{\varphi\mathscr{R}K} \sum_{\omega=1}^{\varphi}\sum_{r=1}^{\mathscr{R}}\sum_{k=1}^K\left[\frac{\nu^2}{T_{-k}N} + 1\right]\vartheta_m \nonumber\\
			&\hskip1in= \frac{1}{\varphi\mathscr{R}K} \sum_{\omega=1}^{\varphi}\sum_{r=1}^{\mathscr{R}}\sum_{k=1}^K\lim_{T_{-k}\rightarrow\infty} \left[\frac{\nu^2}{T_{-k}N} + 1\right]\vartheta_m=\vartheta_m\\
			&\lim_{T\rightarrow\infty} \frac{1}{\varphi\mathscr{R}K} \sum_{\omega=1}^{\varphi}\sum_{r=1}^{\mathscr{R}}\sum_{k=1}^K\left[\frac{1}{T_{-k}}\left(N\xi\kappa_{\max}\right)^2 + 1\right]\vartheta_m \nonumber\\
			&\hskip1in= \frac{1}{\varphi\mathscr{R}K} \sum_{\omega=1}^{\varphi}\sum_{r=1}^{\mathscr{R}}\sum_{k=1}^K\lim_{T_{-k}\rightarrow\infty}\left[\frac{1}{T_{-k}}\left(N\xi\kappa_{\max}\right)^2 + 1\right]\vartheta_m=\vartheta_m
		\end{align}
		Thus, $\lim_{T\rightarrow\infty} \mathbb{E}\left[\hat{\vartheta}_m\right] = \vartheta_m$.
	\end{proof}
	
	\begin{lemma}\label{lemma2}
		Under the listed conditions, $\lim_{T\rightarrow\infty} \text{Var}\left(\hat{\vartheta}_{m}\right)= \varphi^{-1}\tau_\omega^2 + \left(\varphi\mathscr{R}\right)^{-1}\tau_r^2$. 
	\end{lemma}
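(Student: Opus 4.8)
The plan is to peel off the three nested sources of randomness in $\hat{\vartheta}_m$ — the cross-validation residuals given the featurization covariances $\mathbf{S}_{m,\omega,r}$, the featurizations given the realization covariances $\bm{\Sigma}_{m,\omega}$, and the realizations themselves — by applying the law of total variance twice. Write the per-fold summand as $A_{m,\omega,r,k}=T_k^{-1}\text{tr}(\mathbf{R}'_{m,\omega,r,k}\mathbf{R}_{m,\omega,r,k})$, so that $\hat{\vartheta}_m=(\varphi\mathscr{R})^{-1}\sum_{\omega,r}\bar{A}_{m,\omega,r}$ with $\bar{A}_{m,\omega,r}=K^{-1}\sum_k A_{m,\omega,r,k}$. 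Conditioning on the sigma-field $\mathcal{F}_S$ generated by all $\mathbf{S}_{m,\omega,r}$, I decompose
\[
\text{Var}(\hat{\vartheta}_m)=\mathbb{E}\!\left[\text{Var}(\hat{\vartheta}_m\mid\mathcal{F}_S)\right]+\text{Var}\!\left(\mathbb{E}[\hat{\vartheta}_m\mid\mathcal{F}_S]\right),
\]
and argue that the first term vanishes as $T\rightarrow\infty$ while the second converges to the stated expression.

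For the first (residual-level) term I reuse the representation $\mathbf{R}_{m,\omega,r,k}=(\bm{\Phi}_{m,\omega,r,k}+\mathbf{I}_{T_k})^{1/2}\mathbf{V}_{m,\omega,r,k}\mathbf{S}_{m,\omega,r}^{1/2}$ from the proof of Lemma~\ref{lemma1}, so that $A_{m,\omega,r,k}$ is a quadratic form in the standard Gaussian $\mathbf{V}_{m,\omega,r,k}$ whose conditional variance is a function of $\mathbf{S}_{m,\omega,r}$ and $\bm{\Phi}_{m,\omega,r,k}+\mathbf{I}_{T_k}$. I bound it uniformly in $T$ using the same singular-value and eigenvalue estimates (Equations~\ref{maxsvtrain}--\ref{testeigen2}) already in hand, giving $\text{Var}(A_{m,\omega,r,k}\mid\mathcal{F}_S)\le C<\infty$. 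Since $T_k$ is fixed, $K\rightarrow\infty$ as $T\rightarrow\infty$, and the fold-averaging contributes a diagonal term of order $K^{-1}$. The cross-fold covariances require care: for finite $T$ the overlapping training sets correlate the fitted projections, but as $T_{-k}\rightarrow\infty$ the trace bounds from the proof of Lemma~\ref{lemma1} force $\text{tr}(\bm{\Phi}_{m,\omega,r,k})\rightarrow 0$, so the residuals collapse onto the model errors $\mathbf{U}_{m,\omega,r}$ restricted to disjoint test rows, which are independent across folds by Condition~\ref{modelerrorcondition}. Hence the conditional covariances are $o(1)$ and $\mathbb{E}[\text{Var}(\hat{\vartheta}_m\mid\mathcal{F}_S)]\rightarrow 0$.

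For the second term, the squeeze in the proof of Lemma~\ref{lemma1} (Equations~\ref{help1}--\ref{help2}) gives $\mathbb{E}[A_{m,\omega,r,k}\mid\mathcal{F}_S]\rightarrow\text{tr}(\mathbf{S}_{m,\omega,r})$, so $\mathbb{E}[\hat{\vartheta}_m\mid\mathcal{F}_S]\rightarrow(\varphi\mathscr{R})^{-1}\sum_{\omega,r}\text{tr}(\mathbf{S}_{m,\omega,r})$. A second application of the law of total variance, now conditioning on the sigma-field $\mathcal{F}_\Sigma$ generated by the $\bm{\Sigma}_{m,\omega}$, splits this into a featurization piece and a realization piece. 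By Condition~\ref{distributioncondition}, given $\mathcal{F}_\Sigma$ the $\text{tr}(\mathbf{S}_{m,\omega,r})$ are independent across $r$ and $\omega$ with variance $\tau_r^2$, yielding $(\varphi\mathscr{R})^{-2}\cdot\varphi\mathscr{R}\,\tau_r^2=(\varphi\mathscr{R})^{-1}\tau_r^2$; and $\mathbb{E}[\,\cdot\mid\mathcal{F}_\Sigma]=\varphi^{-1}\sum_\omega\text{tr}(\bm{\Sigma}_{m,\omega})$ has variance $\varphi^{-2}\cdot\varphi\,\tau_\omega^2=\varphi^{-1}\tau_\omega^2$ by the independence and finite variance of the $\bm{\Sigma}_{m,\omega}$. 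Summing the two pieces gives $\varphi^{-1}\tau_\omega^2+(\varphi\mathscr{R})^{-1}\tau_r^2$.

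I expect the main obstacle to be the residual-level term, specifically showing that the cross-fold covariances vanish in the limit. The uniform bound on each per-fold conditional variance is a routine quadratic-form computation that reuses the existing eigenvalue bounds, but the overlap of training sets makes the $A_{m,\omega,r,k}$ genuinely correlated at any finite $T$; the argument must lean on $\text{tr}(\bm{\Phi}_{m,\omega,r,k})\rightarrow 0$ to decouple the folds and confirm the off-diagonal contributions are $o(1)$ rather than $O(1)$, so that the fold-averaging truly scales like $K^{-1}$ and leaves only the $\tau_r^2$ and $\tau_\omega^2$ terms in the limit.
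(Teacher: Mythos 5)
Your proposal is correct and follows essentially the same route as the paper's proof: a nested law of total variance over $\mathbf{S}_{m,\omega,r}$ and $\bm{\Sigma}_{m,\omega}$, the representation $\mathbf{R}_{m,\omega,r,k}=\left(\bm{\Phi}_{m,\omega,r,k}+\mathbf{I}_{T_k}\right)^{1/2}\mathbf{V}_{m,\omega,r,k}\mathbf{S}_{m,\omega,r}^{1/2}$ with the eigenvalue bounds from Lemma \ref{lemma1} to control the residual-level conditional variance (note your uniform-in-$T$ bound is really of the form $C\,\text{tr}^2\left(\mathbf{S}_{m,\omega,r}\right)$ rather than a constant, which is integrable by Condition \ref{distributioncondition} and is exactly the dominating function the paper feeds to dominated convergence), the $K^{-1}$ scaling of the fold average with $K\rightarrow\infty$ at fixed $T_k$ to make that term vanish, and Condition \ref{distributioncondition} to extract $\left(\varphi\mathscr{R}\right)^{-1}\tau_r^2$ and $\varphi^{-1}\tau_\omega^2$ from the two remaining terms. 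The one point of divergence is the cross-fold dependence: the paper simply declares the $K$ splits uncorrelated draws, invoking random assignment and Condition \ref{modelerrorcondition} and moving the variance inside the sum over $k$, whereas you explicitly acknowledge the correlation induced by overlapping training sets and sketch its vanishing via $\text{tr}\left(\bm{\Phi}_{m,\omega,r,k}\right)\rightarrow 0$ collapsing the residuals onto disjoint-row model errors --- a more careful treatment of a step the paper elides, and your flagged concern about the $o(1)$ rate being uniform over the $K(K-1)$ off-diagonal pairs (so that $K^{-2}$ averaging still wins) is precisely where that extra care is needed.
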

	\par
	For use in the proof of Lemma \ref{lemma2}, we first state and prove Remark \ref{remark1}, and then proceed like in the proof of Lemma \ref{lemma1}.
	\begin{remark}\label{remark1}
		For any square, positive semidefinite matrix $\mathbf{P}\in\mathbb{R}^{n\times n}$, $\text{tr}^2\left(\mathbf{P}\right)\leq (n^2-n+1)\;\text{tr}\left(\mathbf{P}^2\right)$.
	\end{remark}
	\begin{proof}[Proof of Remark \ref{remark1}]
		Let $\lambda_1\left(\mathbf{P}\right),\lambda_2\left(\mathbf{P}\right),\ldots,\lambda_n\left(\mathbf{P}\right)\geq 0$ be the eigenvalues of $\mathbf{P}\in\mathbb{R}^{n\times n}$. For any $n\geq1$ we can write,
		\begin{align}
			\text{tr}^2\left(\mathbf{P}\right) &= \left(\sum_{i=1}^n\lambda_i(\mathbf{P})\right)^2\\
			&= \sum_{i=1}^n\lambda_i\left(\mathbf{P}\right)^2 + 2\sum_{i=1}^n\sum_{j=1}^{i-1}\lambda_i\left(\mathbf{P}\right)\lambda_j\left(\mathbf{P}\right)\\
			&\leq \sum_{i=1}^n\lambda_i\left(\mathbf{P}\right)^2 + n(n-1)\sum_{i=1}^n\lambda_i\left(\mathbf{P}\right)^2\\
			&\leq \left(n^2-n+1\right)\sum_{i=1}^n\lambda_i\left(\mathbf{P}\right)^2
		\end{align}
		with the last line equal to $(n^2-n+1)\;\text{tr}\left(\mathbf{P}^2\right)$. Thus, $\text{tr}^2\left(\mathbf{P}\right)\leq (n^2-n+1)\;\text{tr}\left(\mathbf{P}^2\right)$.
	\end{proof}
	\begin{proof}[Proof of Lemma \ref{lemma2}]
		We isolate the quantity $\text{tr}\left(\mathbf{R}'_{m,\omega,r,k}\mathbf{R}_{m,\omega,r,k}\right)$ like in the proof of Lemma \ref{lemma1} and examine the conditional variance $\text{Var}\left(\text{tr}\left[\mathbf{R}_{m,\omega,r,k}'\mathbf{R}_{m,\omega,r,k}\right]|\mathbf{S}_{m,\omega,r}\right)$. The unconditional variance follows from the law of total variance and from the definition in Equation \ref{parameter}. 
		\begin{align}
			\text{Var}\left(\hat{\vartheta}_m\right) &= \mathbb{E}\left[\mathbb{E}\left(\text{Var}\left[\frac{1}{\varphi\mathscr{R}K}\sum_{\omega = 1}^\varphi \sum_{r=1}^{\mathscr{R}}\sum_{k=1}^{K}\frac{1}{T_k}\text{tr}\left(\mathbf{R}'_{m,\omega,r,k}\mathbf{R}_{m,\omega,r,k}\right)|\mathbf{S}_{m,\omega,r}\right]|\bm{\Sigma}_{m,\omega}\right)\right] \nonumber\\
			&\hskip0.2in + \mathbb{E}\left[\text{Var}\left(\mathbb{E}\left[\frac{1}{\varphi\mathscr{R}K}\sum_{\omega = 1}^\varphi \sum_{r=1}^{\mathscr{R}}\sum_{k=1}^{K}\frac{1}{T_k}\text{tr}\left(\mathbf{R}'_{m,\omega,r,k}\mathbf{R}_{m,\omega,r,k}\right)|\mathbf{S}_{m,\omega,r}\right]|\bm{\Sigma}_{m,\omega}\right)\right]\nonumber\\
			&\hskip0.2in + \text{Var}\left(\mathbb{E}\left[\mathbb{E}\left(\frac{1}{\varphi\mathscr{R}K}\sum_{\omega = 1}^\varphi \sum_{r=1}^{\mathscr{R}}\sum_{k=1}^{K}\frac{1}{T_k}\text{tr}\left[\mathbf{R}'_{m,\omega,r,k}\mathbf{R}_{m,\omega,r,k}\right]|\mathbf{S}_{m,\omega,r}\right)|\bm{\Sigma}_{m,\omega}\right]\right)\label{fullvar1}
		\end{align}
		We first examine $\text{Var}\left(\text{tr}\left[\mathbf{R}_{m,\omega,r,k}'\mathbf{R}_{m,\omega,r,k}\right]|\mathbf{S}_{m,\omega,r}\right)$.
		\begin{align}
			&\text{Var}\left(\text{tr}\left[\mathbf{R}_{m,\omega,r,k}'\mathbf{R}_{m,\omega,r,k}\right]|\mathbf{S}_{m,\omega,r}\right) = \mathbb{E}\left[\text{tr}^2\left(\mathbf{R}_{m,\omega,r,k}'\mathbf{R}_{m,\omega,r,k}\right)|\mathbf{S}_{m,\omega,r}\right] \nonumber\\
			&\hskip3in - \mathbb{E}\left[\text{tr}\left(\mathbf{R}_{m,\omega,r,k}'\mathbf{R}_{m,\omega,r,k}\right)|\mathbf{S}_{m,\omega,r}\right]^2\label{vardecomp}
		\end{align} 
		To establish an bounds on the variance, we seek both lower and upper bounds for the terms in Equation \ref{vardecomp}. The bounds for the second term can be directly obtained from the derivation in the Proof of Lemma \ref{lemma1}.
		\begin{align}
			&\left[\frac{T_k\nu^2}{T_{-k}N} + T_k\right]^2\text{tr}^2\left(\mathbf{S}_{m,\omega,r}\right) \leq \mathbb{E}\left[\text{tr}\left(\mathbf{R}_{m,\omega,r,k}'\mathbf{R}_{m,\omega,r,k}\right)|\mathbf{S}_{m,\omega,r}\right]^2 \nonumber\\
			&\hskip1.5in \leq \left[\frac{T_k}{T_{-k}}\left(N\xi\kappa_{\max}\right)^2 + T_k\right]^2\text{tr}^2\left(\mathbf{S}_{m,\omega,r}\right)\\
			&\left[\frac{T_k^2\nu^4}{T_{-k}^2N^2} + 2\frac{T_k^2\nu^2}{T_{-k}N} + T_k^2\right]\text{tr}^2\left(\mathbf{S}_{m,\omega,r}\right) \leq \mathbb{E}\left[\text{tr}\left(\mathbf{R}_{m,\omega,r,k}'\mathbf{R}_{m,\omega,r,k}\right)|\mathbf{S}_{m,\omega,r}\right]^2\nonumber\\
			&\hskip1.5in \leq \left[\frac{T_k^2}{T_{-k}^2}\left(N\xi\kappa_{\max}\right)^4 + 2\frac{T_k^2}{T_{-k}}\left(N\xi\kappa_{\max}\right)^2 + T_k^2\right]\text{tr}^2\left(\mathbf{S}_{m,\omega,r}\right)\label{bounds2}
		\end{align}
		\vskip0in
		For the first term in Equation \ref{vardecomp}, we follow the strategy of the proof of Lemma \ref{lemma1}. The matrix in the trace of Equation \ref{vdecomp1} is square and positive semidefinite, and we apply the result of Remark \ref{remark1}.
		\begin{align}
			&\mathbb{E}\left[\text{tr}^2\left(\mathbf{R}'_{m,\omega,r,k}\mathbf{R}_{m,\omega,r,k}\right)|\mathbf{S}_{m,\omega,r}\right]\nonumber\\
			&\hskip0.5in= \mathbb{E}\left[\text{tr}^2\left(\mathbf{S}_{m,\omega,r}^{1/2}\mathbf{V}'_{m,\omega,r,k}\left[\bm{\Phi}_{m,\omega,r,k} + \mathbf{I}_{T_{k}}\right]\mathbf{V}_{m,\omega,r,k}\mathbf{S}_{m,\omega,r}^{1/2}\right)|\mathbf{S}_{m,\omega,r}\right]\label{vdecomp1}\\
			&\hskip0.5in \leq (d^2-d+1)\;\mathbb{E}\bigg[\text{tr}\bigg(\mathbf{S}_{m,\omega,r}\mathbf{V}'_{m,\omega,r,k}\left[\bm{\Phi}_{m,\omega,r,k} + \mathbf{I}_{T_{k}}\right]\mathbf{V}_{m,\omega,r,k}\mathbf{S}_{m,\omega,r}\nonumber\\
			&\hskip2.5in\mathbf{V}'_{m,\omega,r,k}\left[\bm{\Phi}_{m,\omega,r,k} + \mathbf{I}_{T_{k}}\right]\mathbf{V}_{m,\omega,r,k}\bigg)|\mathbf{S}_{m,\omega,r}\bigg]\\
			&\hskip0.5in \leq (d^2-d+1)\;\text{tr}\bigg(\mathbf{S}_{m,\omega,r}\mathbb{E}\bigg[\mathbf{V}'_{m,\omega,r,k}\left[\bm{\Phi}_{m,\omega,r,k} + \mathbf{I}_{T_{k}}\right]\mathbf{V}_{m,\omega,r,k}\mathbf{S}_{m,\omega,r}\nonumber\\
			&\hskip2.5in\mathbf{V}'_{m,\omega,r,k}\left[\bm{\Phi}_{m,\omega,r,k} + \mathbf{I}_{T_{k}}\right]\mathbf{V}_{m,\omega,r,k}|\mathbf{S}_{m,\omega,r}\bigg]\bigg)\label{help3}
		\end{align}
		We can write the expectation as
		\begin{align}
			&\mathbb{E}\bigg[\mathbf{V}'_{m,\omega,r,k}\left[\bm{\Phi}_{m,\omega,r,k} + \mathbf{I}_{T_{k}}\right]\mathbf{V}_{m,\omega,r,k}\mathbf{S}_{m,\omega,r}\mathbf{V}'_{m,\omega,r,k}\left[\bm{\Phi}_{m,\omega,r,k} + \mathbf{I}_{T_{k}}\right]\mathbf{V}_{m,\omega,r,k}|\mathbf{S}_{m,\omega,r}\bigg]\nonumber\\
			&\hskip0.5in= \mathbb{E}\bigg[\mathbb{E}\bigg(\mathbf{V}'_{m,\omega,r,k}\left[\bm{\Phi}_{m,\omega,r,k} + \mathbf{I}_{T_{k}}\right]\mathbf{V}_{m,\omega,r,k}\mathbf{S}_{m,\omega,r}\mathbf{V}'_{m,\omega,r,k}\nonumber\\
			&\hskip1.5in\left[\bm{\Phi}_{m,\omega,r,k} + \mathbf{I}_{T_{k}}\right]\mathbf{V}_{m,\omega,r,k}|\bm{\Phi}_{m,\omega,r,k},\mathbf{S}_{m,\omega,r}\bigg)|\mathbf{S}_{m,\omega,r}\bigg]
		\end{align}
		and obtain the inner piece from Theorem 2.3.8 (v) of \citet{gupta18}.
		\begin{align}
			&\mathbb{E}\bigg(\mathbf{V}'_{m,\omega,r,k}\left[\bm{\Phi}_{m,\omega,r,k} + \mathbf{I}_{T_{k}}\right]\mathbf{V}_{m,\omega,r,k}\mathbf{S}_{m,\omega,r}\mathbf{V}'_{m,\omega,r,k}\left[\bm{\Phi}_{m,\omega,r,k} + \mathbf{I}_{T_{k}}\right]\mathbf{V}_{m,\omega,r,k}|\bm{\Phi}_{m,\omega,r,k},\mathbf{S}_{m,\omega,r}\bigg) \nonumber\\
			&\hskip0.5in= \text{tr}\left(\mathbf{I}_{T_k}\left[\bm{\Phi}_{m,\omega,r,k} + \mathbf{I}_{T_{k}}\right]\mathbf{I}_{T_k}\left[\bm{\Phi}_{m,\omega,r,k} + \mathbf{I}_{T_{k}}\right]\right)\text{tr}\left(\mathbf{S}_{m,\omega,r}\mathbf{I}_d\right)\mathbf{I}_d\nonumber\\
			&\hskip1in + \text{tr}\left(\left[\bm{\Phi}_{m,\omega,r,k} + \mathbf{I}_{T_{k}}\right]\mathbf{I}_{T_k}\right)\text{tr}\left(\left[\bm{\Phi}_{m,\omega,r,k} + \mathbf{I}_{T_{k}}\right]\mathbf{I}_{T_k}\right)\mathbf{I}_d\mathbf{S}_{m,\omega,r}\mathbf{I}_d\nonumber\\
			&\hskip1in + \text{tr}\left(\left[\bm{\Phi}_{m,\omega,r,k} + \mathbf{I}_{T_{k}}\right]\mathbf{I}_{T_k}\left[\bm{\Phi}_{m,\omega,r,k} + \mathbf{I}_{T_{k}}\right]\mathbf{I}_{T_k}\right)\mathbf{I}_d\mathbf{S}_{m,\omega,r}\mathbf{I}_d\\
			&\hskip0.5in= \text{tr}\left(\left[\bm{\Phi}_{m,\omega,r,k} + \mathbf{I}_{T_{k}}\right]^2\right)\text{tr}\left(\mathbf{S}_{m,\omega,r}\right)\mathbf{I}_d\nonumber\\
			&\hskip1in + \text{tr}^2\left(\left[\bm{\Phi}_{m,\omega,r,k} + \mathbf{I}_{T_{k}}\right]\right)\mathbf{S}_{m,\omega,r}\nonumber\\
			&\hskip1in + \text{tr}\left(\left[\bm{\Phi}_{m,\omega,r,k} + \mathbf{I}_{T_{k}}\right]^2\right)\mathbf{S}_{m,\omega,r}\\
			&\hskip0.5in= \text{tr}\left(\bm{\Phi}_{m,\omega,r,k}^2\right)\text{tr}\left(\mathbf{S}_{m,\omega,r}\right)\mathbf{I}_d + \text{tr}\left(\bm{\Phi}_{m,\omega,r,k}^2\right)\mathbf{S}_{m,\omega,r} +\text{tr}^2\left(\bm{\Phi}_{m,\omega,r,k}\right)\mathbf{S}_{m,\omega,r} \nonumber\\
			&\hskip1in + 2\text{tr}\left(\bm{\Phi}_{m,\omega,r,k}\right)\text{tr}\left(\mathbf{S}_{m,\omega,r}\right)\mathbf{I}_d + 2\text{tr}\left(\bm{\Phi}_{m,\omega,r,k}\right)\mathbf{S}_{m,\omega,r} + 2T_k\text{tr}\left(\bm{\Phi}_{m,\omega,r,k}\right)\mathbf{S}_{m,\omega,r}\nonumber\\
			&\hskip1in + T_k\text{tr}\left(\mathbf{S}_{m,\omega,r}\right)\mathbf{I}_d + T_k\mathbf{S}_{m,\omega,r} + T_k^2\mathbf{S}_{m,\omega,r}\label{quarticexpectation}
		\end{align}
		Using Von Neumann's trace inequality and the upper bound on the largest eigenvalue from Equation \ref{tracebound},
		\begin{align}
			\text{tr}\left(\bm{\Phi}^2_{m,\omega,r,k}\right)&\leq \sum_{i=1}^N \lambda_i\left(\bm{\Phi}_{m,\omega,r,k}\right)^2\\
			&\leq N\lambda_1\left(\bm{\Phi}_{m,\omega,r,k}\right)^2\\
			&\leq \frac{T_k^2N^3G^4\kappa_{\max}^4}{T_{-k}^2\left(\bar{h}^2_{m,\omega,r,-k}\right)^2},
		\end{align}
		and we input the result to Equation \ref{quarticexpectation}.
		\begin{align}
			&\mathbb{E}\bigg(\mathbf{V}'_{m,\omega,r,k}\left[\bm{\Phi}_{m,\omega,r,k} + \mathbf{I}_{T_{k}}\right]\mathbf{V}_{m,\omega,r,k}\mathbf{S}_{m,\omega,r}\mathbf{V}'_{m,\omega,r,k}\left[\bm{\Phi}_{m,\omega,r,k} + \mathbf{I}_{T_{k}}\right]\mathbf{V}_{m,\omega,r,k}|\bm{\Phi}_{m,\omega,r,k},\mathbf{S}_{m,\omega,r}\bigg) \nonumber\\
			&\hskip0.5in\leq \left(\frac{T_k^2N^3G^4\kappa_{\max}^4}{T_{-k}^2\left(\bar{h}^2_{m,\omega,r,-k}\right)^2}\right)\text{tr}\left(\mathbf{S}_{m,\omega,r}\right)\mathbf{I}_d + \left(\frac{T_k^2N^3G^4\kappa_{\max}^4}{T_{-k}^2\left(\bar{h}^2_{m,\omega,r,-k}\right)^2}\right)\mathbf{S}_{m,\omega,r} \nonumber\\
			&\hskip1in+\left(\frac{T_k^2N^4G^4\kappa_{\max}^4}{T_{-k}^2\left(\bar{h}_{m,\omega,r,-k}^2\right)^2}\right)\mathbf{S}_{m,\omega,r} 
			+ 2\left(\frac{T_kN^2G^2\kappa_{\max}^2}{T_{-k}\bar{h}_{m,\omega,r,-k}^2}\right)\text{tr}\left(\mathbf{S}_{m,\omega,r}\right)\mathbf{I}_d \nonumber\\
			&\hskip1in + 2\left(\frac{T_kN^2G^2\kappa_{\max}^2}{T_{-k}\bar{h}_{m,\omega,r,-k}^2}\right)\mathbf{S}_{m,\omega,r} + 2\left(\frac{T_k^2N^2G^2\kappa_{\max}^2}{T_{-k}\bar{h}_{m,\omega,r,-k}^2}\right)\mathbf{S}_{m,\omega,r}\nonumber\\
			&\hskip1in
			+ T_k\text{tr}\left(\mathbf{S}_{m,\omega,r}\right)\mathbf{I}_d + T_k\mathbf{S}_{m,\omega,r} + T_k^2\mathbf{S}_{m,\omega,r}
		\end{align}
		With the expected values from Condition \ref{entriescondition}, we can simplify the form to that shown in Equation \ref{exp1}.
		\begin{align}
			&\mathbb{E}\bigg[\mathbf{V}'_{m,\omega,r,k}\left[\bm{\Phi}_{m,\omega,r,k} + \mathbf{I}_{T_{k}}\right]\mathbf{V}_{m,\omega,r,k}\mathbf{S}_{m,\omega,r}\mathbf{V}'_{m,\omega,r,k}\left[\bm{\Phi}_{m,\omega,r,k} + \mathbf{I}_{T_{k}}\right]\mathbf{V}_{m,\omega,r,k}|\mathbf{S}_{m,\omega,r}\bigg] \nonumber\\
			&\hskip0.5in\leq \left(\frac{T_k^2N^3\varrho^4\kappa_{\max}^4}{T_{-k}^2}\right)\text{tr}\left(\mathbf{S}_{m,\omega,r}\right)\mathbf{I}_d + \left(\frac{T_k^2N^3\varrho^4\kappa_{\max}^4}{T_{-k}^2}\right)\mathbf{S}_{m,\omega,r} \nonumber\\
			&\hskip1in+\left(\frac{T_k^2N^4\varrho^4\kappa_{\max}^4}{T_{-k}^2}\right)\mathbf{S}_{m,\omega,r} 
			+ 2\left(\frac{T_kN^2\xi^2\kappa_{\max}^2}{T_{-k}}\right)\text{tr}\left(\mathbf{S}_{m,\omega,r}\right)\mathbf{I}_d \nonumber\\
			&\hskip1in + 2\left(\frac{T_kN^2\xi^2\kappa_{\max}^2}{T_{-k}}\right)\mathbf{S}_{m,\omega,r} + 2\left(\frac{T_k^2N^2\xi^2\kappa_{\max}^2}{T_{-k}}\right)\mathbf{S}_{m,\omega,r}\nonumber\\
			&\hskip1in
			+ T_k\text{tr}\left(\mathbf{S}_{m,\omega,r}\right)\mathbf{I}_d + T_k\mathbf{S}_{m,\omega,r} + T_k^2\mathbf{S}_{m,\omega,r}\label{exp1}
		\end{align}
		\vskip0in
		We return to Equation \ref{help3}, and note that for any square, positive semidefinite matrix $\mathbf{P}\in\mathbb{R}^{n\times n}$, $\text{tr}\left(\mathbf{P}^2\right)\leq \text{tr}^2\left(\mathbf{P}\right)$.
		\begin{align}
			&\mathbb{E}\left[\text{tr}^2\left(\mathbf{R}'_{m,\omega,r,k}\mathbf{R}_{m,\omega,r,k}\right)|\mathbf{S}_{m,\omega,r}\right]\nonumber\\
			&\hskip0.5in\leq (d^2 - d + 1)\; \text{tr}^2\left(\mathbf{S}_{m,\omega,r}\right)\bigg[3\left(\frac{T_k^2N^4\varrho^4\kappa_{\max}^4}{T_{-k}^2}\right) + 4\left(\frac{T_kN^2\xi^2\kappa_{\max}^2}{T_{-k}}\right)\nonumber\\
			&\hskip1.5in
			+ 2\left(\frac{T_k^2N^2\xi^2\kappa_{\max}^2}{T_{-k}}\right)+ 2T_k + T_k^2\bigg]
		\end{align}
		We can write the following inequality for $\text{Var}\left(T_k^{-1}\text{tr}\left[\mathbf{R}_{m,\omega,r,k}'\mathbf{R}_{m,\omega,r,k}\right]|\mathbf{S}_{m,\omega,r}\right)$.
		\begin{align}
			&\text{Var}\left(T_k^{-1}\text{tr}\left[\mathbf{R}_{m,\omega,r,k}'\mathbf{R}_{m,\omega,r,k}\right]|\mathbf{S}_{m,\omega,r}\right) \nonumber\\
			&\hskip0.5in \leq \text{tr}^2\left(\mathbf{S}_{m,\omega,r}\right)\bigg(\frac{1}{T_{-k}^2}\left[\frac{3(d^2 - d + 1)N^6\varrho^4\kappa_{\max}^4 - \nu^4}{N^2}\right] \nonumber\\
			&\hskip1in + \frac{1}{T_kT_{-k}}\left[4(d^2-d+1)N^2\xi^2\kappa_{\max}^2\right]\nonumber\\
			&\hskip1in + \frac{1}{T_{-k}}\left[\frac{2(d^2 - d + 1)N^3\xi^2\kappa_{\max}^2 - 2\nu^2}{N}\right] \nonumber\\
			&\hskip1in+ (d^2-d) + \frac{2}{T_K}(d^2 - d + 1)\bigg)\label{dform}
		\end{align}
		\vskip0in
		Taking the limit as $T\rightarrow\infty$, we obtain an upper bound for the variance of the estimate given all individual covariance matrices for one test set with each featurization and potential realization of the data. As the number of time points gets large with a fixed $T_k<\infty$, the size of each training set $T_{-k}$ tends to infinity. 
		\begin{align}
			\lim_{T\rightarrow\infty} \text{Var}\left(T_k^{-1}\text{tr}\left[\mathbf{R}_{m,\omega,r,k}'\mathbf{R}_{m,\omega,r,k}\right]|\mathbf{S}_{m,\omega,r}\right) &\leq  \left[(d^2-d) + \frac{2}{T_k}(d^2 - d + 1)\right]\text{tr}^2\left(\mathbf{S}_{m,\omega,r}\right)\label{singlelimit}
		\end{align}
		\vskip0in
		Returning to the total variance in Equation \ref{fullvar1}, with the limit in Equation \ref{singlelimit}, we can apply the Dominated Convergence Theorem to establish the limiting value for $\text{Var}\left(\hat{\vartheta}_m\right)$. We also use the results in Equations \ref{help1} and \ref{help2} to establish the limiting expectation $\lim_{T\rightarrow\infty}\mathbb{E}\left[T_k^{-1}\text{tr}\left(\mathbf{R}_{m,\omega,r,k}'\mathbf{R}_{m,\omega,r,k}\right)|\mathbf{S}_{m,\omega,r}\right]=\text{tr}\left(\mathbf{S}_{m,\omega,r}\right)$.
		\vskip0in
		The first term in Equation \ref{fullvar1} has an inner term of the variance of the variation parameter given the featurization specific matrix $\mathbf{S}_{m,\omega,r}$ and the realization specific matrix $\bm{\Sigma}_{m,\omega}$. The only remaining source of variation arises from labelling the training and test sets $k=1,\ldots,K$. With random assignment, and fully explained temporal dependence as in Condition \ref{modelerrorcondition}, we treat these splits as uncorrelated draws, and the variance can move inside the summation.
		\begin{align}
			&\mathbb{E}\left[\mathbb{E}\left(\text{Var}\left[\frac{1}{\varphi\mathscr{R}K}\sum_{\omega = 1}^\varphi \sum_{r=1}^{\mathscr{R}}\sum_{k=1}^{K}\frac{1}{T_k}\text{tr}\left(\mathbf{R}'_{m,\omega,r,k}\mathbf{R}_{m,\omega,r,k}\right)|\mathbf{S}_{m,\omega,r}\right]|\bm{\Sigma}_{m,\omega}\right)\right]\nonumber\\
			&\hskip0.5in= \mathbb{E}\left[\mathbb{E}\left(\frac{1}{\varphi^2\mathscr{R}^2K^2}\sum_{\omega = 1}^\varphi \sum_{r=1}^{\mathscr{R}}\sum_{k=1}^{K}\text{Var}\left[\frac{1}{T_k}\text{tr}\left(\mathbf{R}'_{m,\omega,r,k}\mathbf{R}_{m,\omega,r,k}\right)|\mathbf{S}_{m,\omega,r}\right]|\bm{\Sigma}_{m,\omega}\right)\right]
		\end{align}
		The inner variance converges pointwise and is dominated by the integrable quantity 
		\begin{align}
			\text{tr}^2\left(\mathbf{S}_{m,\omega,r}\right)\bigg[3(d^2 - d + 1)N^6\varrho^4\kappa_{\max}^4 
			+ 6(d^2-d+1)N^3\xi^2\kappa_{\max}^2 + 3d^2 + 1\bigg],
		\end{align}
		as a simplification of the form shown in Equation \ref{dform}. The expectation of the inner term converges pointwise, where 
		\begin{align}
			&\lim_{T\rightarrow\infty} \mathbb{E}\left[\text{Var}\left(T_k^{-1}\text{tr}\left[\mathbf{R}_{m,\omega,r,k}'\mathbf{R}_{m,\omega,r,k}\right]|\mathbf{S}_{m,\omega,r}\right)|\bm{\Sigma}_{m,\omega}\right] \nonumber\\
			&\hskip1in \leq  \left[(d^2-d) + \frac{2}{T_k}(d^2 - d + 1)\right]\text{tr}^2\left(\bm{\Sigma}_{m,\omega}\right),
		\end{align}
		and is dominated by the similar integrable quantity
		\begin{align}
			\text{tr}^2\left(\bm{\Sigma}_{m,\omega}\right)\bigg[3(d^2 - d + 1)N^6\varrho^4\kappa_{\max}^4 
			+ 6(d^2-d+1)N^3\xi^2\kappa_{\max}^2 + 3d^2 + 1\bigg].
		\end{align}
		We again assume a fixed $T_k<\infty$, and $T\rightarrow\infty$ implies $T_{-k}\rightarrow\infty$. The limit also implies $K\rightarrow\infty$ for a fixed $T_k$. When applying the Dominated Convergence Theorem twice,
		\begin{align}
			&\lim_{T\rightarrow\infty} \mathbb{E}\left[\mathbb{E}\left(\frac{1}{\varphi^2\mathscr{R}^2K^2}\sum_{\omega = 1}^\varphi \sum_{r=1}^{\mathscr{R}}\sum_{k=1}^{K}\text{Var}\left[\frac{1}{T_k}\text{tr}\left(\mathbf{R}'_{m,\omega,r,k}\mathbf{R}_{m,\omega,r,k}\right)|\mathbf{S}_{m,\omega,r}\right]|\bm{\Sigma}_{m,\omega}\right)\right]\nonumber\\
			&= \mathbb{E}\left[\lim_{T\rightarrow\infty} \mathbb{E}\left(\frac{1}{\varphi^2\mathscr{R}^2K^2}\sum_{\omega = 1}^\varphi \sum_{r=1}^{\mathscr{R}}\sum_{k=1}^{K}\text{Var}\left[\frac{1}{T_k}\text{tr}\left(\mathbf{R}'_{m,\omega,r,k}\mathbf{R}_{m,\omega,r,k}\right)|\mathbf{S}_{m,\omega,r}\right]|\bm{\Sigma}_{m,\omega}\right)\right]\\
			&= \mathbb{E}\left[\mathbb{E}\left(\lim_{T_{-k},K\rightarrow\infty}\frac{1}{\varphi^2\mathscr{R}^2K^2}\sum_{\omega = 1}^\varphi \sum_{r=1}^{\mathscr{R}}\sum_{k=1}^{K}\text{Var}\left[\frac{1}{T_k}\text{tr}\left(\mathbf{R}'_{m,\omega,r,k}\mathbf{R}_{m,\omega,r,k}\right)|\mathbf{S}_{m,\omega,r}\right]|\bm{\Sigma}_{m,\omega}\right)\right]\\
			&\leq \mathbb{E}\left[\mathbb{E}\left(\lim_{K\rightarrow\infty}\frac{1}{\varphi^2\mathscr{R}^2K^2}\sum_{\omega = 1}^\varphi \sum_{r=1}^{\mathscr{R}}\sum_{k=1}^{K}\left[(d^2-d) + \frac{2}{T_k}(d^2 - d + 1)\right]\text{tr}^2\left(\mathbf{S}_{m,\omega,r}\right)|\bm{\Sigma}_{m,\omega}\right)\right].
		\end{align}
		The inner piece tends to zero as $K$ gets large, and 
		\begin{align}
			\lim_{T\rightarrow\infty} \mathbb{E}\left[\mathbb{E}\left(\frac{1}{\varphi^2\mathscr{R}^2K^2}\sum_{\omega = 1}^\varphi \sum_{r=1}^{\mathscr{R}}\sum_{k=1}^{K}\text{Var}\left[\frac{1}{T_k}\text{tr}\left(\mathbf{R}'_{m,\omega,r,k}\mathbf{R}_{m,\omega,r,k}\right)|\mathbf{S}_{m,\omega,r}\right]|\bm{\Sigma}_{m,\omega}\right)\right]=0.
		\end{align}
		\vskip0in
		For the second term in Equation \ref{fullvar1}, the inner expectation converges pointwise to $\text{tr}\left(\mathbf{S}_{m,\omega,r}\right)$ and is dominated by the integrable term $\left[\left(N\xi\kappa_{\max}\right)^2 + 1\right]\text{tr}\left(\mathbf{S}_{m,\omega,r}\right)$. The middle expectation converges pointwise to $\text{tr}\left(\bm{\Sigma}_{m,\omega}\right)$ and is dominated by the similar term $\left[\left(N\xi\kappa_{\max}\right)^2 + 1\right]\text{tr}\left(\bm{\Sigma}_{m,\omega}\right)$. We can take the limit inside the function $h(x)=x^2$ as it is continuous on the full domain $x\in\mathbb{R}$ (\textit{i.e.}, inside the variance term).
		\begin{align}
			&\lim_{T\rightarrow\infty}\mathbb{E}\left[\text{Var}\left(\mathbb{E}\left[\frac{1}{\varphi\mathscr{R}K}\sum_{\omega = 1}^\varphi \sum_{r=1}^{\mathscr{R}}\sum_{k=1}^{K}\frac{1}{T_k}\text{tr}\left(\mathbf{R}'_{m,\omega,r,k}\mathbf{R}_{m,\omega,r,k}\right)|\mathbf{S}_{m,\omega,r}\right]|\bm{\Sigma}_{m,\omega}\right)\right]\nonumber\\
			&=\mathbb{E}\left[\lim_{T\rightarrow\infty}\text{Var}\left(\mathbb{E}\left[\frac{1}{\varphi\mathscr{R}K}\sum_{\omega = 1}^\varphi \sum_{r=1}^{\mathscr{R}}\sum_{k=1}^{K}\frac{1}{T_k}\text{tr}\left(\mathbf{R}'_{m,\omega,r,k}\mathbf{R}_{m,\omega,r,k}\right)|\mathbf{S}_{m,\omega,r}\right]|\bm{\Sigma}_{m,\omega}\right)\right]\\
			&=\mathbb{E}\left[\text{Var}\left(\lim_{T\rightarrow\infty}\frac{1}{\varphi\mathscr{R}K}\sum_{\omega = 1}^\varphi \sum_{r=1}^{\mathscr{R}}\sum_{k=1}^{K}\mathbb{E}\left[\frac{1}{T_k}\text{tr}\left(\mathbf{R}'_{m,\omega,r,k}\mathbf{R}_{m,\omega,r,k}\right)|\mathbf{S}_{m,\omega,r}\right]|\bm{\Sigma}_{m,\omega}\right)\right]\\
			&=\mathbb{E}\left[\text{Var}\left(\frac{1}{\varphi\mathscr{R}}\sum_{\omega = 1}^\varphi \sum_{r=1}^{\mathscr{R}}\text{tr}\left(\mathbf{S}_{m,\omega,r}\right)|\bm{\Sigma}_{m,\omega}\right)\right]
		\end{align}
		Given the data realized covariance matrix $\bm{\Sigma}_{m,\omega}$, each featurization $\mathbf{S}_{m,\omega,r}$ is an uncorrelated draw, and we can distribute the variance term like above. From Condition \ref{distributioncondition}, we extract the limiting value.
		\begin{align}
			&\lim_{T\rightarrow\infty}\mathbb{E}\left[\text{Var}\left(\mathbb{E}\left[\frac{1}{\varphi\mathscr{R}K}\sum_{\omega = 1}^\varphi \sum_{r=1}^{\mathscr{R}}\sum_{k=1}^{K}\frac{1}{T_k}\text{tr}\left(\mathbf{R}'_{m,\omega,r,k}\mathbf{R}_{m,\omega,r,k}\right)|\mathbf{S}_{m,\omega,r}\right]|\bm{\Sigma}_{m,\omega}\right)\right]\nonumber\\
			&=\mathbb{E}\left[\frac{1}{\varphi^2\mathscr{R}^2}\sum_{\omega = 1}^\varphi \sum_{r=1}^{\mathscr{R}}\text{Var}\left(\text{tr}\left(\mathbf{S}_{m,\omega,r}\right)|\bm{\Sigma}_{m,\omega}\right)\right]\\
			&=\mathbb{E}\left[\frac{1}{\varphi^2\mathscr{R}^2}\sum_{\omega = 1}^\varphi \sum_{r=1}^{\mathscr{R}}\tau_r^2\right] = \frac{\tau_r^2}{\varphi\mathscr{R}}
		\end{align}
		\vskip0in
		For the third term in Equation \ref{fullvar1}, we note the same conditions as before, and proceed like above.
		\begin{align}
			&\lim_{T\rightarrow\infty} \text{Var}\left(\mathbb{E}\left[\mathbb{E}\left(\frac{1}{\varphi\mathscr{R}K}\sum_{\omega = 1}^\varphi \sum_{r=1}^{\mathscr{R}}\sum_{k=1}^{K}\frac{1}{T_k}\text{tr}\left[\mathbf{R}'_{m,\omega,r,k}\mathbf{R}_{m,\omega,r,k}\right]|\mathbf{S}_{m,\omega,r}\right)|\bm{\Sigma}_{m,\omega}\right]\right)\nonumber\\
			&= \text{Var}\left(\lim_{T\rightarrow\infty}\mathbb{E}\left[\mathbb{E}\left(\frac{1}{\varphi\mathscr{R}K}\sum_{\omega = 1}^\varphi \sum_{r=1}^{\mathscr{R}}\sum_{k=1}^{K}\frac{1}{T_k}\text{tr}\left[\mathbf{R}'_{m,\omega,r,k}\mathbf{R}_{m,\omega,r,k}\right]|\mathbf{S}_{m,\omega,r}\right)|\bm{\Sigma}_{m,\omega}\right]\right) \\
			&= \text{Var}\left(\mathbb{E}\left[\lim_{T\rightarrow\infty}\frac{1}{\varphi\mathscr{R}K}\sum_{\omega = 1}^\varphi \sum_{r=1}^{\mathscr{R}}\sum_{k=1}^{K}\mathbb{E}\left(\frac{1}{T_k}\text{tr}\left[\mathbf{R}'_{m,\omega,r,k}\mathbf{R}_{m,\omega,r,k}\right]|\mathbf{S}_{m,\omega,r}\right)|\bm{\Sigma}_{m,\omega}\right]\right) \\
			&= \text{Var}\left(\mathbb{E}\left[\frac{1}{\varphi\mathscr{R}}\sum_{\omega = 1}^\varphi \sum_{r=1}^{\mathscr{R}}\text{tr}\left(\mathbf{S}_{m,\omega,r}\right)|\bm{\Sigma}_{m,\omega}\right]\right) \\
			&= \text{Var}\left(\frac{1}{\varphi}\sum_{\omega = 1}^\varphi \text{tr}\left(\bm{\Sigma}_{m,\omega}\right)\right)
		\end{align}
		Each realization of the data is independent, and we can take the variance inside the summation.
		\begin{align}
			&\lim_{T\rightarrow\infty} \text{Var}\left(\mathbb{E}\left[\mathbb{E}\left(\frac{1}{\varphi\mathscr{R}K}\sum_{\omega = 1}^\varphi \sum_{r=1}^{\mathscr{R}}\sum_{k=1}^{K}\frac{1}{T_k}\text{tr}\left[\mathbf{R}'_{m,\omega,r,k}\mathbf{R}_{m,\omega,r,k}\right]|\mathbf{S}_{m,\omega,r}\right)|\bm{\Sigma}_{m,\omega}\right]\right)\nonumber\\
			&= \frac{1}{\varphi^2}\sum_{\omega = 1}^\varphi \text{Var}\left(\text{tr}\left[\bm{\Sigma}_{m,\omega}\right]\right) = \frac{\tau_\omega^2}{\varphi}
		\end{align}
		\vskip0in
		Thus, we can write the limiting variance as,
		\begin{align}
			\lim_{T\rightarrow\infty}  \text{Var}\left(\hat{\vartheta}_m\right) &= \frac{\tau_\omega^2}{\varphi} + \frac{\tau_r^2}{\varphi\mathscr{R}}.
		\end{align}
	\end{proof}
	\par
	The proof of Theorem \ref{theorem1} follows directly from the results of Lemmas \ref{lemma1} and \ref{lemma2}.
	\begin{proof}[Proof of Theorem \ref{theorem1}]
		From Lemma \ref{lemma1}, we establish that $\lim_{T\rightarrow\infty} \mathbb{E}\left[\hat{\vartheta}_m\right] = \vartheta_m$. Turning to the result of Lemma \ref{lemma2},
		\begin{align}
			\lim_{T\rightarrow\infty}  \text{Var}\left(\hat{\vartheta}_m\right) &= \frac{\tau_\omega^2}{\varphi} + \frac{\tau_r^2}{\varphi\mathscr{R}}\\
			\lim_{\mathscr{R}\rightarrow\infty} \lim_{T\rightarrow\infty}  \text{Var}\left(\hat{\vartheta}_m\right) &= \lim_{\mathscr{R}\rightarrow\infty}\frac{\tau_\omega^2}{\varphi} + \frac{\tau_r^2}{\varphi\mathscr{R}} = \frac{\tau_\omega^2}{\varphi}.
		\end{align}
		When we observe the data without error ($\tau_\omega^2 = 0$), the limiting variance is zero, and the estimate is consistent.
		\vskip0in
		In the alternative scenario, we no longer require error free data observation. 
		\begin{align}
			\lim_{\varphi\rightarrow\infty} \lim_{T\rightarrow\infty}  \text{Var}\left(\hat{\vartheta}_m\right) &= \lim_{\varphi\rightarrow\infty}\frac{\tau_\omega^2}{\varphi} + \frac{\tau_r^2}{\varphi\mathscr{R}} = 0.
		\end{align}
	\end{proof}
	\par
	The proof of Theorem \ref{theorem2} is a direct result of Condition \ref{distributioncondition}.
	\begin{proof}[Proof of Theorem \ref{theorem2}]
		From Condition \ref{distributioncondition}, we see that $\text{tr}\left(\bm{\Sigma}_{m,\omega}\right)\overset{\textit{i.i.d.}}{\sim}\left(\vartheta_m, \tau_\omega^2\right)$. For an individual realization, we have shown in the proofs of Lemma \ref{lemma1} and \ref{lemma2} that 
		\begin{align}
			\frac{1}{\mathscr{R}K}\sum_{r=1}^{\mathscr{R}}\sum_{k=1}^KT_k^{-1}\text{tr}\left(\mathbf{R}_{m,\omega,r,k}'\mathbf{R}_{m,\omega,r,k}\right) | \bm{\Sigma}_{m,\omega} \xrightarrow{P} \text{tr}\left(\bm{\Sigma}_{m,\omega}\right)
		\end{align}
		as the number of observations $T\rightarrow\infty$. The variation parameter is the average of the individual realization-specific covariance matrices, and the Central Limit Theorem arises from the application of Slutsky's Theorem.
	\end{proof}
	\par
	The proof of Theorem \ref{theorem3} begins with the empirical distribution function, and employs the result of the Glivenko-Cantelli to show convergence in distribution to a $\text{Uniform}(0,1)$ random variable.
	\begin{proof}[Proof of Theorem \ref{theorem3}]
		Under the null hypothesis, $\vartheta_1 = \cdots = \vartheta_{T!} = \vartheta$. 
		We define the quantile estimate $\hat{Q}_{M}$, shown in Equation \ref{quantileestimate}, as an evaluation of the empirical distribution function $\hat{\mathcal{H}}_{M}(s)$ at $\hat{\vartheta}_{1}$. As the number of permutations $M$ approaches the total number of possible permutations $T!$, the empirical distribution $\hat{\mathcal{H}}_{M}(s)$ trivially converges to $\hat{\mathcal{H}}(s)$ defined in Equation \ref{fullpermutationdistribution}. 
		\vskip0in
		From the results of Lemmas \ref{lemma1} and \ref{lemma2}, we know that as $T\rightarrow\infty$, $\hat{\vartheta}_{1}\xrightarrow{D}\mathcal{F}$, where the continuous distribution $\mathcal{F}$ has expectation $\vartheta$ and variance $\varphi^{-1}\tau_\omega^2 + \left(\varphi\mathscr{R}\right)^{-1}\tau_r^2$. Under the null, each realization $\hat{\vartheta}_m$ is independently drawn from $\mathcal{F}$. From the Glivenko-Cantelli Theorem as $T\rightarrow\infty$,
		\begin{align}
			\sup_{s} \left|\hat{\mathcal{H}}(s) - \mathcal{F}(s)\right| \xrightarrow{a.s.} 0.
		\end{align}
		\vskip0in
		We define $\mathcal{U}\sim\text{Uniform}(0,1)$ and note that if $s\sim\mathcal{G}$, we can write $s\sim\mathcal{G}^{-1}(\mathcal{U})$. We state that $\mathcal{F}\left(\lim_{T\rightarrow\infty}\hat{\vartheta}_1\right)$ follows the same distribution as $\mathcal{F}\left(\mathcal{F}^{-1}\left(\mathcal{U}\right)\right)\sim\mathcal{U}$.
		\vskip0in
		Combining these results and Theorem \ref{theorem1} with the continuous mapping theorem in Equation \ref{cmt} and Slutsky's Theorem,
		\begin{align}
			\lim_{M\rightarrow T!}\hat{\mathcal{H}}_{M}\left(\hat{\vartheta}_{1}\right) &\rightarrow \hat{\mathcal{H}}\left(\hat{\vartheta}_{1}\right)\\
			\lim_{T\rightarrow\infty}\hat{\mathcal{H}}\left(\hat{\vartheta}_{1}\right) &\rightarrow \mathcal{F}\left(\hat{\vartheta}_{1}\right)\\
			\lim_{T\rightarrow\infty}\mathcal{F}\left(\hat{\vartheta}_{1}\right) &\xrightarrow{D} \mathcal{F}\left(\lim_{T\rightarrow\infty}\hat{\vartheta}_{1}\right)\label{cmt}\\
			\mathcal{F}\left(\lim_{T\rightarrow\infty}\hat{\vartheta}_{1}\right) &\sim \mathcal{U},
		\end{align}
		so under Conditions \ref{samplingcondition} - \ref{modelerrorcondition},
		\begin{align}
			\lim_{T\rightarrow\infty} \lim_{M\rightarrow T!} \hat{Q}_{M}  = \lim_{T\rightarrow\infty} \lim_{M\rightarrow T!} \hat{\mathcal{H}}_{M}\left(\hat{\vartheta}_{1}\right) \xrightarrow{D} \mathcal{U}.
		\end{align}
	\end{proof}
	\par
	The proof of Theorem \ref{theorem4} follows from the result of Theorem \ref{theorem1}.
	\begin{proof}[Proof of Theorem \ref{theorem4}]
		Under the alternative, we have $\vartheta_1<\vartheta_i$ for all possible permutations $i=1,\ldots,T!$. We observe $M$ permutations of the total possible. Define $\delta_M$ as the minimum difference between the underlying parameter $\vartheta_1$ and another permutation in $m=1,\ldots,M$, and $\eta_M$ as the maximum estimation error over all permutations $m=1,\ldots,M$.
		\begin{align}
			\delta_M &= \min_{1 < m \leq M} \left|\vartheta_m-\vartheta_1\right|\geq \delta \hskip0.1in\text{where }\delta = \lim_{M\rightarrow T!}\delta_M\\
			\eta_M &= \max_{1 \leq m \leq M} \left|\hat{\vartheta}_{m}-\vartheta_m\right|\leq \eta\hskip0.1in\text{where }\eta = \lim_{M\rightarrow T!}\eta_M
		\end{align}
		If $2\eta_M < \delta$, then the quantile estimate $\hat{Q}_M$ will return the true value in the limit. Writing the minimum using the definition in Equation \ref{quantileestimate},
		\begin{align}
			\min_{2\eta_M < \delta} \hat{Q}_M &= \min_{2\eta_M < \delta} \frac{1}{M}\sum_{m=1}^{M} \mathbf{1}\left\{\hat{\vartheta}_{m} \leq \hat{\vartheta}_{1}\right\}\\
			&= \min_{2\eta_M < \delta} \frac{1}{M}\sum_{m=1}^{M} \mathbf{1}\left\{\hat{\vartheta}_{m} - \hat{\vartheta}_{1} \leq 0\right\}\\
			&= \min_{2\eta_M < \delta} \frac{1}{M}\sum_{m=1}^{M} \mathbf{1}\left\{(\hat{\vartheta}_{m} - \vartheta_m) + (\vartheta_1 - \hat{\vartheta}_{1}) + (\vartheta_m - \vartheta_1) \leq 0\right\}\label{minline1}.
		\end{align}
		The quantity in Equation \ref{minline1} will reach its minimum when the estimation errors are maximized, leading to a larger value inside the indicator and fewer pairs that meet the criteria.
		\begin{align}
			\min_{2\eta_M < \delta} \hat{Q}_M &\geq \min_{2\eta_M < \delta}\frac{1}{M}\sum_{m=1}^{M} \mathbf{1}\left\{2\eta_M + (\vartheta_m - \vartheta_1) \leq 0\right\}\\
			&\geq \min_{2\eta_M < \delta}\frac{1}{M}\sum_{m=1}^{M} \mathbf{1}\left\{2\eta_M + \vartheta_m \leq \vartheta_1\right\}\label{minline2}
		\end{align}
		For the maximum, 
		\begin{align}
			\max_{2\eta_M < \delta} \hat{Q}_M
			&= \max_{2\eta_M < \delta} \frac{1}{M}\sum_{m=1}^{M} \mathbf{1}\left\{(\hat{\vartheta}_{m} - \vartheta_m) + (\vartheta_1 - \hat{\vartheta}_{1}) + (\vartheta_m - \vartheta_1) \leq 0\right\}\label{maxline1}\\
			&\leq \max_{2\eta_M < \delta}\frac{1}{M}\sum_{m=1}^{M} \mathbf{1}\left\{-2\eta_M + (\vartheta_m - \vartheta_1) \leq 0\right\}\\
			&\leq \max_{2\eta_M < \delta}\frac{1}{M}\sum_{m=1}^{M} \mathbf{1}\left\{\vartheta_m \leq \vartheta_1+2\eta_M\right\}\label{maxline2}.
		\end{align}
		To show consistent behavior with the original ordering, if $2\eta_M < \delta$ and $\vartheta_m\leq\vartheta_1$,
		\begin{align}
			2\eta_M + \vartheta_m  \leq 2\eta_M + \vartheta_1 - \delta &\leq \vartheta_1\\ 
			\hskip0.1in\text{and} \hskip0.1in\vartheta_m \leq \vartheta_1-\delta + 2\eta_M&\leq \vartheta_1 + 2\eta_M,
		\end{align}
		or if $2\eta_M<\delta$ and $\vartheta_m>\vartheta_1$,
		\begin{align}
			2\eta_M + \vartheta_m > 2\eta_M + \vartheta_1 + \delta &> \vartheta_1\\  
			\hskip0.1in\text{and} \hskip0.1in\vartheta_m > \vartheta_1 + \delta &> \vartheta_1 + 2\eta_M.
		\end{align}
		Thus, the ordering is preserved. 
		\begin{align}
			\lim_{M\rightarrow T!} \min_{2\eta_M < \delta} \hat{Q}_M &\geq \lim_{M\rightarrow T!} \frac{1}{M}\sum_{m=1}^{M} \mathbf{1}\left\{\vartheta_m \leq \vartheta_1\right\}\geq Q\\
			\lim_{M\rightarrow T!} \max_{2\eta_M < \delta} \hat{Q}_M &\leq \lim_{M\rightarrow T!} \frac{1}{M}\sum_{m=1}^{M} \mathbf{1}\left\{\vartheta_m \leq \vartheta_1\right\}\leq Q\\
			&\implies \lim_{M\rightarrow T!} \hat{Q}_M = Q\hskip0.1in\text{when $2\eta_M<\delta$}
		\end{align}
		We now need to show the probability of $2\eta_M<\delta$ goes to one in the limit. Pick any $\delta>0$.
		\begin{align}
			\mathbb{P}\left(\eta_M <\frac{\delta}{2}\right) &= \mathbb{P}\left(\max_{1 \leq m \leq M} \left|\hat{\vartheta}_{m}-\vartheta_m\right|<\frac{\delta}{2}\right)
		\end{align}
		From the result of Theorem \ref{theorem1}, when $\tau_\omega^2=0$,
		\begin{align}
			\lim_{\mathscr{R}\rightarrow\infty} \lim_{T\rightarrow\infty} \mathbb{P}\left(\left|\hat{\vartheta}_{m}-\vartheta_m\right|<\frac{\delta}{2}\right) &= 1.
		\end{align}
		Similarly, 
		\begin{align}
			\lim_{\varphi\rightarrow\infty} \lim_{T\rightarrow\infty} \mathbb{P}\left(\left|\hat{\vartheta}_{m}-\vartheta_m\right|<\frac{\delta}{2}\right) &= 1.
		\end{align}
		Returning to the maximum,
		\begin{align}
			\mathbb{P}\left(\max_{1 \leq m \leq M} \left|\hat{\vartheta}_{m}-\vartheta_m\right|<\frac{\delta}{2}\right) &= \mathbb{P}\left(\left|\hat{\vartheta}_{1}-\vartheta_1\right| < \frac{\delta}{2}, \ldots,\left|\hat{\vartheta}_{M}-\vartheta_{M}\right| < \frac{\delta}{2}\right)\\
			&\geq 1 - \sum_{m=1}^M\mathbb{P}\left(\left|\hat{\vartheta}_{m}-\vartheta_m\right| \geq \frac{\delta}{2}\right),
		\end{align}
		and taking the limit under the two scenarios,
		\begin{align}
			\lim_{\mathscr{R}\rightarrow\infty} \lim_{T\rightarrow\infty} \mathbb{P}\left(\max_{1 \leq m \leq M} \left|\hat{\vartheta}_{m}-\vartheta_m\right|<\frac{\delta}{2}\right) &\geq 1 - \sum_{m=1}^M\lim_{\mathscr{R}\rightarrow\infty} \lim_{T\rightarrow\infty}\mathbb{P}\left(\left|\hat{\vartheta}_{m}-\vartheta_m\right| \geq \frac{\delta}{2}\right) = 1,
		\end{align}
		when $\tau_\omega^2=0$, and
		\begin{align}
			\lim_{\varphi\rightarrow\infty} \lim_{T\rightarrow\infty} \mathbb{P}\left(\max_{1 \leq m \leq M} \left|\hat{\vartheta}_{m}-\vartheta_m\right|<\frac{\delta}{2}\right) &\geq 1 - \sum_{m=1}^M\lim_{\varphi\rightarrow\infty} \lim_{T\rightarrow\infty}\mathbb{P}\left(\left|\hat{\vartheta}_{m}-\vartheta_m\right| \geq \frac{\delta}{2}\right) = 1.
		\end{align}
		This implies $\hat{Q}_M \xrightarrow{P} Q$ in both cases presented in the statement of Theorem \ref{theorem4}.  
	\end{proof}
	\par
	The proof of Theorem \ref{theorem5} follows a step by step derivation of the distribution from the known starting point of Condition \ref{modelerrorcondition}.
	\begin{proof}[Proof of Theorem \ref{theorem5}]
		We begin with the distribution of the model residuals that is known from Equation \ref{modelerrorconditionequation} and Condition \ref{modelerrorcondition}, $\mathbf{U}_{m,\omega,r} \sim \mathcal{MN}_{T\times d}\left(\mathbf{0},\mathbf{I}_{T},\mathbf{S}_{m,\omega,r}\right)$. The predicted residuals can be expressed as a function of the model residuals, like in Equation \ref{testresiduals}, and their distribution follows from that of $\mathbf{U}_{m,\omega,r}$ as in Equation \ref{Rdistribution}. Like in the proof of Lemma \ref{lemma1}, we define 
		\begin{align}
			\bm{\Phi}_{m,\omega,r,k} = \mathbf{H}_{m,\omega,r,k}\left[\mathbf{H}_{m,\omega,r,-k}'\mathbf{H}_{m,\omega,r,-k}\right]^{-1}\mathbf{H}_{m,\omega,r,k}',
		\end{align}
		and further define 
		\begin{align}
			\mathbf{V}_{m,\omega,r,k}\sim\mathcal{MN}_{T_{k}\times d}(\mathbf{0}, \mathbf{I}_{T_{k}}, \mathbf{I}_d)
		\end{align}
		such that we can write $\mathbf{R}_{m,\omega,r,k} = \left(\bm{\Phi}_{m,\omega,r,k} + \mathbf{I}_{T_{k}}\right)^{1/2}\mathbf{V}_{m,\omega,r,k}\mathbf{S}_{m,\omega,r}^{1/2}$.
		The distribution has an equivalent vectorized form, with covariance matrix the Kronecker product between the row-wise and column-wise variation.
		\begin{align}
			\mathbf{r}_{m,\omega,r,k} = \text{vec}\left(\mathbf{R}_{m,\omega,r,k}\right) &\sim\mathcal{N}_{T_{K}d}\left(\mathbf{0}, \left[\bm{\Phi}_{m,\omega,r,k} + \mathbf{I}_{T_k}\right] \otimes \mathbf{S}_{m,\omega,r}\right)\label{vecRdistribution}
		\end{align}
		For simplicity, define $\bm{\Delta}_{m,\omega,r,k} = \left[\bm{\Phi}_{m,\omega,r,k} + \mathbf{I}_{T_k}\right] \otimes \mathbf{S}_{m,\omega,r}$ and write the equivalent form $\mathbf{r}_{m,\omega,r,k} = \bm{\Delta}_{m,\omega,r,k}^{1/2}\mathbf{v}_{m,\omega,r,k}$, where $\mathbf{v}_{m,\omega,r,k}\sim\mathcal{N}_{T_kd}\left(\mathbf{0},\mathbf{I}\right)$. The covaraince matrix $\bm{\Delta}_{m,\omega,r,k} = \left(\delta_{m,\omega,r,k,ij}\right)$ is symmetric and each element in the vector $\mathbf{v}_{m,\omega,r,k} = \left(v_{m,\omega,r,k,i}\right)\sim \mathcal{N}(0,1)$.
		\begin{align}
			\mathbf{r}_{m,\omega,r,k}'\mathbf{r}_{m,\omega,r,k} &= \mathbf{v}_{m,\omega,r,k}'\bm{\Delta}_{m,\omega,r,k}\mathbf{v}_{m,\omega,r,k}\\
			&= \sum_{i=1}^{T_kd} \bigg( \delta_{m,\omega,r,k,ii}v_{m,\omega,r,k,i}^2 + \sum_{\substack{j=1\\ j\neq i}}^{T_kd} \delta_{m,\omega,r,k,ij}v_{m,\omega,r,k,i}v_{m,\omega,r,k,j}\bigg)\\
			&= \sum_{i=1}^{T_kd} \bigg( \delta_{m,\omega,r,k,ii}v_{m,\omega,r,k,i}^2 + 2\sum_{j=1}^{i-1} \delta_{m,\omega,r,k,ij}v_{m,\omega,r,k,i}v_{m,\omega,r,k,j}\bigg)\\
			&= \sum_{i=1}^{T_kd} \bigg( \delta_{m,\omega,r,k,ii}v_{m,\omega,r,k,i}^2 + \frac{1}{2}\sum_{j=1}^{i-1} \delta_{m,\omega,r,k,ij}\left(v_{m,\omega,r,k,i}+v_{m,\omega,r,k,j}\right)^2 \nonumber\\
			&\hskip1in - \frac{1}{2}\sum_{j=1}^{i-1}\delta_{m,\omega,r,k,ij}\left(v_{m,\omega,r,k,i}-v_{m,\omega,r,k,j}\right)^2 \bigg)
		\end{align}
		\vskip0in
		We note that 
		\begin{align}
			v_{m,\omega,r,k,i}^2 &\sim \chi_1^2\\
			\frac{1}{2}(v_{m,\omega,r,k,i}+v_{m,\omega,r,k,j})^2 &\sim \chi^2_1\\
			\frac{1}{2}(v_{m,\omega,r,k,i}-v_{m,\omega,r,k,j})^2 &\sim \chi^2_1,
		\end{align}
		and $\delta_{m,\omega,r,k,ij}$ depend on the matrices $\bm{\Phi}_{m,\omega,r,k} = (\phi_{m,\omega,r,k,ij})$ and $\mathbf{S}_{m,\omega,r}=(s_{m,\omega,r,ij})$. Let $i'=\lceil i/d\rceil$,  $j'=\lceil j/d\rceil$, $i^*=i \bmod d$, and $j^*=j \bmod d$. We define the following random variables for all $\omega\in\Omega_\text{obs}$, $r=1,\ldots,\mathscr{R}$, $k=1,\ldots,K$, $i=1,\ldots,T_kd$ and $j<i$.
		\begin{align}
			X_{m,\omega,r,k,i},Y_{m,\omega,r,k,ij}, Z_{m,\omega,r,k,ij} &\sim \chi_1^2
		\end{align}
		We can write each $\delta_{m,\omega,r,k,ij} = \left(\phi_{m,\omega,r,k,i'j'} + \mathbf{1}\left\{i' = j'\right\}\right)s_{m,\omega,r,k,i^*j^*}$ and input the definitions.
		\begin{align}
			\mathbf{r}_{m,\omega,r,k}'\mathbf{r}_{m,\omega,r,k} &= \sum_{i=1}^{T_kd} \bigg[\left(\phi_{m,\omega,r,k,i'i'}+1\right)s_{m,\omega,r,k,i^*i^*}X_{m,\omega,r,k,i} \nonumber\\
			&\hskip0.5in + \sum_{j=1}^{d\lfloor (i-1) / d \rfloor} \phi_{m,\omega,r,k,i'j'}s_{m,\omega,r,k,i^*j^*}\left(Y_{m,\omega,r,k,ij} - Z_{m,\omega,r,k,ij}\right) \nonumber\\
			&\hskip0.5in + \sum_{j=d\lfloor (i-1) / d\rfloor + 1}^{i-1}\left(\phi_{m,\omega,r,k,i'j'} + 1\right)s_{m,\omega,r,k,i^*j^*}\left(Y_{m,\omega,r,j,ij} - Z_{m,\omega,r,k,ij}\right) \bigg]
		\end{align}
		To obtain the distribution for the variation parameter, we aggregate this generalized chi-square distribution over the test sets $k=1,\ldots,K$, featurizations $r=1,\ldots,\mathscr{R}$, and potential realizations $\omega\in\Omega_\text{obs}$.
	\end{proof}

\end{document}